\documentclass[11pt,a4paper]{article}
\pdfoutput=1
\usepackage{jheppub}
\usepackage[T1]{fontenc}
\usepackage{lmodern}
\usepackage{amsmath,amssymb,mathtools,amsthm}
\usepackage{bm}
\usepackage{tikz}
\usepackage{booktabs}

\newcommand{\cC}{\mathcal C}
\newcommand{\cD}{\mathcal D}
\newcommand{\Li}{\operatorname{Li}}
\newcommand{\WZW}{\mathrm{WZW}}
\newcommand{\ii}{\mathrm i}
\newcommand{\Spin}{\operatorname{Spin}}
\newcommand{\dd}{\,\mathrm d}

\newtheorem{theorem}{Theorem}
\newtheorem{proposition}[theorem]{Proposition}
\newtheorem{lemma}[theorem]{Lemma}
\newtheorem{corollary}[theorem]{Corollary}

\title{Maximal Total Quantum Dimension at Bounded Rank in WZW Modular
Tensor Categories}

\author[a]{Ce Shen}
\affiliation[a]{Beijing Institute of Mathematical Sciences and Applications,
Beijing, China}
\emailAdd{shence@bimsa.cn}

\abstract{
We determine the maximal torus topological entanglement entropy (TEE) at bounded torus ground-state degeneracy within simply connected untwisted Wess--Zumino--Witten theories.  For a fixed modular tensor category $\mathcal C$, maximization over normalized torus ground states gives $\max_\psi\Gamma_{T^2}(\psi)=2\log{\mathcal D(\mathcal C)}$, attained in particular by the vacuum-flux state.  The remaining problem is therefore to maximize total $\mathcal D$ at bounded categorical rank.  For a simple WZW category $\mathcal C(\mathfrak g,k)$, write $r(\mathfrak g,k)$ and $\mathcal D(\mathfrak g,k)$ for its categorical rank and total quantum dimension, and define $F_{\WZW}(R)=\sup_{r(\mathfrak g,k)\leq R}2\log\mathcal D(\mathfrak g,k)$. We prove the sharp asymptotic law $\lim_{R\to\infty}\frac{F_{\WZW}(R)}{(\log_2R)^2} =\frac{7\zeta(3)}{4\pi^2}.$ The balanced symplectic sequence $Sp(2n)_n$ attains the coefficient. The constant arises from the odd Fourier modes of the type-$C$ root product at equal rank and level.  A sharp entropy--spectral inequality, rank-level duality, and fixed-rank estimates give the global upper bound.  A separate corollary extends the same leading law to semisimple WZW categories,equivalently finite Deligne products of simple factors.  The unrestricted TQFT envelope remains open; the theorem is sharp within the stated WZW class and supplies a constructive lower bound for the general problem.}

\keywords{{\small Chern--Simons Theories, Topological Entanglement Entropy,
Topological Field Theories}}

\begin{document}

\maketitle

\section{Introduction}

% \paragraph{Physical question.}
In $(2+1)$ dimensions, topological order is encoded by long-range
entanglement rather than a local order parameter
\cite{Wen1990,ChenGuWen,WenBosonic,WenZoo}.  Its universal data determine
fractional and non-Abelian quasiparticles, topology-dependent ground-state
degeneracy (GSD), and protected edge structure
\cite{WenNiu,Wen1995,NayakEtAl}.  Anyon fusion and braiding describe the bulk,
chiral operator algebras describe quantum Hall edges, and modular
transformations act on the ground-state space
\cite{WenWu,WenWuHatsugai,Witten}.

The torus GSD and topological entanglement entropy probe complementary aspects
of this order.  For a phase described by a unitary modular tensor category
$\cC$, the torus GSD is the categorical rank
$r(\cC)\equiv\operatorname{rk}(\cC)
=\lvert\operatorname{Irr}(\cC)\rvert$, the number of isomorphism classes of
simple superselection sectors.  The intrinsic total quantum dimension
$\cD(\cC)$ instead weights those sectors by their quantum dimensions and
measures their collective fusion complexity.
For a disk in the vacuum sector, the topological entanglement
entropy in the positive-magnitude convention adopted below
\cite{KitaevPreskill,LevinWen} is computed to be $\log\cD$, the logarithm of the total quantum dimension, which we will explain in Section~\ref{sec:setup}.  Noncontractible
cuts distinguish flux sectors and can reveal further modular information
\cite{DongEtAl,ZhangEtAl,LuoEtAl}.

For the torus bipartition studied below, the cut has two entangling circles,
and its TEE $\Gamma_{T^2}(\psi)$ depends on the normalized ground state
$\psi\in\mathcal H(T^2)$.  Here $\mathcal H(T^2)$ is the Hilbert space
assigned to the torus.  The extremal problem therefore has two
stages.  First, for a fixed topological phase, which torus ground state
maximizes $\Gamma_{T^2}$?  Second, after optimizing over the state, how large
can this TEE be when the torus GSD is at most $R$?

Section~\ref{sec:setup} performs the first optimization.  It proves
\(
%  \max_{\substack{\psi\in\mathcal H(T^2)\\\lVert\psi\rVert=1}}
\max_{\substack{\psi\in\mathcal H(T^2)}}
 \Gamma_{T^2}(\psi)=2\log\cD.
\)
The maximizers are exactly the definite Abelian-flux states.  Since the
vacuum has quantum dimension one, the vacuum-flux state is always a maximizer
and provides the canonical representative used later.  Only after this
state-space reduction does the remaining problem become an optimization of
$2\log\cD$ over topological phases.
% We write $q_R=\log_2R$ only as a
% convenient logarithmic measure of the cutoff; it is not assumed to be an
% operational quantum-computing capacity.

% \paragraph{Invariant formulation.}
The state-optimized problem has a useful closed-manifold formulation.  Let
$\mathcal T$ be a unitary bosonic semisimple three-dimensional TQFT and let
$\cC$ denote its
modular category of line operators.\footnote{The qualifier ``bosonic'' is important.  A fermionic topological phase with a
transparent physical fermion is naturally described by a super-modular,
rather than modular, category \cite{GuWangWen,BruillardFermionic}.  Such phases require an additional choice of
modular extension and are not included in the optimization below.  Ordinary
bosonic WZW Chern--Simons theories give modular categories and fit directly
into this framework.}
Let $\mathcal H_{\mathcal T}(T^2)$ be the
Hilbert space assigned to the torus, and let $S_{00}$ be the vacuum--vacuum
entry of the modular $S$ matrix.  In the standard normalization and canonical
framing, gluing and the state--operator correspondence give \cite{Witten,Turaev}
\begin{equation}
 Z_{\mathcal T}(T^3)=\dim\mathcal H_{\mathcal T}(T^2)
 =r(\cC),
 \qquad
 Z_{\mathcal T}(S^3)=S_{00}=\frac1{\cD(\cC)}.
 \label{eq:tqft-partitions}
\end{equation}
Thus $Z(T^3)$ records the torus GSD, while the inverse vacuum amplitude on
$S^3$ records $\cD$.  The unrestricted invariant version of the physical
question is the cutoff envelope of the state-optimized TEE:
\begin{align}
 F(R)
 &:={}
 \sup_{\substack{\mathcal T\ \mathrm{unitary\ bosonic\ semisimple}\\
                  Z_{\mathcal T}(T^3)\leq R}}
 \quad\max_{\substack{\psi\in\mathcal H_{\mathcal T}(T^2)}}
 \Gamma_{T^2}^{\mathcal T}(\psi)\nonumber\\
 &=\sup_{\substack{\mathcal T\ \mathrm{unitary\ bosonic\ semisimple}\\
                   Z_{\mathcal T}(T^3)\leq R}}
 \bigl[-2\log Z_{\mathcal T}(S^3)\bigr]
 =\sup_{\substack{\cC\ \mathrm{unitary\ modular}\\
                   r(\cC)\leq R}}
 2\log\cD(\cC).
 \label{eq:general-envelope}
\end{align}
Thus $\Gamma_{T^2}^{\mathcal T}(\psi)$ is the TEE of one torus state in one
phase, the inner maximum selects the largest TEE available in that phase, and
$F(R)$ takes the supremum of these optimized values over all phases whose
torus GSD is at most $R$.  The remaining equalities use
Eq.~\eqref{eq:state-maximization} and record the modular-category realization
of the normalized invariants.  The total quantum dimension should not be
confused with a closed-surface Hilbert-space dimension.

\paragraph{Simple WZW envelope.}
This paper does not determine $F(R)$.  To obtain an exact result while
retaining a broad and physically important non-Abelian family, we turn to
WZW--Chern--Simons theories.  Their bulk Wilson lines, torus ground states, and
chiral boundary sectors are governed by the same affine-Lie-algebra modular
data \cite{Witten,NayakEtAl}.  The numbers of sectors and their quantum
dimensions are known exactly, making this class a controlled setting in which
the competition between GSD and entanglement can be solved asymptotically.

For a finite-dimensional simple Lie algebra $\mathfrak g$ and positive
integer level $k$, let $\cC(\mathfrak g,k)$ denote the simply connected WZW
modular category of the untwisted affine algebra $\widehat{\mathfrak g}_k$.
We denote its sector count and total quantum dimension by
$r(\mathfrak g,k):=r(\cC(\mathfrak g,k))$ and
$\cD(\mathfrak g,k):=\cD(\cC(\mathfrak g,k))$, respectively.
More precisely, let $\mathcal H_{\mathfrak g,k}(T^2)$ be its torus ground-state Hilbert space.
We study the restricted envelope
\begin{align}
 F_{\WZW}(R)
 &:={}
 \sup_{\substack{\mathfrak g\ \mathrm{simple},\ k\in\mathbb Z_{>0}\\
                  r(\mathfrak g,k)\leq R}}
 \max_{\substack{\psi\in\mathcal H_{\mathfrak g,k}(T^2)}}
 \Gamma_{T^2}^{(\mathfrak g,k)}(\psi)\nonumber\\
 &=\sup_{\substack{\mathfrak g\ \mathrm{simple},\ k\in\mathbb Z_{>0}\\
                  r(\mathfrak g,k)\leq R}}
 2\log\cD(\mathfrak g,k),
 \label{eq:wzw-envelope}
\end{align}
where the first line is the two-stage physical optimization and the second
uses Eq.~\eqref{eq:state-maximization}.  Thus $F_{\WZW}(R)$ is precisely the
maximal torus TEE in the simple WZW class at cutoff $R$.
Our principal result is the sharp asymptotic law
\begin{equation}
 \lim_{R\to\infty}\frac{F_{\WZW}(R)}{(\log_2R)^2}
 =\frac{7\zeta(3)}{4\pi^2}.
 \label{eq:intro-mainlaw}
\end{equation}
The balanced symplectic sequence $Sp(2n)_n$ with $n\to\infty$ attains this
coefficient.
% Balanced orthogonal sequences have the same leading
% coefficient, while type $A$ attains half of it.

\paragraph{Semisimple extension.}
The principal theorem concerns one simple WZW factor at a time.  To formulate
the separate finite-stacking extension, choose any $m\in\mathbb Z_{>0}$,
simple Lie algebras $\mathfrak g_1,\ldots,\mathfrak g_m$, and independent
positive integral levels $k_1,\ldots,k_m$.  For the semisimple Lie algebra,
\begin{equation}
 \mathfrak g_{\mathrm{ss}}=\bigoplus_{i=1}^m\mathfrak g_i,
 \qquad
 \cC(\mathfrak g_{\mathrm{ss}},\boldsymbol k)
 =\boxtimes_{i=1}^m\cC(\mathfrak g_i,k_i).
 \label{eq:semisimple-factorization}
\end{equation}
The second identity means the semisimple theory is the product of the simply connected untwisted affine theory \cite{KacPeterson,Turaev}.  We define
the admissible stacked class by
\begin{equation}
 \mathfrak W^{\boxtimes}
 :=\bigcup_{m\geq1}
 \left\{
  \boxtimes_{i=1}^m\cC(\mathfrak g_i,k_i):
  \mathfrak g_i\ \mathrm{simple},\ k_i\in\mathbb Z_{>0}
 \right\}.
 \label{eq:stacked-class}
\end{equation}
Because every finite-dimensional semisimple Lie algebra is a finite direct
sum of simple components, $\mathfrak W^{\boxtimes}$ is precisely the class of
simply connected untwisted semisimple WZW categories with independent levels.
The union is over all positive integers $m$.  At a
given cutoff, the supremum therefore ranges over every finite product whose
categorical rank is at most $R$.  Its envelope is
\begin{equation}
 F_{\WZW}^{\boxtimes}(R)=
 \sup_{\substack{\cC\in\mathfrak W^{\boxtimes}\\r(\cC)\leq R}}
 2\log\cD(\cC).
 \label{eq:stacked-envelope}
\end{equation}
A separate corollary proves
\begin{equation}
 \lim_{R\to\infty}\frac{F_{\WZW}^{\boxtimes}(R)}{(\log_2R)^2}
 =\frac{7\zeta(3)}{4\pi^2},
 \label{eq:intro-stacked-law}
\end{equation}
so finite stacking does not improve the leading coefficient.

\paragraph{Scope and proof strategy.}
For the chosen torus bipartition, $F_{\WZW}$ and
$F_{\WZW}^{\boxtimes}$ are the state-optimized torus TEEs in the
simple and semisimple WZW classes, respectively; the vacuum flux provides a
canonical maximizing state in every category.  The results give an explicit
lower bound for the unrestricted problem and sharp extremal theorems for
these Lie-theoretic WZW--Chern--Simons families.  Whether larger envelopes arise
from other modular categories, cosets, orbifolds, Dijkgraaf--Witten theories
\cite{DijkgraafWitten}, gauged theories, or non-simply-connected
Chern--Simons theories remains open.

The proof separates the saturating construction from the global upper bound.
For the upper bound it treats explicitly the four possible parameter regimes:
rank and level growing at fixed ratio, both growing at an unbalanced ratio,
fixed Lie rank with growing level, and fixed level with growing Lie rank.
It combines proportional root-density asymptotics, an exact algebraic proof
of the sharp entropy--spectral inequality, vacuum-entry rank-level identities,
and an explicit passage from sequential bounds to the cutoff supremum.
Technical WZW data and elementary rational bounds used in the proof are
collected in the appendices.

\section{Topological Setup}
\label{sec:setup}

We now perform the first stage of the extremal problem: for a fixed
topological phase, maximize the torus TEE over its entire
ground-state Hilbert space.  This step is what reduces the original
state-dependent TEE problem to the categorical envelope studied in the rest
of the paper.

A unitary modular tensor category $\cC$ has a modular $S$ matrix and simple
line operators indexed by $a\in\operatorname{Irr}(\cC)$.  Their positive
quantum dimensions are $d_a=S_{0a}/S_{00}$; the tensor unit is $0$ and has
$d_0=1$
\cite{NayakEtAl,WenBosonic,GuWangWen}.  Its total quantum dimension is the
intrinsic invariant
\begin{equation}
 \cD(\cC)^2=\sum_{a\in\operatorname{Irr}(\cC)}d_a^2,
 \qquad S_{00}=\frac1{\cD(\cC)}>0.
 \label{eq:total-dimension}
\end{equation}
Accordingly $Z(S^3)=1/\cD$ in the normalization of
Eq.~\eqref{eq:tqft-partitions}.  This weighted fusion invariant should not be
confused with a closed-surface Hilbert-space dimension.

Quantizing on a torus gives one basis state for each simple line, viewed as its
flux through a noncontractible cycle.  Hence
\begin{equation}
 r(\cC)\equiv\operatorname{rk}(\cC)
 =Z(T^3)=\dim\mathcal H(T^2)
 =\lvert\operatorname{Irr}(\cC)\rvert,
 \qquad q(\cC)=\log_2r(\cC).
 \label{eq:logarithmic-rank}
\end{equation}
Here $\mathcal H(T^2)$ denotes the torus ground-state Hilbert space.
We use $q$ only as the binary logarithmic size of the torus Hilbert space; it
is not an operational quantum-computing capacity.  No claim is made about
braiding universality, encoding, controllability, error correction, or
fault-tolerant gates.  Below, $r(\mathfrak g,k)$ is the exact WZW sector count,
$R$ is the external cutoff, and $q_R=\log_2R$ is only a convenient change of
units.  Throughout, the generic Lie-algebra rank is denoted by $\ell$.  In
the classical labels $A_{n-1},B_n,C_n,D_n$, the symbol $n$ is only the
family-size parameter: $\ell=n-1$ for $A_{n-1}$ and $\ell=n$ for
$B_n,C_n,D_n$.

\begin{figure}[t]
  \centering
  \includegraphics[width=0.8\linewidth]{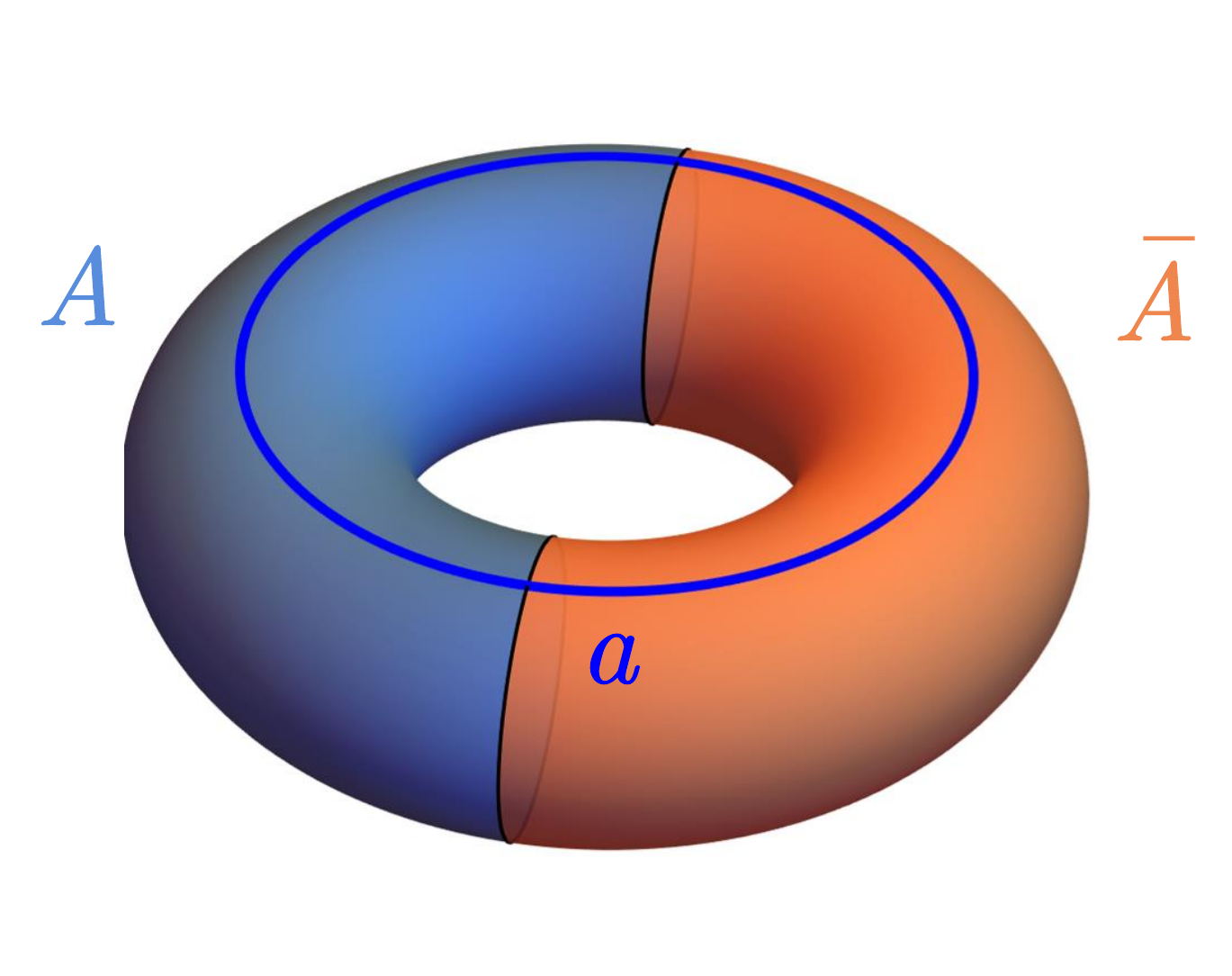}
  \caption{A torus divided into cylinders $A$ and $\bar A$.  Their two
  circular boundaries, of lengths $L_1$ and $L_2$, form the
  entanglement cut.  The blue loop carries anyon flux $a$.}
  \label{fig:torus}
\end{figure}

Cut the torus into two cylinders $A$ and $\bar A$, as in
Fig.~\ref{fig:torus}.  Let $\lvert a\rangle$ be the ground state with anyon
flux $a$ through the cylinder, and write
$\lvert\psi\rangle=\sum_a\psi_a\lvert a\rangle$ with
 $p_a=\lvert\psi_a\rvert^2$.  We call $\Gamma_{T^2}(\psi)$ the
 \emph{topological entanglement entropy} (TEE) and adopt the
 positive-magnitude convention
 \begin{equation}
  S_A(\psi)=\alpha(L_1+L_2)-\Gamma_{T^2}(\psi).
  \label{eq:tee-sign}
 \end{equation}
Thus $\Gamma_{T^2}\geq0$ is the TEE magnitude subtracted from the area law,
whereas the signed constant term appearing in $S_A$ is
$-\Gamma_{T^2}$.  Some literature calls that signed constant the TEE; in that
terminology it is the negative of our $\Gamma_{T^2}$.  With our convention,
\begin{align}
 \Gamma_{T^2}(\psi)&=2\log\cD-\mathcal P(\psi),\nonumber\\
 \mathcal P(\psi)&=2\sum_a p_a\log d_a-\sum_a p_a\log p_a,
 \label{eq:torusentropy}
\end{align}
 where $\alpha$ is nonuniversal and $\cD$ is given by
 Eq.~\eqref{eq:total-dimension}
\cite{DongEtAl,ZhangEtAl,WenMatsuuraRyu,LoChang}.  Unless a base is shown
explicitly, all logarithms are natural.  The factor of two is geometric: a
simply connected planar region has one entanglement circle and TEE
$\log\cD$, whereas this cut has two circles, each contributing $\log\cD$ in
 the vacuum-flux state.  This factor does not enter the invariant comparison
 between $Z(T^3)$ and $Z(S^3)$; it appears only when that comparison is
 realized by this two-circle torus cut.

We next maximize Eq.~\eqref{eq:torusentropy} over $\psi$.  In a unitary fusion
category,
$d_a=d_{\bar a}>0$, and the vacuum occurs once in $a\times\bar a$.  Therefore
$d_a^2=\sum_c N_{a\bar a}^{\phantom{a\bar a}c}d_c\geq d_0=1$.  Since we use
a normalized state with $0 \le p_a \le 1$, therefore
$\mathcal P(\psi)=-\sum_a p_a\log p_a + 2\sum_a p_a\log d_a\geq0$.  For the opposite
bound,
introduce
the probability distribution $w_a=d_a^2/\cD^2$.  Positivity of relative
entropy\footnote{It's commonly called the Kullback-Leibler divergence.} yields
\begin{equation}
 0\leq D_{\rm KL}(p\Vert w)
 =\sum_a p_a\log\frac{p_a}{w_a}
 =2\log\cD-\mathcal P(\psi),
 \label{eq:penalty-upper}
\end{equation}
and hence
\begin{equation}
 0\leq\mathcal P(\psi)\leq2\log\cD,
 \qquad 0\leq\Gamma_{T^2}(\psi)\leq2\log\cD.
 \label{eq:penalty-bounds}
\end{equation}
Consequently the first-stage optimization is
\begin{equation}
 \max_{\substack{\psi\in\mathcal H(T^2)}}
 \Gamma_{T^2}(\psi)=2\log\cD.
 \label{eq:state-maximization}
\end{equation}
Indeed, $\mathcal P(\psi)=0$ exactly when $p$ is concentrated on one simple
object $a$ with $d_a=1$.  The vacuum object has $d_0=1$, so the vacuum-flux
state $\lvert0\rangle$ is always a maximizer.  More generally, all and only
definite Abelian-flux states maximize $\Gamma_{T^2}$.  For a definite flux,
\(
 \Gamma_{T^2}^{(a)}=2\log(\cD/d_a),
\)
so every non-Abelian flux ($d_a>1$) gives a strictly smaller TEE.  We
use the vacuum as the canonical maximizing representative below.

At the opposite endpoint, $\mathcal P=2\log\cD$ and $\Gamma_{T^2}=0$ exactly
when $p_a=w_a=d_a^2/\cD^2$; the relative phases are arbitrary.  One
representative is the Kirby color state
\begin{equation*}
 \lvert \Omega\rangle=\cD^{-1}\sum_a d_a\lvert a\rangle.
\end{equation*}

In an Abelian category every quantum dimension is one and
$\cD=\sqrt{r(\cC)}$.  Equation~\eqref{eq:torusentropy} becomes
\begin{equation}
 \Gamma_{T^2}^{\rm Abelian}(\psi)
 =\log r(\cC)+\sum_a p_a\log p_a
 \leq\log r(\cC)=(\log 2)q(\cC).
 \label{eq:abelian}
\end{equation}
The maximal Abelian value therefore grows linearly with binary logarithmic
rank.
Growth on the quadratic scale must come from non-Abelian quantum dimensions.
After the state optimization in Eq.~\eqref{eq:state-maximization}, the WZW
problem is precisely to maximize $2\log\cD$ subject to
$r(\mathfrak g,k)\leq R$.

\section{WZW Families and Main Result}
\label{sec:landscape}

A WZW category $\cC(\mathfrak g,k)$ is specified by a simple Lie algebra
$\mathfrak g$ and a positive integer level $k$.  Its simple objects are the
integrable highest-weight representations of the affine Lie algebra $\widehat{\mathfrak g}_k$.
These labels describe Wilson lines and torus ground states in the simply
connected Chern--Simons theory and primary fields in the chiral boundary WZW
model; the same modular $S$ matrix acts in each description
\cite{Witten,KacPeterson}.  The four classical families are
\begin{equation}
 \begin{aligned}
  A_{n-1}=\mathfrak{su}_n&\ \longleftrightarrow\ SU(n)_k,\\
  B_n=\mathfrak{so}_{2n+1}&\ \longleftrightarrow\ \Spin(2n+1)_k,\\
  C_n=\mathfrak{sp}_{2n}&\ \longleftrightarrow\ Sp(2n)_k,\\
  D_n=\mathfrak{so}_{2n}&\ \longleftrightarrow\ \Spin(2n)_k.
 \end{aligned}
 \label{eq:dictionary}
\end{equation}
In this dictionary, $SU(n)$ is the group of $n\times n$ unitary matrices with
determinant one, and $\mathfrak{su}_n$ is its Lie algebra; this is the
$A_{n-1}$ family.  The $B_n$ and $D_n$ series are respectively
the odd- and even-dimensional orthogonal families.  Here $SO(N)$ is the
rotation group of $\mathbb R^N$, $\mathfrak{so}_N$ is its algebra of
infinitesimal rotations, and $\Spin(N)$ is its simply connected double cover.
Finally, $C_n$ is the compact symplectic family: its group consists of the
unitary transformations of $\mathbb C^{2n}$ that preserve the standard
symplectic form.  We denote this group by $Sp(2n)$,
and its Lie algebra by $\mathfrak{sp}_{2n}$.  In terms of the generic
Lie-rank symbol introduced above, $\ell=n$ for $B_n,C_n,D_n$, whereas
$\ell=n-1$ for $A_{n-1}$.\footnote{The rank of a simple Lie algebra is the dimension of its Cartan subalgebra, or equivalently the number of simple roots in its root system.}

The remaining simple Lie
algebras, $G_2,F_4,E_6,E_7,E_8$, are called exceptional; unlike the classical
families, their ranks are fixed and do not grow with a parameter. The ranks of the exceptional algebras are $2,4,6,7,8$, respectively.

The finite-dimensional complex simple Lie algebras are exhausted by $A_{n-1}$ for $n\geq2$, $B_n$ for $n\geq3$, $C_n$ for $n\geq2$, $D_n$ for $n\geq4$ and the exceptional algebras $G_2,F_4,E_6,E_7,E_8$.  There are some low-rank coincidences: $B_1=C_1=A_1$, $B_2=C_2$, and $D_3=A_3$.
The remaining classical symbol $D_2 = A_1\oplus A_1$ is semisimple rather
than simple, so it is not part of the simple envelope.
We will deal with semisimple algebras in the separate stacked extension, see Corollary~\ref{cor:stacking}.
%   With independent
% levels $k_1,k_2$, it is already included in the class of
% Eq.~\eqref{eq:semisimple-factorization}:
% \begin{equation}
%  \widehat{\mathfrak{so}(4)}_{(k_1,k_2)}
%  \cong\widehat{\mathfrak{su}(2)}_{k_1}\oplus
%        \widehat{\mathfrak{su}(2)}_{k_2},
%  \qquad
%  \cC(D_2;k_1,k_2)
%  \simeq\cC(A_1,k_1)\boxtimes\cC(A_1,k_2).
%  \label{eq:D2-factorization}
% \end{equation}

The categorical rank $r(\mathfrak g,k)$ counts lattice points in the level-$k$
affine Weyl alcove.  The total quantum dimension is
$\cD(\mathfrak g,k)=1/S_{00}$, and the Kac--Peterson formula expresses $S_{00}$ as a product
over the positive roots of $\mathfrak g$.
Appendix~\ref{app:wzw-data} collects the modular-category data and obtains the
normalized finite product from the Weyl denominator, while
Appendix~\ref{app:wzw-counts} derives the sector generating functions from the
comark constraint.

We call a sequence balanced when the parameters exchanged by rank-level
duality are asymptotically equal: $k/n\to1$ for types $A,C$ and $k/N\to1$
for the orthogonal theories $\Spin(N)_k$.  In the counting model below this is
the half-filled point.

\begin{theorem}[Simple WZW cutoff envelope]
\label{thm:main}
Let $\mathfrak g$ range over the nonduplicated finite-dimensional complex
simple Lie algebras just specified, let $k\in\mathbb Z_{>0}$, and use the
simply connected untwisted WZW modular category $\cC(\mathfrak g,k)$.
Let $r(\mathfrak g,k)$ be its rank and $\cD(\mathfrak g,k)$ its total quantum dimension.
Then the envelope $F_{\WZW}$ defined by
\begin{equation}
 F_{\WZW}(R) = \sup_{\substack{\mathfrak g\ \mathrm{simple},\ k\in\mathbb Z_{>0}\\
                  r(\mathfrak g,k)\leq R}}
 2\log\cD(\mathfrak g,k),
\end{equation}
satisfies the asymptotic law
\begin{equation}
 \lim_{R\to\infty}\frac{F_{\WZW}(R)}{(\log_2R)^2}
 =\frac{7\zeta(3)}{4\pi^2}.
 \label{eq:mainlaw}
\end{equation}
Moreover, with $r_n=r(C_n,n)=\binom{2n}{n}$,
\begin{equation}
 \lim_{n\to\infty}
 \frac{2\log\cD(C_n,n)}{(\log_2r_n)^2}
 =\frac{7\zeta(3)}{4\pi^2},
 \label{eq:C-saturates}
\end{equation}
so $Sp(2n)_n$ is an asymptotically saturating sequence.  If
$N_j,k_j\to\infty$ with $k_j/N_j\to1$, the allowed
$\Spin(N_j)_{k_j}$ sequences have the same leading ratio.
%   This leading-order
% tie does not assert that an orthogonal theory, or any particular theory, is a
% unique maximizer at finite $R$.
\end{theorem}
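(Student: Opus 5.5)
The proof has two halves: a lower bound from the explicit sequence $Sp(2n)_n$, and a matching global upper bound on $2\log\cD(\mathfrak g,k)$ in terms of $(\log_2 r(\mathfrak g,k))^2$. Both start from the Kac--Peterson product of Appendix~\ref{app:wzw-data}:
\begin{equation*}
 S_{00}=\lvert P/Q^\vee\rvert^{-1/2}(k+h^\vee)^{-\ell/2}\prod_{\alpha>0}2\sin\frac{\pi(\alpha,\rho)}{k+h^\vee}.
\end{equation*}
They also use the alcove counts of Appendix~\ref{app:wzw-counts}, for example $r(C_n,k)=\binom{n+k}{n}$. Since $F_{\WZW}$ is built from $2\log\cD$, Eq.~\eqref{eq:state-maximization} lets us work only with $-2\log S_{00}$.

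\textbf{Lower bound (saturation).} For $C_n$ at level $k=n$ we have $h^\vee=n+1$, so $k+h^\vee=2n+1$. The positive roots $e_i\pm e_j$ and $2e_i$ pair with $\rho=(n,n-1,\dots,1)$ to give an explicit multiset of integers $m$. Grouping them by $m$, with multiplicity $\mu_n(m)$, gives
\begin{equation*}
 -\log S_{00}=\sum_m \mu_n(m)\log\frac{1}{2\sin(\pi m/(2n+1))}+O(n\log n).
\end{equation*}
The multiplicity is piecewise linear in $m$ on the scale $n$, so this is an $n^2$-order Riemann sum. Expand $-\log(2\sin\pi x)=\sum_{j\geq1}\cos(2\pi jx)/j$ and sum against the triangular root density. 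The even modes cancel by the $m\mapsto 2n+1-m$ symmetry that holds exactly at equal rank and level. The odd modes give $\sum_{j\ \mathrm{odd}}j^{-3}=\tfrac78\zeta(3)$ times $n^2/\pi^2$ up to an explicit constant, with the $O(n\log n)$ boundary terms handled by Euler--Maclaurin. Since $\log_2\binom{2n}{n}=2n-O(\log n)$, this yields Eq.~\eqref{eq:C-saturates} and hence $\liminf_R F_{\WZW}(R)/(\log_2R)^2\geq 7\zeta(3)/(4\pi^2)$. The same computation for $\Spin(N)_k$ with $k/N\to1$ (parity of $N$ handled separately) gives the orthogonal statement.

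\textbf{Upper bound.} Suppose the upper bound fails. Then there is a sequence $(\mathfrak g_j,k_j)$ with $r_j\to\infty$ and $2\log\cD_j\geq(c+\varepsilon)(\log_2 r_j)^2$, where $c=7\zeta(3)/(4\pi^2)$. This is the passage from the cutoff supremum to sequences: any $R$-near-maximizer has $r\leq R$, and $\log_2 r$ is comparable to $\log_2 R$ because the lower-bound sequence has gaps $r_{n+1}/r_n\to4$. After passing to a subsequence we may assume one of four regimes.
\begin{itemize}
\item[(i)] Fixed Lie rank $\ell$ (this covers all exceptional algebras), $k\to\infty$. Here $-\log S_{00}\leq \lvert\Delta_+\rvert\log(k+h^\vee)+O(1)$ while $r\gtrsim k^\ell/C$. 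So $\log\cD=O(\ell\log r)=o((\log r)^2)$.
\item[(ii)] Fixed level $k$, $\ell\to\infty$. Rank-level duality identifies $S_{00}$ for $(C_n,k)$ with that for $(C_k,n)$ up to elementary factors, and similarly $SU(n)_k\leftrightarrow SU(k)_n$ and $\Spin(N)_k\leftrightarrow\Spin(k)_N$ up to simple-current factors. This reduces (ii) to (i), except for the $\Spin$ families at small fixed level, which I would bound directly.
\item[(iii)--(iv)] $\ell,k\to\infty$ with $k/\ell\to\lambda\in[0,\infty]$. Apply the same Fourier/Riemann-sum analysis as in the lower bound at general filling $x=\ell/(\ell+k)$. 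This gives $-2\log S_{00}\sim(\ell+k)^2\Phi_{\mathfrak g}(x)$ and $\log_2 r\sim(\ell+k)H_2(x)$ in each classical family, and we need the sharp entropy--spectral inequality $\Phi_{\mathfrak g}(x)\leq c\,H_2(x)^2$. For $\lambda\in\{0,\infty\}$ one uses a quantitative version near the endpoints, $\Phi(x)=O(x^2\log(1/x))$ against $H_2(x)^2\asymp x^2\log^2(1/x)$, which gives a ratio tending to $0$. Type $A$ should lose a factor $2$.
\end{itemize}

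\textbf{Main obstacle.} The hardest step is the entropy--spectral inequality itself: showing that the $\zeta(3)$-type odd-mode series $\Phi_C(x)$ is dominated by $c\,H_2(x)^2$ for all $x\in(0,1)$, with equality only at $x=\tfrac12$. First I would try writing $\Phi$ as an integral of Clausen/$\Li_3$ values and proving concavity of $\sqrt{\Phi}$ against $H_2$ via symmetry about $x=\tfrac12$. Failing that, I would separate the leading mode from a rigorously bounded tail and do an explicit rational interval check, which is the role of the elementary rational bounds in the appendices. Uniformity of the Riemann-sum errors across regimes is the secondary technical point.
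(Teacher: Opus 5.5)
Your architecture matches the paper's. Saturation comes from $Sp(2n)_n$ through the odd modes at half filling. The upper bound is split into fixed rank, fixed level via rank--level duality, proportional growth via the entropy--spectral inequality, and unbalanced growth. However, the unbalanced regime $\lambda\in\{0,\infty\}$ does not go through as you propose.

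The Riemann-sum asymptotics $\log\cD_X=\mu_Xn^2J(t)+O(n\log n)$ have constants that are uniform only for $t$ in compact subsets of $(0,\infty)$. Inserting the endpoint behaviour $\Phi(x)=O(x^2\log(1/x))$ of the \emph{limiting} function therefore does not control finite $(\ell,k)$ near the boundary. For $k/\ell\to\infty$ you would at least need error bounds uniform in $t$, which you do not supply.

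For $k/\ell\to0$ the approach genuinely fails. The largest sine argument $\pi(\rho,\theta)/K$ tends to $\pi$, so the root sum has $\asymp\ell^2$ terms of both signs. The true value of $\log\cD$ is only of order $k^2\log(\ell/k)$, the result of massive cancellation. Even an $O(\ell\log\ell)$ error exceeds $(\log r)^2\asymp k^2\log^2(\ell/k)$ when, say, $k\sim\ell^{1/3}$ or $k\sim\log\ell$. Such sequences are not covered by your fixed-level case (ii).

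The paper replaces this step with a crude but uniform bound. For $k\ge2\ell$ every $(\rho,\alpha)/K<1/2$, so $\sin\pi x\ge2x$ gives $\log\cD\le C_0\ell^2[1+\log(k/\ell)]+C_0\ell\log(k+\ell)$. Elementary binomial bounds give $\log r\asymp\ell\log(k/\ell)$, so the ratio tends to zero. The case $k/\ell\to0$ is reduced to this one by the same $\cD$-level rank--level identities you already use in (ii): exact for $C$ and $\mathfrak{so}$, and up to $\sqrt{n/k}$ for $A$.

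The second gap is the entropy--spectral inequality $A(s)\le C_\star h(s)^2$, with $C_\star=7\zeta(3)/[4(\log2)^2]$. It carries the sharp constant, and you leave it unproved. Your fallback of a rational interval check cannot close it by itself, for two reasons.
\begin{itemize}
\item The difference $C_\star h^2-A$ has a double zero at $s=1/2$, where no finite interval enclosure can certify its sign.
\item Both sides vanish as $s\to0,1$.
\end{itemize}
Local second-order and endpoint asymptotic arguments would be needed on top of any interval check.

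The paper handles all three points at once. Set $y=1-2s$ and $z=y^2$; both sides become power series in $z$, with $\mathsf A$ obtained by integrating $\partial_y^2\mathsf A=-\pi^2\log(2\cos(\pi y/2))$ twice. One then shows that $[C_\star\mathsf h(z)^2-\mathsf A(z)]/(1-z)^2=\sum_n c_nz^n$ has $c_0=0$ and $c_n>0$ for all $n\ge1$. For $n\le10$ this is an exact rational evaluation. For $n\ge11$ it uses two bounds: $u_n=\int_0^1x^{2n}(1+x)^{-1}\,\mathrm dx\ge[2(2n+1)]^{-1}$, and $v_n=\sum_{j>n}(j-n-1)a_j$, the latter following from $\mathsf A(1)=\mathsf A'(1)=0$. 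The factor $(1-z)^2$ absorbs the endpoint degeneracy, and $c_0=0<c_1$ encodes the tangency at half filling.

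Two smaller corrections are also needed.
\begin{itemize}
\item \textbf{Sine normalization.} With long roots of squared length two, $(e_i,e_j)=\delta_{ij}/2$ for $C_n$. The integer heights $m$ computed from $\rho=(n,\dots,1)$ must therefore enter as $2\sin(\pi m/(2(2n+1)))$. With your $2\sin(\pi m/(2n+1))$, the symmetry $m\mapsto2n+1-m$ would leave the kernel invariant and kill the $n^2$ term entirely.
\item \textbf{Cutoff passage.} The gap $r_{n+1}/r_n<4$ is needed for the \emph{lower} bound, to get $\log_2r_{n(R)}\sim\log_2R$. For the upper bound no comparability is needed, since $r_j\le R_j$. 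You must instead show $r_j\to\infty$ for the near-maximizers. This uses the fact that $\cD$ is bounded on $\{r\le R_0\}$ for each fixed $R_0$. Bounded rank bounds both $\ell$ and $k$, except for $B_n,D_n$ at level one, where $\cD=2$.
\end{itemize}
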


% The coefficient $7\zeta(3)/(4\pi^2)=0.2131391994\ldots$ arises from the odd
% Fourier modes of the half-filled root interaction.
%  Natural logarithms are used in $F_{\WZW}$ and binary logarithms only set the horizontal units.

The type-$A$ family $SU(n)_k$ reaches half this coefficient.  The source of
the relative factor is summarized in Table~\ref{tab:proportional-summary}.
Type $A$ has only the pair roots $e_i-e_j$, whereas $B,C,D$ also have the
$e_i+e_j$ channel.  Each channel contains $O(n^2)$ roots.  The $n$ short roots
of $B_n$ and the $n$ long roots of $C_n$ contribute only $O(n\log n)$.
For $X\in\{A,B,C,D\}$, write $r_X(n,k)$ and $\cD_X(n,k)$ for the sector
count and total quantum dimension of the corresponding family in
Eq.~\eqref{eq:dictionary}.  Arguments $(n,k)$ are suppressed when no
ambiguity can arise; $K,t$, and $\mu_X$ are defined in the table caption.
\begin{table}[t]
\centering
\caption{Data controlling proportional classical limits.  Here
$K=k+h_X^\vee$, with $h_X^\vee$ the dual Coxeter number; $t$ is the effective
rank-level ratio shown in the table; and $\mu_X$ counts the leading pair-root
channels.
The functions $H$ and $J$ are defined in Eq.~\eqref{eq:HJ}; $n$ is
the displayed family parameter, not always the exact Lie rank.
}
\label{tab:proportional-summary}
\begin{tabular}{c c c c c}
\toprule
family & shifted level $K$ & effective $t$ & $O(n^2)$ positive roots & $\mu_X$\\
\midrule
$A_{n-1}$ & $k+n$ & $k/n$ & $e_i-e_j$ & $1$\\
$B_n$ & $k+2n-1$ & $k/(2n)$ & $e_i-e_j,\ e_i+e_j$ & $2$\\
$C_n$ & $k+n+1$ & $k/n$ & $e_i-e_j,\ e_i+e_j$ & $2$\\
$D_n$ & $k+2n-2$ & $k/(2n)$ & $e_i-e_j,\ e_i+e_j$ & $2$\\
\bottomrule
\end{tabular}
\end{table}
% Uniformly for $t$ in compact subsets of $(0,\infty)$,
These data give
$\log r_X=nH(t)+O(\log n)$ and
$\log\cD_X=\mu_Xn^2J(t)+O(n\log n)$; the density derivation and all
normalization factors are given in Appendix~\ref{app:wzw-asymptotics}.

If $c_X$ is the largest limiting value of $\Gamma_{T^2}/q(\cC)^2$ along proportional trajectories of type $X$, then
\begin{equation}
 c_A=\frac{7\zeta(3)}{8\pi^2},
 \qquad c_B=c_C=c_D=\frac{7\zeta(3)}{4\pi^2}.
 \label{eq:familycoefficients}
\end{equation}
Half filling uniquely maximizes the proportional-limit coefficient, as proved
in Section~\ref{sec:global}.  Fixed-rank and rank-level estimates control the
exceptional and unbalanced sequences.

\begin{figure}[t]
 \centering
 \includegraphics[width=0.8\textwidth]{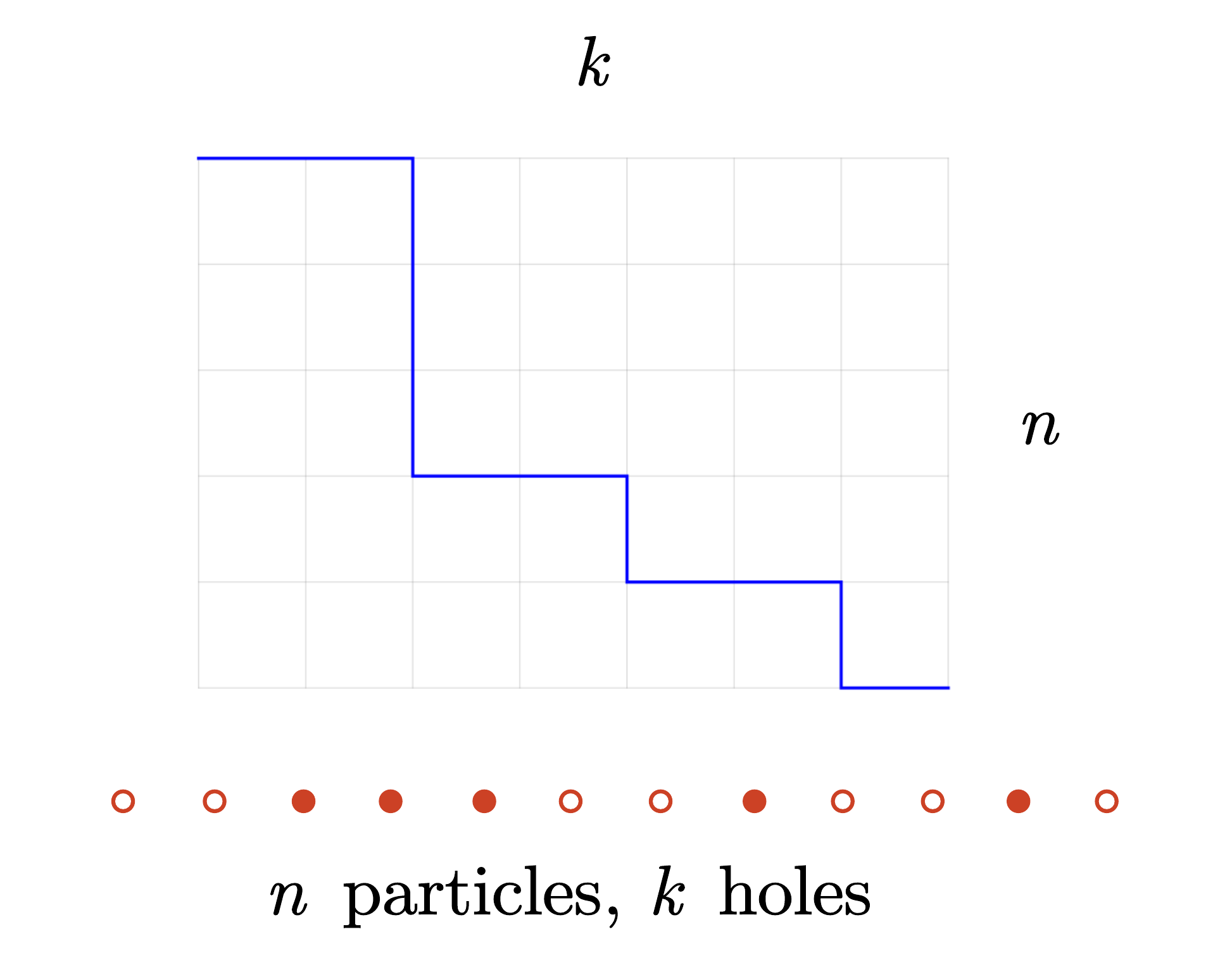}
 \caption{Integrable $Sp(2n)_k$ sectors correspond to lattice paths in an
 $n\times k$ rectangle, or to $n$ occupied sites among $n+k$ ordered sites.
 Transposition exchanges occupied and empty sites, so $n=k$ is half filling.
 The path shown has $(n,k)=(5,7)$.}
 \label{fig:origin}
\end{figure}

\section{Half Filling and the Odd-Mode Constant}
\label{sec:half}

We first analyze type $C$ at arbitrary rank and level.

\emph{Sector-counting model.}
An integrable highest weight of $C_n$ at level $k$ is a Young diagram in an
$n\times k$ rectangle.  Reading its boundary as a binary word, with vertical
steps occupied and horizontal steps empty, gives a bijection with $n$-element
subsets of $n+k$ ordered sites. We show this counting model in Figure~\ref{fig:origin}. Hence
\begin{equation}
 r_{n,k}=r(C_n,k)=\binom{n+k}{n}.
 \label{eq:binomial}
\end{equation}
The occupation language borrows terminology from exclusion statistics
\cite{HaldaneExclusion,WuExclusion}, but these particles and holes are only
steps of the boundary path.  The type-$C$ comark calculation in
Appendix~\ref{app:wzw-counts} gives the same bijection.  Its filling fraction is
\begin{equation}
 s=\frac{n}{n+k}.
 \label{eq:filling}
\end{equation}
Exchanging $n$ and $k$ interchanges vertical and horizontal steps, so type-$C$
rank-level exchange is particle--hole exchange, $s\leftrightarrow1-s$.
At this stage no balance condition has been imposed.

\emph{Kac--Peterson formula.}
% The total quantum dimension is $\cD=1/S_{00}$.
For type $C_n$, use
$(e_i,e_j)=\delta_{ij}/2$, so the long roots $2e_i$ have squared length two.
The Weyl vector is $\rho=\sum_{i=1}^n(n+1-i)e_i$.  Its angular coordinates are
\begin{equation}
 \theta_i=\frac{\pi(\rho,e_i)}{k+n+1}
 =\frac{\pi(n+1-i)}{2(k+n+1)},
 \qquad i=1,\ldots,n.
 \label{eq:Cangles}
\end{equation}
The finite Weyl denominator is
\begin{align}
 S_{00}&=2^{-n/2}(k+n+1)^{-n/2}\Delta_{C_n}(\bm\theta),\nonumber\\
 \Delta_{C_n}(\bm\theta)
 &=\prod_{i<j}2\sin(\theta_i-\theta_j)\,2\sin(\theta_i+\theta_j)
   \prod_i2\sin(2\theta_i).
 \label{eq:weylC}
\end{align}
This is the Kac--Peterson formula \eqref{S:S00} specialized to the positive
roots $e_i-e_j$, $e_i+e_j$, and $2e_i$ of $C_n$.  In the eigenvalue-gas
interpretation the $\theta_i$ are angular positions; algebraically they are
coordinates of the Weyl vector and are unrelated to the auxiliary occupation
sites above.

Equation~\eqref{eq:weylC} now makes the quadratic scaling visible.  Every
unordered pair $i<j$ occurs in two interaction channels, one through
$\theta_i-\theta_j$ and one through $\theta_i+\theta_j$.  Thus the Weyl
determinant contains $2\binom n2=O(n^2)$ pair factors; the remaining $n$
long-root factors are lower order.  After taking $-\log S_{00}$, the pair
terms have the form of an eigenvalue-gas interaction
\cite{Witten,KacPeterson,Marino}.

To compare rank and level on the same scale, set
\begin{equation}
 t=\frac{k}{n},\qquad s=\frac1{1+t}=\frac{n}{n+k}.
\end{equation}
For fixed $t>0$, the uniform root-height estimate in
Appendix~\ref{app:wzw-asymptotics}, together with Stirling's formula, gives
\begin{align}
 \log r_{n,k}&=nH(t)+O(\log n),\nonumber\\
 \log\cD(C_n,k)&=2n^2J(t)+O(n\log n),
 \label{eq:continuum}
\end{align}
where
\begin{align}
 H(t)&=(1+t)\log(1+t)-t\log t,\nonumber\\
 J(t)&=-\int_0^1(1-x)
 \log\!\left(2\sin\frac{\pi x}{1+t}\right)\dd x.
 \label{eq:HJ}
\end{align}
% The first line is the configuration entropy of the boundary words.
The difference- and sum-root height densities combine into the weight $1-x$ in
the second line, and the two pair channels produce the prefactor two.  Hence
$\log r=O(n)$ whereas
$\log\cD=O(n^2)$, which already explains why the torus TEE can be
quadratic in the logarithmic rank $q(\cC)=(\log r)/\log 2$.

The Fourier series
$-\log(2\sin \pi sx)=\sum_{m\geq1}\cos(2\pi msx)/m$ and the elementary
integral
\begin{equation}
 \int_0^1(1-x)\cos(2\pi msx)\,\dd x
 =\frac{1-\cos(2\pi ms)}{(2\pi ms)^2}
 \label{eq:pairintegral}
\end{equation}
give the spectral form
\begin{align}
 H(t)&=\frac{h(s)}s,&
 h(s)&=-s\log s-(1-s)\log(1-s),\nonumber\\
 J(t)&=\frac{A(s)}{4\pi^2s^2},&
 A(s)&=\sum_{m=1}^{\infty}\frac{1-\cos(2\pi ms)}{m^3}
 =\zeta(3)-\Re\Li_3(e^{2\pi\ii s}).
 \label{eq:spectral}
\end{align}
Here $\Li_3(z)=\sum_{m\geq1}z^m/m^3$ is the trilogarithm
\cite{Lewin} and $\Re$ denotes the real part.  The same function occurs in the nonpolynomial genus-zero part
of the large-$N$ Chern--Simons free energy on $S^3$.  That term is
conventionally written as $\Li_3(e^{-t})$; after
$t\mapsto-2\pi\ii s$, the function used here is its sign-reversed,
$\zeta(3)$-shifted real part
\cite{GopakumarVafa,SinhaVafa,MarinoReview}.  Equation~\eqref{eq:spectral},
however, follows directly from the Kac--Peterson root product.

\emph{Balanced specialization.}
Set $k=n$.  Then $s=1/2$, and the counting model has $n$ occupied sites among
$2n$ sites.  Thus
\begin{equation}
 1-\cos(2\pi ms)=1-(-1)^m
 =\begin{cases}0,&m\text{ even},\\2,&m\text{ odd}.
 \end{cases}
 \label{eq:oddprojector}
\end{equation}
The even modes vanish and the odd modes are doubled:
\begin{equation}
 A(1/2)=2\sum_{m\,\mathrm{odd}}\frac1{m^3}
 =2(1-2^{-3})\zeta(3)=\frac{7}{4}\zeta(3).
 \label{eq:oddconstant}
\end{equation}
Write $r_n=\binom{2n}{n}$ and $q_n=\log_2r_n$.  Combining
Eqs.~\eqref{eq:continuum}, \eqref{eq:spectral}, and
\eqref{eq:oddconstant} with $\Gamma_{T^2}=2\log\cD$ gives
\begin{equation}
 \frac{\Gamma_{T^2}}{q_n^2}\longrightarrow
 \frac{A(1/2)}{\pi^2}=\frac{7\zeta(3)}{4\pi^2}.
 \label{eq:ratioodd}
\end{equation}
The coefficient is the odd-integer part of the cubic zeta sum selected by
half filling.  The power $m^{-3}$ arises because the Fourier coefficient of the
logarithmic Weyl kernel contributes $m^{-1}$, while the two integrations in
Eq.~\eqref{eq:pairintegral} contribute $m^{-2}$.

\section{Global Optimality}
\label{sec:global}

Appendices~\ref{app:wzw-counts} and~\ref{app:wzw-asymptotics} combine the
comark counts and Stirling's formula
with a root-height Riemann sum, uniformly for $t$ in compact subsets of
$(0,\infty)$, to give
\begin{align}
 \log r_X&=nH(t)+O(\log n),\nonumber\\
 \log\cD_X&=\mu_X n^2J(t)+O(n\log n).
 \label{eq:familyreduction}
\end{align}
Here $X$ can be $A$, $B$, $C$, or $D$.  For $A$ and $C$ the ratio is $t=k/n$; for $B$ and $D$ the natural effective
ratio is $t=k/(2n)$.  Here $\mu_A=1$ and
$\mu_B=\mu_C=\mu_D=2$.  The common function $H$ comes from the binomial entropy
of integrable weights.  The common function $J$ comes from the continuum root
interaction.  Type $A$ has only the $e_i-e_j$ pair channel, while the other
three families have both $e_i-e_j$ and $e_i+e_j$, which is the origin of
$\mu_X$.

Put $q_X=\log_2r_X$.  Using the spectral representation in
Eq.~\eqref{eq:spectral}, Eq.~\eqref{eq:familyreduction} becomes
\begin{equation}
 \lim_{n\to\infty}\frac{\Gamma_{T^2}}{q_X^2}
 =
 \begin{cases}
  \frac12\mathcal E(s),&X=A,\\[2pt]
  \mathcal E(s),&X=B,C,D,
 \end{cases}
 \label{eq:efficiency}
\end{equation}
where $s=1/(1+t)$ and
\begin{equation}
 \mathcal E(s)=\frac{(\log 2)^2}{\pi^2}\frac{A(s)}{h(s)^2}.
 \label{eq:efficiencyfunction}
\end{equation}
The proportional problem is therefore to maximize $A(s)/h(s)^2$; the only
family dependence is the factor of two in Eq.~\eqref{eq:efficiency}.

\subsection{Proportional regime: Lie rank and level grow at fixed ratio}
\label{sec:sharp-proof}

\begin{lemma}[Sharp entropy--spectral inequality]
\label{lem:sharp}
For $0<s<1$,
\begin{equation}
 A(s)\leq \frac{7\zeta(3)}{4(\log 2)^2}h(s)^2,
 \label{eq:sharp}
\end{equation}
with equality if and only if $s=1/2$.
\end{lemma}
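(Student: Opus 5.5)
The plan is to set $C_\star=7\zeta(3)/(4(\log 2)^2)$ and study $G(s)=C_\star h(s)^2-A(s)$ on $(0,1)$. Lemma~\ref{lem:sharp} amounts to showing that $G\geq0$ with equality only at $s=1/2$.

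\emph{Reductions.} Both $A$ and $h$ are invariant under $s\mapsto1-s$, so it suffices to work on $(0,1/2]$. Eq.~\eqref{eq:oddconstant} gives $G(1/2)=C_\star(\log2)^2-\tfrac74\zeta(3)=0$. Differentiating the Fourier series of Eq.~\eqref{eq:spectral} termwise gives closed forms:
\begin{equation*}
 A'(s)=2\pi\sum_{m\geq1}\frac{\sin 2\pi ms}{m^2},
 \qquad
 A''(s)=-4\pi^2\log(2\sin\pi s).
\end{equation*}
The first is a Clausen function; the second uses the series quoted before Eq.~\eqref{eq:pairintegral}. For $h$ we have $h'(s)=\log\frac{1-s}{s}$ and $h''(s)=-\frac1{s(1-s)}$. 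Hence
\begin{equation*}
 G''(s)=2C_\star\Bigl[\log^2\tfrac{1-s}{s}-\tfrac{h(s)}{s(1-s)}\Bigr]+4\pi^2\log(2\sin\pi s),
\end{equation*}
which is completely explicit and elementary.

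\emph{Endpoint analysis.} At $s=1/2$ we have $h'=0$ and $A'(1/2)=2\pi\sum_m\sin(\pi m)/m^2=0$, so $G'(1/2)=0$. The second derivative there is
\begin{equation*}
 G''(1/2)=-8C_\star\log2+4\pi^2\log2=\log 2\,\Bigl(4\pi^2-\tfrac{14\zeta(3)}{(\log2)^2}\Bigr).
\end{equation*}
This is positive, approximately $27.36-24.28>0$ after multiplying out the factor $\log 2$, so $s=1/2$ is a strict local minimum of $G$. Near $s=0$, integrating $A''$ twice gives $A(s)=2\pi^2s^2(\tfrac32-\log(2\pi s))+O(s^4)$, whereas $h(s)^2=s^2\log^2 s\,(1+o(1))$. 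The $\log^2$ term therefore dominates, and $G>0$ on some explicit interval $(0,s_0]$. I would make $s_0$ concrete with elementary bounds such as $\sin x\geq x-x^3/6$ and $-\log(1-s)\leq s/(1-s)$.

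\emph{Middle range (main obstacle).} The difficulty is to exclude a second zero of $G$ on $[s_0,1/2)$. There are two routes.

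The first route is the monotone l'Hôpital rule applied to $Q=A/h^2$. Both $A$ and $h^2$ vanish to second order at $0$. If $h^2$ is convex on the relevant range and $A''/(h^2)''$ is increasing there, then $Q$ is increasing, which gives the inequality together with the uniqueness of the equality point. The weak point is that $(h^2)''$ changes sign near $1/2$ (it equals $-8\log2$ there), so this argument can only cover $(0,s_1]$, where $s_1$ is the zero of $(h^2)''$.

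The second route handles $[s_1,1/2]$, or the whole of $[s_0,1/2]$ if the first fails. The idea is to show $G''>0$ there, so that $G$ is convex on an interval ending at the critical point $1/2$ with $G(1/2)=0$ and $G'(1/2)=0$; this forces $G>0$ on the rest of that interval. The expression for $G''$ involves only $\log$, $\sin$ and $h$, so the check can be done on a finite mesh with monotonicity bounds for each term, or with the rational bounds collected in the appendix. The quantity $A$ itself never has to be evaluated.

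If convexity fails somewhere in the middle, the fallback is to bound $A$ directly. Truncate the series at $M$ terms and use $0\leq\sum_{m>M}(1-\cos)/m^3\leq 1/M^2$, then verify $G>0$ on a mesh. I expect fixing the crossover points $s_0$ and $s_1$ and certifying this mid-range positivity to be the only genuinely laborious step.
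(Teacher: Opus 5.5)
Your local analysis is correct. The formulas for $A'$, $A''$, $h'$, $h''$ and $G''$ are right, $G(1/2)=G'(1/2)=0$, and $G''(1/2)=4\pi^2\log2-14\zeta(3)/\log 2\approx3.09>0$. Near $0$ the term $C_\star s^2\log^2 s$ does dominate $A(s)\sim 2\pi^2 s^2\log(1/s)$. The gap is that the lemma is a global statement, and you do not prove positivity of $G$ on the middle range. Moreover, the mechanism you assign to $[s_1,1/2]$ provably fails there.

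Since $(h^2)''(s_1)=0$, you get $G''(s_1)=-A''(s_1)=4\pi^2\log(2\sin\pi s_1)$. Also $s_1<1/6$, because already $(h^2)''(1/6)=2\bigl[\log^2 5-\tfrac{36}{5}h(1/6)\bigr]<0$. Hence $G''(s_1)<0$; numerically $s_1\approx0.137$ and $G''(s_1)\approx-7.1$. Likewise $G''(0.2)\approx-4.2$ and $G''(0.3)\approx-0.19$. So $G$ is concave on roughly $[s_1,0.30]$ and convex only on about $[0.305,1/2]$.

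Your first route does not close this window either. Even granting the unverified monotonicity of $A''/(h^2)''$ on $(0,s_1)$, it only shows that $A/h^2$ increases on $(0,s_1]$. You would still need $A(s_1)<C_\star h(s_1)^2$, which requires actually evaluating $A$, contrary to your remark that $A$ never has to be evaluated.

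So the interval from about $0.14$ to $0.30$ is covered only by your fallback mesh check, which you have not carried out. It is feasible, since the margin there is moderate ($G(0.2)\approx0.08$, $G(0.3)\approx0.05$). To count as a proof, it needs:
\begin{itemize}
\item a certified Lipschitz bound for $G$ on that interval, e.g.\ $|A'|\le2\pi\zeta(2)=\pi^3/3$ together with a bound on $2C_\star h|h'|$;
\item an explicit mesh, used with your truncation bound $1/M^2$;
\item a handoff to the convexity region near $1/2$, since $G$ vanishes quadratically there and a mesh alone cannot finish.
\end{itemize}

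The paper avoids all interval splitting. With $y=1-2s$ and $z=y^2$, it expands $\mathsf h$ and $\mathsf A$ in powers of $z$. It then proves that every coefficient $c_n$, $n\ge1$, of $(C_\star\mathsf h^2-\mathsf A)/(1-z)^2$ is positive. This uses an exact formula for $c_n$, an exact check for $n\le10$, and a tail estimate for $n\ge11$ based on $\mathsf A(1)=\mathsf A'(1)=0$. That gives $G>0$ on all of $0<z<1$ at once, with equality only at $z=0$, i.e.\ $s=1/2$.
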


\begin{proof}[Proof of Lemma~\ref{lem:sharp}]
Put $y=1-2s$ and $z=y^2$.  Particle--hole symmetry makes the
entropy and spectral energy even in $y$, so they may be written as functions
of $z\in[0,1)$:
\begin{align}
 \mathsf h(z)
 &=\log 2-\frac12[(1+y)\log(1+y)+(1-y)\log(1-y)]\nonumber\\
 &=\log 2-\sum_{m\geq1}\frac{z^m}{2m(2m-1)},\label{S:hseries}\\
 \mathsf A(z)&=\zeta(3)-\Re\Li_3(-e^{-\ii\pi y}).
\end{align}
Twice differentiating the last expression gives
\begin{equation}
 \frac{\dd^2\mathsf A}{\dd y^2}
 =-\pi^2\log\left(2\cos\frac{\pi y}{2}\right).
\end{equation}
For $|y|<1$, integrate the Taylor series twice.  The two integration
constants are fixed by $\mathsf A(0)=7\zeta(3)/4$ and
$\partial_y\mathsf A(0)=0$, the latter following from evenness.  This gives
\begin{equation}
 \mathsf A(z)=\frac74\zeta(3)-\frac{\pi^2\log 2}{2}z
 +\sum_{m\geq1}
 \frac{\pi^2(1-4^{-m})\zeta(2m)}
 {m(2m+1)(2m+2)}z^{m+1}.
 \label{S:Aseries}
\end{equation}

Let $C_\star=7\zeta(3)/[4(\log 2)^2]$ and define
\begin{equation}
 \frac{\mathsf h(z)}{1-z}=\sum_{n\geq0}u_n z^n,
 \qquad
 \frac{\mathsf A(z)}{(1-z)^2}=\sum_{n\geq0}v_n z^n.
\end{equation}
Equation~\eqref{S:hseries} implies the positive integral representation
\begin{equation}
 u_n=\int_0^1\frac{x^{2n}}{1+x}\dd x>0.
 \label{S:un}
\end{equation}
Define the coefficients \(c_n\) by
\begin{equation}
 \frac{C_\star\mathsf h(z)^2-\mathsf A(z)}{(1-z)^2}
 =\sum_{n\geq0}c_n z^n,
 \qquad
 c_n=C_\star\sum_{i=0}^n u_i u_{n-i}-v_n,
 \label{S:cn}
\end{equation}
The constant term vanishes because
$C_\star\mathsf h(0)^2=\mathsf A(0)$.

We first write the remaining coefficients in exact form.  Define the rational
numbers and polynomial
\begin{align}
 \sigma_j&=\sum_{m=1}^j\frac1{2m(2m-1)},\qquad \sigma_0=0,\nonumber\\
 S_n&=\sum_{j=0}^n\sigma_j,\qquad
 T_n=\sum_{j=0}^n\sigma_j\sigma_{n-j},\nonumber\\
 \beta_m&=\frac{(1-4^{-m})|B_{2m}|2^{2m-1}}
 {m(2m+1)(2m+2)(2m)!},\nonumber\\
 W_n(p)&=\sum_{m=1}^{n-1}(n-m)\beta_m p^{m+1}.
 \label{S:algebraic-data}
\end{align}
The sum defining \(W_1\) is empty.  Here \(B_r\) is the \(r\)-th
\emph{Bernoulli number}, defined by the generating function
\begin{equation}
 \frac{t}{e^t-1}=\sum_{r\geq0}B_r\frac{t^r}{r!}.
 \label{S:Bernoulli-definition}
\end{equation}
For example, \(B_2=1/6\) and \(B_4=-1/30\).  More generally,
\((-1)^{m+1}B_{2m}>0\) for \(m\geq1\), and Euler's formula reads
\begin{equation}
 \zeta(2m)
 =(-1)^{m+1}\frac{B_{2m}(2\pi)^{2m}}{2(2m)!}
 =\frac{|B_{2m}|(2\pi)^{2m}}{2(2m)!}.
 \label{S:Euler-even-zeta}
\end{equation}
The definition and Euler's identity are standard; see
Ref.~\cite{NISTHandbook}.
This explains both the Bernoulli number and the absolute value in
\(\beta_m\).

We now derive \(c_n\) step by step.  First, multiplying the entropy series
in Eq.~\eqref{S:hseries} by
\((1-z)^{-1}=\sum_{r\geq0}z^r\) gives
\begin{equation}
 u_j=[z^j]\frac{\mathsf h(z)}{1-z}
 =\log 2-\sigma_j.
 \label{S:un-exact}
\end{equation}
Consequently,
\begin{align}
 \sum_{i=0}^n u_i u_{n-i}
 &=\sum_{i=0}^n
   (\log 2-\sigma_i)(\log 2-\sigma_{n-i})\nonumber\\
 &=(n+1)(\log 2)^2-2(\log 2)S_n+T_n.
 \label{S:u-convolution-exact}
\end{align}

Second, let \(a_j=[z^j]\mathsf A(z)\).  Substitution of
Eq.~\eqref{S:Euler-even-zeta} into Eq.~\eqref{S:Aseries} gives
\begin{equation}
 a_0=\frac74\zeta(3),\qquad
 a_1=-\frac{\pi^2\log 2}{2},\qquad
 a_{m+1}=\beta_m(\pi^2)^{m+1}\quad(m\geq1).
 \label{S:A-coefficients-exact}
\end{equation}
Since
\((1-z)^{-2}=\sum_{r\geq0}(r+1)z^r\), the Cauchy product gives
\begin{align}
 v_n
 &=[z^n]\frac{\mathsf A(z)}{(1-z)^2}
  =\sum_{j=0}^n(n-j+1)a_j\nonumber\\
 &=\frac{7(n+1)\zeta(3)}4
   -\frac{n\pi^2\log 2}{2}
   +W_n(\pi^2).
 \label{S:vn-exact}
\end{align}

Finally, insert Eqs.~\eqref{S:u-convolution-exact} and
\eqref{S:vn-exact} into Eq.~\eqref{S:cn}.  The terms
\(7(n+1)\zeta(3)/4\) cancel because
\(C_\star=7\zeta(3)/[4(\log 2)^2]\), leaving the exact expression
\begin{equation}
 c_n=\frac{n\pi^2\log 2}{2}-W_n(\pi^2)
 -\frac{7S_n\zeta(3)}{2\log 2}
 +\frac{7T_n\zeta(3)}{4(\log 2)^2}.
 \label{S:cn-exact}
\end{equation}
For \(1\leq n\leq10\), direct exact substitution into
Eq.~\eqref{S:cn-exact} gives
\begin{equation}
 c_n>0.
 \label{S:cn-finite}
\end{equation}
This is a finite symbolic check of the displayed expression: \(S_n,T_n\)
and the coefficients of \(W_n\) are rational, and no decimal approximation
is used.

It remains to prove positivity for \(n\geq11\).  From Eq.~\eqref{S:un},
\(u_i\geq[2(2i+1)]^{-1}\).  Applying this bound to both factors and using
partial fractions gives
\begin{equation}
 \sum_{i=0}^n u_i u_{n-i}
 \geq\frac1{4(n+1)}\sum_{i=0}^n\frac1{2i+1}
 \geq\frac{\log(2n+3)}{8(n+1)}.
 \label{S:conv}
\end{equation}

To control $v_n$, the positive spectral series gives
\begin{equation}
 A(s)=O\left(s^2\log\frac1s\right)\qquad(s\downarrow0).
\end{equation}
Use $1-\cos x\leq\min\{2,x^2/2\}$ and split the sum at
$m=\lfloor1/s\rfloor$.  Since $1-z=4s(1-s)$, this gives
$\mathsf A(z)=O((1-z)^2\log[1/(1-z)])$.  The coefficients in
Eq.~\eqref{S:Aseries} are $O(m^{-3})$, so the series and its first derivative
converge absolutely at $z=1$.  Thus
\begin{equation}
 \sum_{j\geq0}a_j=0,
 \qquad
 \sum_{j\geq0}j a_j=0,
\end{equation}
where $a_j$ are the coefficients of $\mathsf A$; these identities are
$\mathsf A(1)=\mathsf A'(1)=0$.  Consequently
\begin{equation}
 v_n=\sum_{j>n}(j-n-1)a_j.
\end{equation}
Using $(1-4^{-m})\zeta(2m)\leq\pi^2/8$ and
$(2m+1)(2m+2)\geq4m^2$ yields
\begin{align}
 v_n&\leq\frac{\pi^4}{32}
 \sum_{m=n+1}^{\infty}\frac{m-n}{m^3}\nonumber\\
 &\leq\frac{\pi^4}{32}
 \left(\frac1{2n}+\frac4{27n^2}\right).
 \label{S:vn}
\end{align}
For the second inequality, compare the unimodal function $(x-n)/x^3$ with
its integral over $[n,\infty)$ and add its maximum $4/(27n^2)$.
Combining Eqs.~\eqref{S:conv} and~\eqref{S:vn} yields the explicit estimate
\begin{equation}
 c_n\geq
 \frac{C_\star\log(2n+3)}{8(n+1)}
 -\frac{\pi^4}{32}\left(\frac1{2n}+\frac4{27n^2}\right).
 \label{S:cn-tail}
\end{equation}

Define a function
\begin{equation}
 f(x):=\frac{C_\star\log(2x+3)}{8(x+1)}
 -\frac{\pi^4}{32}\left(\frac1{2x}+\frac4{27x^2}\right)
 \label{S:cn-function}
\end{equation}
for $x\geq1$.  The right-hand side of Eq.~\eqref{S:cn-tail} is $f(n)$.
Differentiating \(xf(x)\) gives
\begin{equation}
 \frac{\dd}{\dd x}\bigl[xf(x)\bigr]
 =\frac{C_\star}{8}
 \left[
  \frac{\log(2x+3)}{(x+1)^2}
  +\frac{2x}{(x+1)(2x+3)}
 \right]
 +\frac{\pi^4}{216x^2}>0.
 \label{S:cn-monotone}
\end{equation}
Thus \(xf(x)\) is increasing for \(x\geq1\).  The rational bounds
\(\zeta(3)>6/5\), \(\log2<7/10\), \(\log25>16/5\), and \(\pi<22/7\)
give the exact estimate \(f(11)>139/259308>0\).
For \(x\geq11\), it follows that
\(xf(x)\geq11f(11)>0\), hence \(f(x)>0\).  Since
\(c_n\geq f(n)\), we conclude that \(c_n>0\) for every \(n\geq11\).
Together with Eq.~\eqref{S:cn-finite}, this proves \(c_n>0\) for all
\(n\geq1\).
Therefore
\begin{equation}
 C_\star\mathsf h(z)^2-\mathsf A(z)
 =(1-z)^2\sum_{n\geq1}c_n z^n>0
\end{equation}
for $0<z<1$.  Equality occurs at $z=0$, or $s=1/2$, which proves
Eq.~\eqref{eq:sharp}.
\end{proof}

Combining Eqs.~\eqref{eq:efficiency} and~\eqref{eq:efficiencyfunction} with
Lemma~\ref{lem:sharp} proves the proportional coefficients stated in
Eq.~\eqref{eq:familycoefficients}.  The strict equality condition shows that
the maximizing ratio is uniquely \(t=1\) in each classical family; types
\(B,C,D\) attain the coefficient in Theorem~\ref{thm:main}, while type \(A\)
attains half of it.  It remains to control trajectories approaching the
boundary of the proportional regime.

To make the exhaustion explicit, let $\ell$ denote the Lie rank.  After
passing to a subsequence, every classical sequence with categorical rank
tending to infinity falls into exactly one of four regimes: both $\ell$ and
$k$ diverge with a finite nonzero limiting ratio; both diverge while
$k/\ell$ tends to zero or infinity; $\ell$ is fixed and $k\to\infty$; or
$k$ is fixed and $\ell\to\infty$.  The first regime has just been treated.
The next three subsections handle the remaining cases.  An exceptional
sequence varies only in its level, since its Lie algebra is one of the five
fixed exceptional algebras, and therefore belongs to the fixed-Lie-algebra
regime.

\subsection{Fixed Lie rank and growing level; exceptional algebras}
\label{sec:fixed-rank-regime}

\begin{lemma}[Fixed Lie algebra and exceptional families]
\label{lem:fixed-rank}
For every fixed simple Lie algebra $\mathfrak g$ (so its Lie rank does not
vary with $k$),
\begin{equation}
 \frac{\log\cD(\mathfrak g,k)}
 {[\log r(\mathfrak g,k)]^2}\longrightarrow0
 \qquad (k\to\infty).
 \label{S:fixedrankzero}
\end{equation}
In particular, every exceptional family has zero quadratic coefficient.
\end{lemma}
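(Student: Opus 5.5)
For fixed $\mathfrak g$ of Lie rank $\ell$, we will show that the numerator grows only like $\log k$ while the denominator grows like $(\log k)^2$. The ratio is then $O(1/\log k)\to0$.

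\emph{Denominator.} By Appendix~\ref{app:wzw-counts}, $r(\mathfrak g,k)$ counts the solutions $(\lambda_0,\ldots,\lambda_\ell)\in\mathbb Z_{\ge0}^{\ell+1}$ of the comark constraint $\sum_i a_i^\vee\lambda_i=k$, with fixed comarks $a_i^\vee\geq1$ and $a_0^\vee=1$. Setting $a=\max_i a_i^\vee$, every tuple with $\lambda_i\le k/(\ell a)$ for $i\ge1$ extends uniquely to a solution by choosing $\lambda_0$. Hence $r(\mathfrak g,k)\ge \lfloor k/(\ell a)\rfloor^{\ell}$. For $\ell\geq1$ this gives $\log r(\mathfrak g,k)\ge \ell\log k-O(1)\ge \tfrac12\log k$ for $k$ large.

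\emph{Numerator.} Start from the Kac--Peterson formula \eqref{S:S00}:
\[
 S_{00}=|P/Q^\vee|^{-1/2}(k+h^\vee)^{-\ell/2}\prod_{\alpha>0}2\sin\frac{\pi(\alpha,\rho)}{k+h^\vee}.
\]
For each positive root, the argument $x_\alpha=\pi(\alpha,\rho)/(k+h^\vee)$ satisfies $0<x_\alpha\le \pi(\theta,\rho)/(k+h^\vee)<\pi/2$ once $k$ is large, because $(\theta,\rho)=h^\vee-1$. Jordan's inequality $\sin x\ge 2x/\pi$ then gives $2\sin x_\alpha\ge 4(\alpha,\rho)/(k+h^\vee)\ge c_{\mathfrak g}/(k+h^\vee)$, where $c_{\mathfrak g}>0$ depends only on $\mathfrak g$ because $(\alpha,\rho)>0$ for every positive root. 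Therefore
\[
 \log\cD(\mathfrak g,k)=-\log S_{00}\le\Bigl(\tfrac{\ell}{2}+|\Phi_+|\Bigr)\log(k+h^\vee)+O_{\mathfrak g}(1)=\tfrac12\dim\mathfrak g\,\log k+O_{\mathfrak g}(1).
\]
Equivalently, we could bound $d_\lambda\le\dim V_\lambda$, which is polynomial in $k$ of degree $|\Phi_+|$.

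\emph{Conclusion.} Combining the two bounds gives $\log\cD/(\log r)^2\le C_{\mathfrak g}\log k/(\tfrac14(\log k)^2)\to0$. This proves \eqref{S:fixedrankzero}. The exceptional statement follows because each of $G_2,F_4,E_6,E_7,E_8$ is a single fixed algebra, and $\Gamma_{T^2}/q^2=2(\log 2)^2\log\cD/(\log r)^2$.

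The only step needing care is keeping all $\sin$ arguments in $(0,\pi/2]$ so that the lower bound is uniform over roots. This holds for all large $k$, and the finitely many small $k$ are irrelevant to the limit. The argument never needs the sharper statement $r\sim k^\ell/\mathrm{const}$; the crude lattice-box lower bound suffices.
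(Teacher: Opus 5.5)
Your proof is correct and takes essentially the paper's route: both compare the comark-alcove count, $\log r(\mathfrak g,k)\asymp\ell\log k$, with the Kac--Peterson product, $\log\cD(\mathfrak g,k)\leq\tfrac12\dim\mathfrak g\,\log k+O(1)$. The paper quotes the exact asymptotics \eqref{S:fixedrankcount} and \eqref{S:fixedrankdimension}, whereas you use only the one-sided box-counting and Jordan-inequality bounds, which suffice for the limit; the lower bound $0$ on the ratio, which you use implicitly, follows from $\cD\geq d_0=1$.
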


\begin{proof}
Equations~\eqref{S:fixedrankcount} and \eqref{S:fixedrankdimension} give
$\log r(\mathfrak g,k)=\operatorname{rank}(\mathfrak g)\log k+O(1)$ and
$\log\cD(\mathfrak g,k)=\tfrac12\dim(\mathfrak g)\log k+O(1)$.
Their quotient has the limit in Eq.~\eqref{S:fixedrankzero}; the five
exceptional cases are recorded explicitly in
Table~\ref{tab:exceptional-asymptotics}.
\end{proof}

\subsection{Fixed level and growing Lie rank}
\label{sec:fixed-level-regime}

\begin{lemma}[Low orthogonal levels]
\label{lem:low-orthogonal}
For every integer $N\geq5$,
\begin{align}
 \cD(\mathfrak{so}_N,1)&=2,&
 \cD(\mathfrak{so}_N,2)&=2\sqrt N,\nonumber\\
 \cD(\mathfrak{so}_N,3)&=
 \frac{\sqrt{N+1}}{\sin[\pi/(2(N+1))]},&
 \cD(\mathfrak{so}_N,4)&=
 \frac{N+2}{2\sin^2[\pi/(N+2)]}.
 \label{S:lowO}
\end{align}
Consequently, along every fixed-level sequence with $1\leq k\leq4$,
either the sector count is bounded or
$\log\cD(\mathfrak{so}_N,k)=O(\log r(\mathfrak{so}_N,k))$.
\end{lemma}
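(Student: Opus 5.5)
I would prove the four closed forms in Eq.~\eqref{S:lowO} by level--rank duality at the level of the vacuum entry $S_{00}$, rather than by evaluating the Kac--Peterson product directly. The consequence about fixed levels then follows from a crude count of the sector number.

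For $k=1$ the integrable weights are easy to list. For $N$ odd they are $0$, the vector $v$ and the spinor $\sigma$, with $d_v=1$ and $d_\sigma=\sqrt2$, so $\cD^2=1+1+2=4$. For $N$ even there are four simple currents, so again $\cD^2=4$. Hence $\cD=2$, and in both cases the rank is $3$ or $4$, which is bounded.

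For $k=2,3,4$ I would use the Kac--Peterson product \eqref{S:S00} specialized to $B_n$ or $D_n$ with $K=k+h^\vee$. I would then invoke the orthogonal rank-level identity $S_{00}(\mathfrak{so}_N,k)\propto S_{00}(\mathfrak{so}_k,N)$, which is the vacuum-entry rank-level identity promised in the introduction. The proportionality factor is an explicit power of $2$ together with $\sqrt{N}$- and $\sqrt{k}$-type normalizations. Under this duality, $\mathfrak{so}_2$ at level $N$ is a $U(1)$ theory, $\mathfrak{so}_3=\mathfrak{su}(2)$ at level $2N$ enters with the factor $1/2$ from the index, and $\mathfrak{so}_4=\mathfrak{su}(2)\oplus\mathfrak{su}(2)$ at level $(N,N)$.

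Using the $SU(2)_m$ value $S_{00}=\sqrt{2/(m+2)}\,\sin(\pi/(m+2))$, I would read off the three nontrivial cases:
\begin{itemize}
\item for $k=2$, a $U(1)_N$-type factor gives $\cD=2\sqrt N$;
\item for $k=3$, $SU(2)_{N-1}$ with $m+2=N+1$ after the appropriate halving produces $\sin[\pi/(2(N+1))]$ and $\sqrt{N+1}$;
\item for $k=4$, the square of the $SU(2)_N$ value, with $m+2=N+2$, gives $\sin^2[\pi/(N+2)]$ and the prefactor $(N+2)/2$.
\end{itemize}
As a safeguard I would check each formula against small cases. For example, $\Spin(5)_2$ has $\cD^2=20$, matching $2\sqrt5$. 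The parity of $N$ changes the simple-current structure, so the $B$ and $D$ cases should be verified separately.

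The main obstacle is pinning down the exact normalization in the rank-level identity for $S_{00}$: the powers of $2$, the center orders, and the parity dependence. If that identity proves awkward, I would instead compute the trigonometric product directly. Here $\theta_i$ is proportional to $\rho_i/(N+k-2)$, and for small $k$ most angles are close to $\pi/2$. One can then evaluate $\prod\sin$ through cyclotomic identities such as $\prod_{j=1}^{M-1}2\sin(\pi j/M)=M$, and the leftover non-cancelling factors give exactly the displayed $\sin$ terms.

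For the consequence, I would note that at fixed $k$ the number of integrable weights is polynomial in $N$, of degree at most $k$. The reason is that the comark constraint $\sum a_i^\vee\lambda_i\le k$ forces all but $O(k)$ Dynkin labels to vanish. So $r(\mathfrak{so}_N,k)\ge cN$ once it is unbounded, since for $k\ge2$ it grows at least linearly in $N$. Meanwhile $\log\cD=O(\log N)$, because $1/\sin(\pi/M)=O(M)$ makes every formula in Eq.~\eqref{S:lowO} polynomial in $N$. Therefore $\log\cD=O(\log r)$.
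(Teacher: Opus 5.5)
\textbf{There is a genuine gap: the closed forms at $k=2,3,4$ rest on a duality you do not establish and whose normalization you misstate.} Your $k=1$ count and your argument for the consequence are fine. That argument uses polynomial sector counts with $r\gtrsim N$ for $k\geq2$, together with $1/\sin(\pi/M)=O(M)$.

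The problem with the main step is scope. The vacuum identity $\cD_{\mathfrak{so}_N}(k)=\cD_{\mathfrak{so}_k}(N)$ in Eq.~\eqref{S:duality} is proved only for $N,k\geq5$. The dual sides you need are outside that range. $\mathfrak{so}_2$ is abelian. $\mathfrak{so}_3$ inherits a normalization in which its only root is short. $\mathfrak{so}_4$ is semisimple rather than simple. The paper deliberately avoids this route. It evaluates the raw Kac--Peterson product by pairing each $s_x=2\sin(\pi x/M)$, with $M=N+k-2$, with $s_{M-x}$. It then uses the complete products $\prod_{j=1}^{M-1}s_j=M$ and $\prod_{j=1}^{M}s_{j-1/2}=2$. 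This yields $P_{\mathfrak{so}_N}(M)=M^{\ell/2}$, $M^{(\ell-1)/2}$, $s_{1/2}M^{(\ell-1)/2}$ and $s_1^2M^{(\ell-2)/2}$ for $k=1,\dots,4$.

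Your expectation about the normalization is also wrong. You expect the ratio of vacuum entries to carry a power of $2$ and $\sqrt N$-, $\sqrt k$-type factors. Such a factor appears only in type $A$, through $\iota_{A_{n-1}}=n$. Every orthogonal algebra has $|P/Q^\vee|=4$. So what you would need is the exact equality $M^{-\lfloor N/2\rfloor/2}P_{\mathfrak{so}_N}(M)=M^{-\lfloor k/2\rfloor/2}P_{\mathfrak{so}_k}(M)$, where $P_{\mathfrak{so}_k}$ for $k\leq4$ is the formal (empty, $D_1$, $B_1$, $D_2$) product. Proving it there requires rerunning the Jacobi complementary-minor argument with complements of size $0,1,2$. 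You would also have to check the determinant formulas for $B_1,D_1,D_2$. This can be done, but your proposal does not do it. With an unspecified proportionality constant, the closed forms cannot simply be read off.

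Your $k=3$ bullet is incorrect as written. $SU(2)_{N-1}$ has $m+2=N+1$ and gives $\cD=\sqrt{(N+1)/2}/\sin[\pi/(N+1)]$. This differs from the target by the factor $2\sqrt2\cos[\pi/(2(N+1))]$, and no ``halving'' fixes that. The consistent identification is the one you state a paragraph earlier: $\mathfrak{so}_3$ at level $N$ is $SU(2)_{2N}$, with $m+2=2(N+1)$. Its $\cD=\sqrt{N+1}/\sin[\pi/(2(N+1))]$ matches the target exactly. Likewise $k=4$ matches $\cD(SU(2)_N)^2$, and $k=2$ matches an abelian theory with $4N$ sectors. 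In each case the factor is exactly one.

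Your fallback is the paper's actual proof, but you only assert that the leftover factors come out right. For odd $N$ the short-root heights $n-i+\tfrac12$ require the half-integer complete product, not only $\prod_j2\sin(\pi j/M)=M$. The cancellation comes from the reflection pairing $x\leftrightarrow M-x$. It does not come from the angles clustering near $\pi/2$; the angles $\pi\rho_i/M$ are spread roughly uniformly between $0$ and $\pi/2$.
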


\begin{proof}
The raw Kac--Peterson products are evaluated in
Table~\ref{tab:low-orthogonal}; the exact sector counts in
Eq.~\eqref{S:low-orthogonal-counts} give the stated comparison with
$\log r$.
\end{proof}

\begin{lemma}[Fixed-level classical limits]
\label{lem:fixed-level}
Fix a positive affine level $k$ and one of the classical families
$X=A,B,C,D$.  Along any sequence for which the Lie rank $\ell\to\infty$ and
$r_X\to\infty$,
\begin{equation}
 \log r_X=d_X(k)\log\ell+O(1),
 \qquad
 \log\cD_X=O(\log\ell),
 \label{S:fixedlevel-growth}
\end{equation}
where
$d_A(k)=d_C(k)=k$ and
$d_B(k)=d_D(k)=\lfloor k/2\rfloor$ for $k\geq2$.
% At orthogonal level one the sector counts are bounded, so those sequences are
% excluded by the hypothesis $r_X\to\infty$.
Consequently,
\begin{equation}
 \frac{\log\cD_X}{(\log r_X)^2}\longrightarrow0.
 \label{S:fixedlevelzero}
\end{equation}
\end{lemma}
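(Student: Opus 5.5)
We prove the two estimates in \eqref{S:fixedlevel-growth} separately; the limit \eqref{S:fixedlevelzero} then follows at once. Indeed, when $d_X(k)\geq1$ we get $\log\cD_X/(\log r_X)^2=O(\log\ell)/(d_X(k)\log\ell)^2\to0$. The orthogonal cases with $d_X(k)=0$, namely $k\le1$, have bounded sector count by Lemma~\ref{lem:low-orthogonal}, so the hypothesis $r_X\to\infty$ excludes them.

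\emph{Sector counts.} We use the comark generating functions of Appendix~\ref{app:wzw-counts}. An integrable weight at level $k$ is a tuple of nonnegative Dynkin labels $(\lambda_0,\ldots,\lambda_\ell)$ with $\sum_i a_i^\vee\lambda_i=k$.

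For type $A$, all comarks equal one, so $r_A=\binom{\ell+k}{k}$. For type $C$, Eq.~\eqref{eq:binomial} gives $r_C=\binom{\ell+k}{k}$. For fixed $k$ both are polynomials in $\ell$ of degree $k$ with leading coefficient $1/k!$. Hence $\log r=k\log\ell+O(1)$.

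For types $B$ and $D$, all but a bounded number of comarks equal $2$; there are three such labels for $B$ and four for $D$, with comark $1$. Any admissible weight therefore uses at most $\lfloor k/2\rfloor$ units on the comark-$2$ labels. We count these as multisets of size $j\le\lfloor k/2\rfloor$ from roughly $\ell$ labels, distributing the remaining level over the finitely many comark-$1$ labels. The count is a polynomial in $\ell$ of degree $\lfloor k/2\rfloor$ with positive leading coefficient. This gives $d_B=d_D=\lfloor k/2\rfloor$.

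\emph{Total quantum dimension.} We use the bound $S_{00}^{-1}\le\sum_\lambda S_{0\lambda}^{-2}S_{00}^{2}\ldots$ rather than attempt the product directly. The cleanest route is level--rank duality, via the vacuum-entry identities cited in the global argument. At fixed $k$, $\cD(\mathfrak g_\ell,k)$ agrees, up to a factor polynomial in $\ell$, with the total quantum dimension of a dual theory of fixed rank ($\sim k$) and level growing like $\ell$:
\begin{itemize}
\item $SU(n)_k\leftrightarrow SU(k)_n$, with a factor $\sqrt{n/k}$ coming from the $U(1)$ part;
\item $Sp(2n)_k\leftrightarrow Sp(2k)_n$, exactly;
\item $\Spin(N)_k\leftrightarrow\Spin(k)_N$, up to bounded or $\sqrt N$ factors; the small cases $k\le4$ are given explicitly by Lemma~\ref{lem:low-orthogonal}.
\end{itemize}
Lemma~\ref{lem:fixed-rank}, or rather its underlying estimate \eqref{S:fixedrankdimension}, then gives $\log\cD=\tfrac12\dim(\mathfrak g_{\rm dual})\log\ell+O(1)=O(\log\ell)$, since the dual algebra is fixed.

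As a backup that avoids duality, we could bound the product formula directly. We have $\cD^2=\sum_\lambda d_\lambda^2\le r\cdot\max_\lambda d_\lambda^2$, and each quantum dimension at fixed level is bounded by a polynomial in $\ell$ of degree depending only on $k$. This holds because $d_\lambda$ is a ratio of $q$-dimensions with $q=e^{\ii\pi/K}$, and only $O(k^2)$ root factors fail to cancel.

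\emph{Main obstacle.} The delicate step is the $\log\cD$ bound. We must make sure the rank-level correspondence is stated exactly at the level of $S_{00}$, including the normalization prefactors and the orthogonal spin/center subtleties, so that the correction factor is genuinely $\ell^{O(1)}$ rather than exponential. We would verify it by writing both Kac--Peterson products with $K=k+h^\vee$ and matching the sine factors.
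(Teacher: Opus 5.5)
Your main argument is the paper's own proof. You get the counts from the comark generating functions: degree-$k$ binomials in types $A,C$, and the top summand $j=\lfloor k/2\rfloor$ dominating in $B,D$. You get $\log\cD_X=O(\log\ell)$ from the vacuum-entry rank--level identities \eqref{S:duality}, which turn the growing rank into the level of a fixed algebra so that Lemma~\ref{lem:fixed-rank} applies. Orthogonal levels $k\le4$ go through Lemma~\ref{lem:low-orthogonal}. Your hedge ``up to bounded or $\sqrt N$ factors'' is harmless but not needed: the paper proves exact equality $\cD_{\mathfrak{so}_N}(k)=\cD_{\mathfrak{so}_k}(N)$ for $N,k\ge5$.

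Three points need repair.

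\begin{enumerate}
\item \textbf{Type $A$ at $k=1$.} The dual ``$SU(1)_n$'' is not a simple Lie algebra, so Lemma~\ref{lem:fixed-rank} does not literally apply. The paper handles this case separately via $\cD(SU(n)_1)=\sqrt n$, and you should do the same.

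\item \textbf{The unfinished inequality.} Delete $S_{00}^{-1}\le\sum_\lambda S_{0\lambda}^{-2}S_{00}^{2}\ldots$. As written it gives $\cD\le\sum_\lambda d_\lambda^{-2}\le r$, which is false already for $SU(n)_3$: there $\log\cD=\tfrac92\log n+O(1)$ while $\log r=3\log n+O(1)$.

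\item \textbf{The backup route.} Your ``backup that avoids duality'' fails as justified. In the rank-$\ell$ Weyl product for $d_\lambda$, the number of positive roots with $(\lambda,\alpha)\neq0$ is not $O(k^2)$. For $\lambda=\Lambda_m$ of $SU(n)_1$ with $m=\lfloor n/2\rfloor$ there are $m(n-m)\approx n^2/4$ such factors, and $d_\lambda=1$ only because they telescope. A crude bound on each factor would not give a polynomial bound. The count $O(k^2)$ you have in mind is the number of positive roots of the dual algebra $A_{k-1}$. So the polynomial bound on $d_\lambda$ really rests on the same rank--level duality and is not independent of it. Since the main route already suffices, drop the backup.
\end{enumerate}
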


\begin{proof}
We first verify the sector-count statement.  At fixed $k$, the two binomial
formulas in Eq.~\eqref{S:rAC} are polynomials of degree $k$ in $n$.
In the exact sums Eqs.~\eqref{S:rank-B-sum} and~\eqref{S:rank-D-sum}, the
largest summation index is $\kappa=\lfloor k/2\rfloor$; its summand has degree
$\kappa$ in $n$ and a nonzero leading coefficient, while every earlier
summand has smaller degree.  Hence Eq.~\eqref{S:fixedlevel-growth} gives the
stated values of $d_X(k)$.  When $k=1$ and hence $\lfloor k/2\rfloor=0$, the exact orthogonal counts are
$r(B_n,1)=3$ and $r(D_n,1)=4$.

We next bound the total quantum dimension, case by case.  For type $A$ at
level one, $\cD(SU(n)_1)=\sqrt n$.  For fixed $k\geq2$, the exact identity
$\cD_A(n,k)=\sqrt{n/k}\,\cD_A(k,n)
 \label{S:fixedlevel-A-duality}$
turns the growing-rank theory into the fixed Lie algebra $A_{k-1}$ at growing
level $n$; Lemma~\ref{lem:fixed-rank} then gives
$\log\cD_A(k,n)=O(\log n)$, and the square-root factor contributes only
$\tfrac12\log(n/k)$.  Similarly,
$\cD_C(n,k)=\cD_C(k,n)$ reduces every fixed-level type-$C$ sequence to the
fixed algebra $C_k$ at growing level.  For the orthogonal families with
$k\geq5$,
$\cD_{\mathfrak{so}_N}(k)=\cD_{\mathfrak{so}_k}(N)
 \label{S:fixedlevel-orthogonal-duality}$
does the same.  The remaining orthogonal levels $1\leq k\leq4$ are covered
directly by Lemma~\ref{lem:low-orthogonal}; level one has bounded sector count,
and at levels two through four it gives
$\log\cD=O(\log r)=O(\log\ell)$.  Since $n\asymp\ell$ in every classical
family, this proves both estimates in Eq.~\eqref{S:fixedlevel-growth}, and
therefore Eq.~\eqref{S:fixedlevelzero}.
\end{proof}

\subsection{Both parameters grow with a vanishing or divergent ratio}
\label{sec:unbalanced-regime}

\begin{lemma}[Large-level classical estimate]
\label{lem:large-level}
There is an absolute constant $C_0$ such that, for every classical simple Lie
algebra $\mathfrak g$ of Lie rank $\ell\geq2$ and every affine level $k$ with
$k/\ell\geq2$,
\begin{equation}
 \log\cD(\mathfrak g,k)\leq
 C_0\ell^2[1+\log(k/\ell)]+C_0\ell\log(k+\ell).
 \label{S:largelevel}
\end{equation}
\end{lemma}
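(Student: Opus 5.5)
We bound $\log\cD=-\log S_{00}$ term by term through the Kac--Peterson product. For a classical algebra of Lie rank $\ell$ with shifted level $K=k+h^\vee$,
\begin{equation*}
 S_{00}=c_{\mathfrak g}\,K^{-\ell/2}\prod_{\alpha>0}2\sin\frac{\pi(\rho,\alpha)}{K},
\end{equation*}
where $c_{\mathfrak g}$ is a lattice-volume constant that depends only on the family and on $\ell$ (a power of $2$ times a square root of the determinant of the coroot lattice). This gives
\begin{equation*}
 \log\cD=\tfrac{\ell}{2}\log K-\log c_{\mathfrak g}-\sum_{\alpha>0}\log\!\Bigl(2\sin\frac{\pi(\rho,\alpha)}{K}\Bigr).
\end{equation*}
The first two terms are $O(\ell\log(k+\ell))$. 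This uses $K\le k+2\ell$ and $|\log c_{\mathfrak g}|=O(\ell)$, both of which can be read off from the explicit normalizations collected in Appendix~\ref{app:wzw-data}. These terms are therefore absorbed in $C_0\ell\log(k+\ell)$.

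For the root product we use the hypothesis $k/\ell\ge2$. Every positive root has height $(\rho,\alpha^\vee)$ at most $h-1\le 2\ell$, so $(\rho,\alpha)\le 2\ell$ in the normalization in which long roots have squared length two. It follows that $x_\alpha:=\pi(\rho,\alpha)/K\le 2\pi\ell/(k+\ell)\le 2\pi/3$, which keeps each argument away from $\pi$. On $(0,2\pi/3]$ we have $2\sin x\ge c\,x$ for an absolute constant $c>0$, because $\sin x/x$ is bounded below on that interval. Hence
\begin{equation*}
 -\log\bigl(2\sin x_\alpha\bigr)\le \log\frac{K}{\pi(\rho,\alpha)}+O(1).
\end{equation*}
Summing over the $|\Phi_+|=O(\ell^2)$ positive roots gives $|\Phi_+|\log K-\sum_{\alpha>0}\log(\rho,\alpha)+O(\ell^2)$.

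We then need a lower bound on $\sum_{\alpha>0}\log(\rho,\alpha)$ of the form $|\Phi_+|\log\ell-O(\ell^2)$. We compute it from the explicit root lists of Table~\ref{tab:proportional-summary}. For $A$, $B$, $C$ and $D$ the values $(\rho,\alpha)$ on the pair roots $e_i\mp e_j$ are, up to a factor $1/2$ or $1$, the integers $j-i$ and $2n+\text{const}-i-j$. Each value $m\in\{1,\dots,2n\}$ occurs with multiplicity $O(n)$, and the multiplicity profile is the same height density used in Appendix~\ref{app:wzw-asymptotics}. This yields $\sum\log(\rho,\alpha)=\sum_{m}\mathrm{mult}(m)\log m\ge |\Phi_+|\log\ell-O(\ell^2)$, because $\sum_{m\le M}(M-m)\log m=\tfrac{M^2}{2}\log M-O(M^2)$ by Stirling. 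The $O(\ell)$ short or long roots of types $B$ and $C$ contribute at most $O(\ell\log(k+\ell))$ in any case. Combining these estimates,
\begin{equation*}
 \log\cD\le |\Phi_+|\log\frac{K}{\ell}+O(\ell^2)+O(\ell\log(k+\ell)).
\end{equation*}
Since $K/\ell\le k/\ell+2\le 2k/\ell$ and $|\Phi_+|\le 2\ell^2+\ell$, this is bounded by $C_0\ell^2[1+\log(k/\ell)]+C_0\ell\log(k+\ell)$. Rank $\ell\ge2$ ensures $\ell\ge1$ in the logarithms, and the low-rank coincidences contribute only finitely many families, so the constants are absolute.

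The main obstacle is keeping all constants uniform in $\ell$. The inequality $2\sin x\ge cx$ must hold with one fixed $c$ for all roots, which is exactly why we need every $x_\alpha$ to stay below a fixed bound less than $\pi$; this is where $k\ge2\ell$ enters. The Stirling-type lower bound on $\sum_{\alpha>0}\log(\rho,\alpha)$ must also hold family by family with an $O(\ell^2)$ error, and not merely asymptotically. We would handle the latter by writing the height multiplicities exactly for each family and comparing with $\int x\log x$, rather than invoking the Riemann-sum asymptotics of Appendix~\ref{app:wzw-asymptotics}, which are only stated for $t$ in compact sets.
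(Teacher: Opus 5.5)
Your proposal is correct and is essentially the paper's argument. The hypothesis $k\geq2\ell$ keeps every sine argument uniformly away from $\pi$; the paper notes the arguments stay below $\pi/2$ and uses $\sin(\pi x)\geq2x$, while you use $2\sin x\geq cx$ on $(0,2\pi/3]$. This reduces the root product to $\sum_{\alpha>0}\log\bigl(K/(\rho,\alpha)\bigr)$, which is controlled because the heights lie on $O(\ell)$ integer or half-integer sites, each with multiplicity $O(\ell)$, while the Kac--Peterson prefactors supply the $\ell\log(k+\ell)$ term.

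The multiplicity bound alone already gives $\sum_{\alpha>0}\log\bigl(\ell/(\rho,\alpha)\bigr)\leq C\ell\sum_{j\leq2\ell}\log(2\ell/j)=O(\ell^2)$. So the exact family-by-family density comparison you propose at the end is not needed.
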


\begin{proof}
For the four rank-$\ell$ root systems, the largest value of
$(\rho,\alpha)$ is respectively $\ell$, $2\ell-2$, $\ell$, and $2\ell-3$ for
$A_\ell,B_\ell,C_\ell,D_\ell$.  Since $k\geq2\ell$ and
$h^\vee=\ell+1,2\ell-1,\ell+1,2\ell-2$, respectively, every sine argument
$(\rho,\alpha)/(k+h^\vee)$ is strictly below $1/2$.  We may therefore use
$\sin(\pi x)\geq2x$ directly.  The values $2(\rho,\alpha)$ occupy $O(\ell)$
consecutive integer or half-integer sites, each with multiplicity at most
 $C_0\ell$.  The root contribution is then bounded by
 $C_0\ell\sum_{j\leq C_0\ell}\log[(k+C_0\ell)/j]$, which gives the first term in
Eq.~\eqref{S:largelevel}; the Kac--Peterson prefactors give the second.  At
this point the constant can be enlarged once to cover all four classical
families and the finitely many ranks suppressed by the uniform height
description.
\end{proof}

\begin{proposition}[Unbalanced two-parameter classical limits]
\label{prop:boundary}
Let $X=A,B,C,D$ and let $(n_j,k_j)$ be a sequence with
$n_j,k_j\to\infty$.  Write $\ell_j=n_j-1$ for $X=A$ and $\ell_j=n_j$ for
$X=B,C,D$.  If $k_j/\ell_j\to0$ or $k_j/\ell_j\to\infty$, then
\begin{equation}
 \frac{\log\cD_X(n_j,k_j)}{[\log r_X(n_j,k_j)]^2}\longrightarrow0.
 \label{S:boundaryzero}
\end{equation}
\end{proposition}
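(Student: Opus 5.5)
We treat the two boundary directions separately and use rank-level duality to reduce the small-ratio case to the large-ratio one.

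\emph{Case $k_j/\ell_j\to\infty$.} For all large $j$ we have $k_j/\ell_j\geq2$, so Lemma~\ref{lem:large-level} gives
\begin{equation*}
 \log\cD_X\leq C_0\ell^2[1+\log(k/\ell)]+C_0\ell\log(k+\ell).
\end{equation*}
For the denominator we need a matching lower bound on $\log r_X$. From the exact counts in Appendix~\ref{app:wzw-counts} we have $r_A=\binom{n+k-1}{n-1}$ and $r_C=\binom{n+k}{n}$. For $B$ and $D$, the comark sums dominate a binomial of the form $\binom{\lfloor k/2\rfloor+n-c}{n-c}$ with $c$ bounded. In every case
\begin{equation*}
 \log r_X\geq \ell\log(k/\ell)+\ell-O(\log(k+\ell)),
\end{equation*}
using $\binom{a+b}{b}\geq (a/b)^b$ with $a/b$ the large ratio. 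Put $L=\log(k/\ell)\to\infty$. Then the numerator is $O(\ell^2 L+\ell\log k)$ and the denominator is at least $c\,\ell^2L^2$. Since $\log k=\log\ell+L$ and $\ell\to\infty$, we get $\ell\log k/(\ell^2L^2)\to0$ and $\ell^2L/(\ell^2L^2)=1/L\to0$. This gives Eq.~\eqref{S:boundaryzero}.

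\emph{Case $k_j/\ell_j\to0$.} Here we invoke the same exact rank-level identities used in Lemma~\ref{lem:fixed-level}.

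For type $C$, $\cD_C(n,k)=\cD_C(k,n)$ and $r_C(n,k)=r_C(k,n)$, so the quotient is literally that of the dual sequence $(k_j,n_j)$. That sequence has divergent ratio, and the first case applies.

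For type $A$, $\cD_A(n,k)=\sqrt{n/k}\,\cD_A(k,n)$. The extra $\tfrac12\log(n/k)$ is $O(\log r)$. The counts are also symmetric up to bounded index shifts, since $\binom{n+k-1}{n-1}$ versus $\binom{n+k-1}{k-1}$ differ by the factor $n/k$, which again contributes only $O(\log r)$.

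For the orthogonal families, $\cD_{\mathfrak{so}_N}(k)=\cD_{\mathfrak{so}_k}(N)$ holds for $k\geq5$, and $k\to\infty$ here. It then suffices to show $\log r(\mathfrak{so}_N,k)\gtrsim \log r(\mathfrak{so}_k,N)$ up to constants and $O(\log)$ terms. Both are bounded below by the entropy $\tfrac{k}{2}\log(N/k)$-type binomial count, which is all the first-case argument used (its upper bound on $\cD$ is in terms of the dual rank $k/2$).

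The main obstacle is this orthogonal small-ratio step. Lemma~\ref{lem:large-level} applied to $\mathfrak{so}_k$ at level $N$ bounds $\log\cD$ by $O(k^2\log(N/k)+k\log N)$. We need $\log r(\mathfrak{so}_N,k)\geq c\,k\log(N/k)$, which I would extract from the largest summand ($\kappa=\lfloor k/2\rfloor$) of Eqs.~\eqref{S:rank-B-sum} and~\eqref{S:rank-D-sum}. That summand is a binomial $\binom{n+\kappa-O(1)}{\kappa}\geq (n/\kappa)^{\kappa}/O(1)$. With this lower bound, the ratio is $O\bigl(1/\log(N/k)+\log N/(k\log^2(N/k))\bigr)$. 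This tends to $0$ because $k\to\infty$ and $\log N\leq \log k+\log(N/k)$. I would check the parity and shift conventions of the $B$/$D$ counts carefully, since an off-by-constant in the effective dual rank only affects constants.
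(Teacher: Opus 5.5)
Your proposal is correct and follows essentially the paper's own proof. For $k/\ell\to\infty$ you use Lemma~\ref{lem:large-level} together with binomial lower bounds $\log r_X\gtrsim\ell\log(k/\ell)$; for $\ell/k\to\infty$ you use the vacuum rank--level identities, and for $B,D$ a direct count $\log r_X\gtrsim k\log(\ell/k)$ against the large-level bound on the dual algebra. Your exact symmetry $r_C(n,k)=r_C(k,n)$ and the identity $r_A(n,k)=(n/k)\,r_A(k,n)$ give a slightly tidier treatment of types $A,C$. One small slip: $\binom{a+b}{b}\geq(a/b)^b$ does not give your extra $+\ell$, and for $B,D$ the halving $\kappa\approx k/2$ costs about $\ell\log 2$; this is harmless, since you only use $\log r_X\geq c\,\ell\log(k/\ell)$.
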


\begin{proof}
We first treat the case $k/\ell\to\infty$.  For integers
$1\leq a\leq b$, we use the binomial coefficient bounds
\begin{equation}
 \left(1+\frac ba\right)^a
 \leq \binom{a+b}{a}
 \leq \left[e\left(1+\frac ba\right)\right]^a.
 \label{S:binomial-boundary-bounds}
\end{equation}
Taking logarithms and using $\log(1+b/a) \geq \log 2$ gives
\begin{equation}
 a\log\left(1+\frac ba\right)
 \leq\log\binom{a+b}{a}
 \leq \left(1+\frac1{\log 2}\right)a\log\left(1+\frac ba\right).
 \label{S:log-binomial-boundary-bounds}
\end{equation}

For $A$ and $C$, substitute $(a,b)=(n-1,k)$ and $(n,k)$, respectively,
using Eq.~\eqref{S:rAC}.  Since $n\asymp\ell$ and
$k/\ell\to\infty$, Eq.~\eqref{S:log-binomial-boundary-bounds} gives
$\log r_A,\log r_C\asymp\ell\log(k/\ell)$.  For $B$ and $D$, put
$\kappa=\lfloor k/2\rfloor$ and use $(a,b)=(n-2,\kappa)$ or
$(n-3,\kappa)$ in the lower binomials of Eq.~\eqref{S:rBD}.  Its upper
bounds add only $2\log(k+2)$ or $3\log(k+3)$.  These additions are lower
order because
\begin{equation}
 \frac{\log k}{\ell\log(k/\ell)}
 =\frac{\log\ell}{\ell\log(k/\ell)}+\frac1\ell
 \longrightarrow0.
 \label{S:rankboundary-polynomial-factor}
\end{equation}
Thus, for every classical family,
\begin{equation}
 \log r_X(n,k)\asymp\ell\log(k/\ell),
 \qquad k/\ell\to\infty.
 \label{S:rankboundary}
\end{equation}

After discarding finitely many terms, $k/\ell\geq2$.  Therefore
Lemma~\ref{lem:large-level} and Eq.~\eqref{S:rankboundary} give
\begin{equation}
 \frac{\log\cD_X}{(\log r_X)^2}
 =O\!\left(\frac1{\log(k/\ell)}\right)
 +O\!\left(
 \frac{\log(k+\ell)}{\ell\log^2(k/\ell)}
 \right)\longrightarrow0.
 \label{S:boundary-large-level-ratio}
\end{equation}
For the second term, use
$\log(k+\ell)=\log\ell+\log(1+k/\ell)$.

It remains to consider $\ell/k\to\infty$.  The same sector-count calculation,
now using $k$ (or $\kappa\asymp k$ in types $B,D$) as the smaller binomial
parameter, gives
\begin{equation}
 \log r_X(n,k)\asymp k\log(\ell/k).
 \label{S:rankboundary-reversed}
\end{equation}
Apply the rank--level identities in Eq.~\eqref{S:duality}.  The dual theory
has Lie rank of order $k$ and level of order $\ell$, so it is already proved in the large-level case.  Types $C$ and $B,D$ preserve $\cD$ exactly.
In type $A$, duality contributes the additional term
$\tfrac12\log(n/k)=O(\log(\ell/k))$, which is negligible compared with
$[k\log(\ell/k)]^2$.  Apply Lemma~\ref{lem:large-level} to the dual
parameters and use Eq.~\eqref{S:rankboundary-reversed} for the denominator;
the same estimate as in Eq.~\eqref{S:boundary-large-level-ratio} then proves
Eq.~\eqref{S:boundaryzero} when $\ell/k\to\infty$.
\end{proof}

\subsection{From parameter regimes to the global cutoff envelope}
\label{sec:envelope}

\begin{lemma}[Finiteness at fixed cutoff]
\label{lem:fixed-cutoff}
For every fixed integer $R$,
\begin{equation}
 \sup\{\cD(\mathfrak g,k):r(\mathfrak g,k)\leq R\}<\infty.
 \label{S:boundedcutoff}
\end{equation}
\end{lemma}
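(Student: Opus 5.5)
The plan is to show that only finitely many pairs $(\mathfrak g,k)$ satisfy $r(\mathfrak g,k)\leq R$. A supremum over a finite set of finite numbers is then finite, since each $\cD(\mathfrak g,k)=1/S_{00}$ is a finite positive real. It therefore suffices to show that $r(\mathfrak g,k)\to\infty$ along every sequence of distinct admissible pairs. Recall that $\mathfrak g$ ranges over the nonduplicated simple algebras, so distinct pairs either have different Lie algebras or different levels.

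I would split into two cases, in parallel with the regime decomposition above. First, suppose the Lie algebra is fixed. Then $k\to\infty$ along the sequence, and the fixed-rank count used in Lemma~\ref{lem:fixed-rank} gives $\log r(\mathfrak g,k)=\operatorname{rank}(\mathfrak g)\log k+O(1)\to\infty$. More elementarily, the weights $j\omega_1$ with $0\leq j\leq k/a_1^\vee$ are integrable, so $r\geq 1+\lfloor k/a_1^\vee\rfloor$. This covers the five exceptional algebras and any classical algebra of bounded rank. Second, suppose the Lie rank $\ell\to\infty$. Then the algebra is eventually classical, and I would use the explicit counts. For types $A$ and $C$, $r=\binom{n-1+k}{k}$ or $\binom{n+k}{k}$, which is at least $n$ for every $k\geq1$. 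For types $B$ and $D$ with $k\geq2$, the exact sums \eqref{S:rank-B-sum} and \eqref{S:rank-D-sum} contain the weights $\omega_1,\omega_2,\dots$ of comark at most two. So, by Lemma~\ref{lem:fixed-level}'s count, $r\gtrsim \ell^{\lfloor k/2\rfloor}\geq \ell$, which diverges.

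The main obstacle is orthogonal level one, where $r(B_n,1)=3$ and $r(D_n,1)=4$ are bounded while the rank grows. Here there are infinitely many pairs with $r\leq R$, so finiteness of the index set fails, and I would instead bound $\cD$ directly. Lemma~\ref{lem:low-orthogonal} gives $\cD(\mathfrak{so}_N,1)=2$ for all $N\geq5$, so this infinite subfamily contributes the single value $2$ to the supremum. After removing the level-one orthogonal pairs, the argument above shows that only finitely many pairs remain with $r\leq R$. The supremum is then a maximum over a finite set together with the constant $2$, which proves \eqref{S:boundedcutoff}.
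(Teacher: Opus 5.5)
Your proof is correct and follows the same route as the paper. Apart from the orthogonal level-one theories, which all have $\cD(\mathfrak{so}_N,1)=2$, only finitely many pairs $(\mathfrak g,k)$ satisfy $r(\mathfrak g,k)\leq R$, so the supremum is a maximum over a finite set together with the value $2$. One small point: citing Lemma~\ref{lem:fixed-level} for $r\gtrsim\ell^{\lfloor k/2\rfloor}$ is not uniform when $k$ varies along the sequence. However, your direct count of fundamental weights with comark at most two (equivalently the lower bound in Eq.~\eqref{S:rBD}) already gives $r\geq\ell$ uniformly for $k\geq2$, which is all you need.
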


\begin{proof}
For types $A,C$, the binomial formulas in Eq.~\eqref{S:rAC} bound both the
Lie-size parameter and the level.  For $B,D$ with $k\geq2$, the lower bounds
in Eq.~\eqref{S:rBD} first bound the Lie rank.  Setting all comark-two labels
to zero gives the additional bounds
\begin{equation}
 r_B(n,k)\geq\binom{k+2}{2},
 \qquad r_D(n,k)\geq\binom{k+3}{3},
\end{equation}
which then bound $k$.  The only unbounded-rank cases left are level one:
\begin{equation}
 r(B_n,1)=3,\qquad r(D_n,1)=4,\qquad
 \cD(B_n,1)=\cD(D_n,1)=2.
\end{equation}
For each of the five exceptional algebras, the sector count tends to infinity
with $k$.  These observations prove Eq.~\eqref{S:boundedcutoff}.
\end{proof}

\begin{proposition}[Sharp sequential bound]
\label{prop:sequential-upper}
For any sequence of simple WZW categories $\cC_j$, set
$r_j:=r(\cC_j)$ and $\cD_j:=\cD(\cC_j)$.  If $r_j\to\infty$, then
\begin{equation}
 \limsup_{j\to\infty}
 \frac{2\log\cD_j}{(\log_2r_j)^2}
 \leq\frac{7\zeta(3)}{4\pi^2}.
 \label{S:sequential}
\end{equation}
The bound is sharp: for the balanced symplectic sequence
$\cC_n=\cC(C_n,n)$, the limit in Eq.~\eqref{S:sequential} equals
$7\zeta(3)/(4\pi^2)$.
\end{proposition}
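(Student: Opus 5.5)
The plan is a subsequence argument that sends every sequence into one of the regimes already analysed. Let $L$ be the $\limsup$ in Eq.~\eqref{S:sequential}, and pick a subsequence along which the ratio $2\log\cD_j/(\log_2 r_j)^2$ converges to $L$; it is enough to show that this limit is at most $7\zeta(3)/(4\pi^2)$. Only finitely many simple Lie algebra \emph{types} occur (the families $A,B,C,D$ and the five exceptional algebras), so after passing to a further subsequence all $\cC_j$ have the same type $X$. If $X$ is exceptional, the Lie algebra is fixed. Since $r_j\to\infty$ and $r(\mathfrak g,k)$ is finite for each $k$, we must have $k_j\to\infty$, and Lemma~\ref{lem:fixed-rank} gives $L=0$.

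Now let $X$ be classical with Lie rank $\ell_j$ and level $k_j$. Compactness of $[0,\infty]$, together with the dichotomy ``bounded or divergent'' for each of $\ell_j$ and $k_j$, lets me pass to a subsequence that lies in exactly one of the four regimes listed before Subsection~\ref{sec:fixed-rank-regime}.
\begin{enumerate}
\item[(i)] $\ell_j$ bounded. Refine to a subsequence on which $\ell_j$ is constant. Then $\mathfrak g$ is fixed and $k_j\to\infty$, so Lemma~\ref{lem:fixed-rank} applies.
\item[(ii)] $k_j$ bounded. Refine to constant $k$. Then $\ell_j\to\infty$, and Lemma~\ref{lem:fixed-level} gives ratio $0$, because $r_j\to\infty$ by hypothesis.
\item[(iii)] Both diverge and $k_j/\ell_j\to0$ or $\infty$. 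Proposition~\ref{prop:boundary} gives ratio $0$.
\item[(iv)] Both diverge with $k_j/\ell_j\to\tau\in(0,\infty)$. This is the proportional regime.
\end{enumerate}
In cases (i)--(iii) the conclusion $\log\cD/(\log r)^2\to0$ converts directly into $2\log\cD_j/(\log_2r_j)^2\to0$, since the two ratios differ only by the constant factor $2(\log2)^2$.

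In case (iv), set $t=\lim t_j$, where $t_j$ is the effective ratio from Table~\ref{tab:proportional-summary}. This is $\tau$ for types $A,C$ and $\tau/2$ for types $B,D$, with bounded index shifts between $n$ and $\ell$. The asymptotics in Eq.~\eqref{eq:familyreduction} hold uniformly for $t$ in compact subsets of $(0,\infty)$. Since $t_j$ eventually stays in a compact neighbourhood of $t$, and $H,J$ are continuous, the ratio converges to $\tfrac12\mathcal E(s)$ or $\mathcal E(s)$ with $s=1/(1+t)\in(0,1)$, by Eq.~\eqref{eq:efficiency}. Lemma~\ref{lem:sharp} bounds $\mathcal E(s)\le(\log2)^2C_\star/\pi^2=7\zeta(3)/(4\pi^2)$, so $L\le7\zeta(3)/(4\pi^2)$ in every case. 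Sharpness comes from Eq.~\eqref{eq:C-saturates}, i.e.\ the computation in Eq.~\eqref{eq:ratioodd}: $k=n$ gives $s=1/2$ and $A(1/2)=7\zeta(3)/4$, with $\log_2 r_n = 2n+O(\log n)$.

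The main obstacle is the passage from $t_j\to t$ to the limit in case (iv). Eq.~\eqref{eq:familyreduction} is stated at fixed $t$ with uniform error terms, whereas along the sequence $t_j$ varies. I would handle this by writing $\log r_X=n_jH(t_j)+O(\log n_j)$ and $\log\cD_X=\mu_Xn_j^2J(t_j)+O(n_j\log n_j)$, with constants uniform on the compact neighbourhood of $t$, and then using continuity of $H$ and $J$ at $t$.

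A smaller point is bookkeeping at low rank. The coincidences $B_2=C_2$ and $D_3=A_3$, and the allowed ranges $n\ge3$ or $n\ge4$, affect only bounded $\ell$, which is already covered by case (i). Likewise, orthogonal level one has bounded $r$ and is excluded by the hypothesis $r_j\to\infty$.
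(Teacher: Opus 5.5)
Your proposal is correct and follows essentially the same route as the paper. You reduce to a subsequence of one Lie type, split classical sequences into the fixed-rank, fixed-level, unbalanced and proportional regimes handled by Lemma~\ref{lem:fixed-rank}, Lemma~\ref{lem:fixed-level}, Proposition~\ref{prop:boundary} and Lemma~\ref{lem:sharp}, and obtain sharpness from $Sp(2n)_n$; your treatment of the varying ratio $t_j$ via the uniform error terms in Eq.~\eqref{eq:familyreduction} and continuity of $H,J$ makes explicit a step the paper leaves implicit.
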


\begin{proof}
Choose a subsequence on which the left-hand side of
Eq.~\eqref{S:sequential} converges to its limsup.  Since there are only
finitely many Lie families, one family occurs infinitely often; retain those
terms.  This does not change the limit.

For an exceptional family the Lie rank is fixed, so
Lemma~\ref{lem:fixed-rank} gives the limit zero.  For a classical family,
write the parameters as $(n,k)$ and retain terms on which each integer
parameter is either fixed or tends to infinity.  Fixed $n$ is covered by
Lemma~\ref{lem:fixed-rank}, and fixed $k$ by
Lemma~\ref{lem:fixed-level}.  If both parameters tend to infinity, retain
terms on which the effective ratio $t$ converges in $[0,\infty]$.  A limit in
$(0,\infty)$ is bounded by Eq.~\eqref{eq:familycoefficients}, while the two
endpoint limits give zero by Proposition~\ref{prop:boundary}.  These cases
exhaust all possibilities and prove the upper bound.

For the balanced symplectic sequence,
\begin{equation}
 r_n=\binom{2n}{n},
 \qquad
 \log r_n=2n\log 2-\frac12\log(\pi n)+O(n^{-1}),
\end{equation}
while the proportional asymptotics give
\begin{equation}
 2\log\cD(C_n,n)=\frac{7\zeta(3)}{\pi^2}n^2+O(n\log n).
\end{equation}
Their ratio converges to $7\zeta(3)/(4\pi^2)$, proving sharpness.
\end{proof}

\begin{proposition}[Passage to the global cutoff]
\label{prop:cutoff-passage}
The sharp sequential bound, together with Lemma~\ref{lem:fixed-cutoff}, gives
\begin{equation}
 \lim_{R\to\infty}
 \frac{F_{\WZW}(R)}{(\log_2R)^2}
 =\frac{7\zeta(3)}{4\pi^2}.
 \label{S:envelope}
\end{equation}
\end{proposition}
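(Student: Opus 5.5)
We prove the limit by establishing the matching $\liminf$ lower bound and $\limsup$ upper bound separately. Throughout, write $c_\star=7\zeta(3)/(4\pi^2)$ and $q_R=\log_2R$.

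\emph{Lower bound.} Given large $R$, let $n(R)$ be the largest $n$ with $r_n=\binom{2n}{n}\le R$. Then $\cC(C_{n(R)},n(R))$ is admissible, so $F_{\WZW}(R)\ge 2\log\cD(C_{n(R)},n(R))$. Maximality of $n(R)$ gives $R<r_{n(R)+1}\le 4r_{n(R)}$, hence $q_R\le \log_2 r_{n(R)}+2$. Since $\log_2 r_{n(R)}\to\infty$, the additive $2$ is negligible, and the sharpness part of Proposition~\ref{prop:sequential-upper} gives
\[
\liminf_{R\to\infty} F_{\WZW}(R)/q_R^2\ge c_\star .
\]

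\emph{Upper bound.} Argue by contradiction. Suppose there are $\varepsilon>0$ and $R_j\to\infty$ with $F_{\WZW}(R_j)>(c_\star+\varepsilon)q_{R_j}^2$. Lemma~\ref{lem:fixed-cutoff} says each $F_{\WZW}(R_j)$ is finite, so the supremum is a genuine finite number. Because it is a supremum, we can choose $\cC_j$ with $r_j:=r(\cC_j)\le R_j$ and $2\log\cD_j>(c_\star+\varepsilon/2)q_{R_j}^2$. Since $r_j\le R_j$ means $\log_2 r_j\le q_{R_j}$, this yields
\[
2\log\cD_j>(c_\star+\varepsilon/2)(\log_2 r_j)^2 .
\]

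To invoke Proposition~\ref{prop:sequential-upper} we need $r_j\to\infty$; this is the delicate step. Suppose instead that a subsequence has $r_j\le M$ bounded. Lemma~\ref{lem:fixed-cutoff} with cutoff $M$ then bounds $\cD_j\le B(M)$, so $2\log\cD_j$ stays bounded. But $2\log\cD_j$ exceeds $(c_\star+\varepsilon/2)q_{R_j}^2\to\infty$, a contradiction. Hence $r_j\to\infty$, and the displayed inequality gives
\[
\limsup_{j\to\infty} \frac{2\log\cD_j}{(\log_2 r_j)^2}\ge c_\star+\varepsilon/2,
\]
contradicting Proposition~\ref{prop:sequential-upper}. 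So $\limsup_{R\to\infty} F_{\WZW}(R)/q_R^2\le c_\star$, which together with the lower bound proves Eq.~\eqref{S:envelope}.

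The main obstacle is exactly this passage from a bound stated along sequences to the supremum at each cutoff. The maximizer at cutoff $R$ can have rank far below $R$, and for small ranks the ratio $2\log\cD/(\log_2 r)^2$ need not be controlled. Two facts resolve it: the denominator uses $q_R$ rather than $\log_2 r$, and Lemma~\ref{lem:fixed-cutoff} supplies finiteness at fixed cutoff, which rules out bounded-rank witnesses.
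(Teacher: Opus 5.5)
Your proof is correct and is essentially the paper's argument: the same $Sp(2n)_n$ lower bound via $r_{n+1}/r_n<4$, and the same upper bound built from near-maximizing categories, Lemma~\ref{lem:fixed-cutoff} to force $r_j\to\infty$, Proposition~\ref{prop:sequential-upper}, and $\log_2 r_j\le\log_2 R_j$. The only difference is cosmetic. You argue by contradiction, so the hypothesis $F_{\WZW}(R_j)>(c_\star+\varepsilon)q_{R_j}^2$ itself forces $\cD_j\to\infty$, and finiteness of $F_{\WZW}(R_j)$ is not even needed. The paper instead picks $2\log\cD_j\ge F_{\WZW}(R_j)-1$ and uses the already-proved lower bound to get $\cD_j\to\infty$.
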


\begin{proof}
Write
\begin{equation}
 c_*:=\frac{7\zeta(3)}{4\pi^2}.
\end{equation}

\emph{Lower bound.}
Let $r_n=r(C_n,n)=\binom{2n}{n}$.  For each large $R$, choose the largest
integer $n=n(R)$ for which $r_n\leq R$.  Then $n(R)\to\infty$ and
\begin{equation}
 r_n\leq R<r_{n+1},
 \qquad
 \frac{r_{n+1}}{r_n}=\frac{2(2n+1)}{n+1}<4.
\end{equation}
Thus $1\leq R/r_n<4$, and hence
\begin{equation}
 \frac{\log_2r_n}{\log_2R}\longrightarrow1.
 \label{S:saturating-log-gap}
\end{equation}
The category $\cC(C_n,n)$ is allowed at cutoff $R$, so
\begin{equation}
 \frac{F_{\WZW}(R)}{(\log_2R)^2}
 \geq
 \frac{2\log\cD(C_n,n)}{(\log_2r_n)^2}
 \left(\frac{\log_2r_n}{\log_2R}\right)^2.
\end{equation}
Both factors on the right have limits: the first tends to $c_*$ by
Proposition~\ref{prop:sequential-upper}, and the second tends to one by
Eq.~\eqref{S:saturating-log-gap}.  Therefore
\begin{equation}
 \liminf_{R\to\infty}
 \frac{F_{\WZW}(R)}{(\log_2R)^2}
 \geq c_*.
 \label{eq:lower-envelope}
\end{equation}

\emph{Upper bound.}
Consider any sequence of cutoffs $R_j\to\infty$.  At each cutoff, choose an
allowed category $\cC_j$, with $r_j=r(\cC_j)$ and $\cD_j=\cD(\cC_j)$,
satisfying
\begin{equation}
 r_j\leq R_j,
 \qquad
 2\log\cD_j\geq F_{\WZW}(R_j)-1.
 \label{S:cutoff-near-maximizer}
\end{equation}
The lower bound already proved shows that $F_{\WZW}(R_j)\to\infty$, so
Eq.~\eqref{S:cutoff-near-maximizer} gives $\cD_j\to\infty$.  The ranks must
therefore also tend to infinity.  Otherwise, infinitely many of the categories
would lie below one fixed rank cutoff, where Lemma~\ref{lem:fixed-cutoff}
bounds their total quantum dimensions.

We may now apply Proposition~\ref{prop:sequential-upper} to the sequence
$\cC_j$:
\begin{equation}
 \frac{2\log\cD_j}{(\log_2r_j)^2}\leq c_*+o(1).
\end{equation}
Using Eq.~\eqref{S:cutoff-near-maximizer} and $r_j\leq R_j$, we obtain
\begin{equation}
 \frac{F_{\WZW}(R_j)}{(\log_2R_j)^2}
 \leq\frac{1}{(\log_2R_j)^2}
 +(c_*+o(1))\left(\frac{\log_2r_j}{\log_2R_j}\right)^2
 \leq c_*+o(1).
\end{equation}
Since the cutoff sequence $R_j\to\infty$ was arbitrary, this proves the
matching upper bound.  Together with the lower bound, it proves
Eq.~\eqref{S:envelope}.
\end{proof}

Proposition~\ref{prop:cutoff-passage} proves Theorem~\ref{thm:main} directly
in its stated normalization.
For the selected two-circle torus bipartition, Eq.~\eqref{eq:state-maximization}
shows that the same expression is the maximal torus TEE;
the vacuum-flux state attains it.

\subsection{Finite stacking}
\begin{lemma}[Finite stacking bound]
\label{lem:stacking}
For every $\varepsilon>0$, there is a constant $C_\varepsilon$ such that every
finite product $\cC=\cC_1\boxtimes\cdots\boxtimes\cC_m$ in
$\mathfrak W^{\boxtimes}$ satisfies
\begin{equation}
 2\log\cD(\cC)\leq
 \left(\frac{7\zeta(3)}{4\pi^2}+\varepsilon\right)
 [\log_2r(\cC)]^2+C_\varepsilon\log_2r(\cC).
 \label{eq:stacking-bound}
\end{equation}
\end{lemma}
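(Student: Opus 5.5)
The plan is to reduce the stacked statement to a single-factor inequality and then use superadditivity of the square. Both invariants are multiplicative under Deligne products: $r(\cC)=\prod_i r(\cC_i)$ and $\cD(\cC)=\prod_i\cD(\cC_i)$. Writing $x_i=\log_2 r(\cC_i)$, this gives
\[
\log_2 r(\cC)=\sum_i x_i,\qquad 2\log\cD(\cC)=\sum_i 2\log\cD(\cC_i).
\]
It therefore suffices to prove the simple-factor estimate
\[
2\log\cD(\cC_i)\leq (c_*+\varepsilon)x_i^2+C_\varepsilon x_i
\]
for every simple WZW category $\cC_i$, with $c_*=7\zeta(3)/(4\pi^2)$. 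Summing over $i$ and using $\sum_i x_i^2\leq(\sum_i x_i)^2$ (all $x_i\geq0$) then gives Eq.~\eqref{eq:stacking-bound}.

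One preliminary point concerns trivial factors. Every simple factor with $k\geq1$ has $r\geq2$ and so $x_i\geq1$: for example the adjoint or fundamental weight at level one is integrable, or one checks this directly from the counts. The linear term is therefore genuinely controlling, and no factor with $x_i=0$ carries $\cD>1$. Such a factor would be harmless in any case, since $r=1$ forces $\cD=1$ by Eq.~\eqref{eq:total-dimension}.

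To prove the single-factor estimate, fix $\varepsilon>0$ and argue by the sequential bound.

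\emph{Large rank.} Proposition~\ref{prop:sequential-upper} implies that there is $R_\varepsilon$ such that every simple WZW category with $r>R_\varepsilon$ satisfies
\[
2\log\cD\leq(c_*+\varepsilon)(\log_2 r)^2.
\]
If not, choose a sequence of counterexamples. Their ranks tend to infinity, because Lemma~\ref{lem:fixed-cutoff} bounds $\cD$ at each fixed rank and only finitely many rank values lie below any cutoff. Along this sequence the limsup would then exceed $c_*$, contradicting Proposition~\ref{prop:sequential-upper}.

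\emph{Bounded rank.} For $r\leq R_\varepsilon$, Lemma~\ref{lem:fixed-cutoff} gives a uniform bound $2\log\cD\leq M_\varepsilon$. Since $x\geq1$ whenever $r\geq2$, this is absorbed into $C_\varepsilon x$ by taking $C_\varepsilon=M_\varepsilon$.

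Combining the two cases yields the single-factor estimate with an additive linear term, and summing completes the argument. The main point needing care is the contradiction step in the large-rank case: it must honestly invoke Lemma~\ref{lem:fixed-cutoff} to force $r_j\to\infty$ along the bad sequence. Beyond that, only the multiplicativity of $r$ and $\cD$ under $\boxtimes$ needs to be stated. This follows because the simple objects of the product are tuples of simple objects of the factors, the $S$-matrix is the tensor product, and hence $S_{00}$ factorizes.
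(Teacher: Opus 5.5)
Your proposal is correct and follows the paper's proof essentially step for step: multiplicativity of $r$ and $\cD$ under $\boxtimes$, a per-factor bound from Proposition~\ref{prop:sequential-upper} above some $R_\varepsilon$ and from Lemma~\ref{lem:fixed-cutoff} below it, and then $\sum_i x_i^2\leq(\sum_i x_i)^2$. Two small corrections. First, rank-one factors do occur: $E_8$ at level one has all comarks at least two, so only the vacuum is integrable. Your claim that $r\geq2$ for every $k\geq1$ is therefore false, although your fallback $r=1\Rightarrow\cD=1$ handles this case exactly as the paper does. Second, in the contradiction step $r_j\to\infty$ is automatic, because the negated statement supplies counterexamples above every cutoff. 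Lemma~\ref{lem:fixed-cutoff} is not needed there; it is needed only in the bounded-rank case.
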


\begin{proof}
For finite semisimple Deligne products, the simple objects are tuples and
their quantum dimensions multiply \cite{Turaev}.  Thus, writing
$r_j:=r(\cC_j)$, $\cD_j:=\cD(\cC_j)$, and $q_j:=\log_2r_j$, the product
identities are
\begin{equation}
 r(\boxtimes_j\mathcal C_j)=\prod_jr_j,
 \qquad
 \cD(\boxtimes_j\mathcal C_j)=\prod_j\cD_j.
 \label{eq:stack-products}
\end{equation}
For every $\varepsilon>0$, choose $R_\varepsilon$ so that factors with
$r_j\geq R_\varepsilon$ obey
$2\log\cD_j\leq[7\zeta(3)/(4\pi^2)+\varepsilon]q_j^2$.
Equation~\eqref{S:boundedcutoff} supplies a constant $C_\varepsilon$ such that
the remaining nontrivial factors, $2\leq r_j<R_\varepsilon$, obey
$2\log\cD_j\leq C_\varepsilon q_j$.  Rank-one factors have $\cD_j=1$ and
do not contribute.  Hence
\begin{equation}
 2\log\cD(\boxtimes_j\mathcal C_j)\leq
 \left(\frac{7\zeta(3)}{4\pi^2}+\varepsilon\right)
 \left(\sum_jq_j\right)^2+C_\varepsilon\sum_jq_j.
\end{equation}
The sums are $\log_2r(\cC)$ by Eq.~\eqref{eq:stack-products}, proving
Eq.~\eqref{eq:stacking-bound}.
\end{proof}

\begin{corollary}[Semisimple WZW and finite stacking extension]
\label{cor:stacking}
For the semisimple, equivalently finitely stacked, envelope in
Eq.~\eqref{eq:stacked-envelope},
\begin{equation}
 \lim_{R\to\infty}
 \frac{F_{\WZW}^{\boxtimes}(R)}{(\log_2R)^2}
 =\frac{7\zeta(3)}{4\pi^2}.
 \label{eq:stacking-corollary}
\end{equation}
\end{corollary}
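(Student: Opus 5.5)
The plan is to prove matching lower and upper bounds on $F_{\WZW}^{\boxtimes}(R)/(\log_2R)^2$. Write $c_*=7\zeta(3)/(4\pi^2)$.

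\emph{Lower bound.} Every simple WZW category is a one-factor element of $\mathfrak W^{\boxtimes}$ (take $m=1$ in Eq.~\eqref{eq:stacked-class}). Hence $F_{\WZW}^{\boxtimes}(R)\geq F_{\WZW}(R)$ for every $R$. Proposition~\ref{prop:cutoff-passage} then gives
\[
\liminf_{R\to\infty}\frac{F_{\WZW}^{\boxtimes}(R)}{(\log_2R)^2}\geq c_* .
\]
Equivalently, one can reuse the balanced symplectic sequence $Sp(2n)_n$ with $n=n(R)$ chosen as in that proof.

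\emph{Upper bound.} Fix $\varepsilon>0$ and apply Lemma~\ref{lem:stacking}. Every $\cC\in\mathfrak W^{\boxtimes}$ with $r(\cC)\leq R$ satisfies
\[
2\log\cD(\cC)\leq(c_*+\varepsilon)[\log_2r(\cC)]^2+C_\varepsilon\log_2r(\cC)\leq(c_*+\varepsilon)(\log_2R)^2+C_\varepsilon\log_2R,
\]
since the right-hand side is increasing in $r(\cC)\geq1$. The bound is uniform over the whole class, so it passes directly to the supremum:
\[
F_{\WZW}^{\boxtimes}(R)\leq(c_*+\varepsilon)(\log_2R)^2+C_\varepsilon\log_2R .
\]
In particular $F_{\WZW}^{\boxtimes}(R)$ is finite. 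Dividing by $(\log_2R)^2$ and letting $R\to\infty$ gives $\limsup\leq c_*+\varepsilon$. Since $\varepsilon$ is arbitrary, the upper bound matches the lower bound, which proves Eq.~\eqref{eq:stacking-corollary}.

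\emph{Main obstacle.} The only delicate point is already absorbed into Lemma~\ref{lem:stacking}: its per-factor constants must be uniform even though the number of factors $m$ is unbounded. This works because nontrivial factors have $r_j\geq2$, so $m\leq\log_2R$. It also uses $(\sum_j q_j)^2\geq\sum_j q_j^2$ and the linear bound for small factors from Lemma~\ref{lem:fixed-cutoff}. I would re-check one step in that lemma: the existence of $R_\varepsilon$ with $2\log\cD_j\leq(c_*+\varepsilon)q_j^2$ for all simple factors of rank at least $R_\varepsilon$. This follows from Proposition~\ref{prop:sequential-upper} by contradiction. If no such $R_\varepsilon$ existed, there would be a sequence of simple categories with $r_j\to\infty$ violating the bound, which contradicts the sequential limsup estimate. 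With that established, the corollary is just these two inequalities combined.
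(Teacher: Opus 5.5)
Your proposal is correct and follows the paper's proof. The lower bound comes from the one-factor inclusion $F_{\WZW}^{\boxtimes}(R)\geq F_{\WZW}(R)$ together with Theorem~\ref{thm:main}, and the upper bound comes from Lemma~\ref{lem:stacking} evaluated at $r(\cC)\leq R$ and then letting $\varepsilon\to0$; your monotonicity step and your contradiction argument producing $R_\varepsilon$ from Proposition~\ref{prop:sequential-upper} just make explicit details the paper leaves implicit.
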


\begin{proof}
For categories with $r(\cC)\leq R$, Lemma~\ref{lem:stacking} gives
\begin{equation}
 \limsup_{R\to\infty}
 \frac{F_{\WZW}^{\boxtimes}(R)}{(\log_2R)^2}
 \leq\frac{7\zeta(3)}{4\pi^2}.
\end{equation}
The one-factor inclusion gives
$F_{\WZW}^{\boxtimes}(R)\geq F_{\WZW}(R)$ and hence the matching lower bound
from Theorem~\ref{thm:main}, proving
Eq.~\eqref{eq:stacking-corollary}.  Thus allowing any finite number of simple
summands, even a number depending on $R$, does not improve the leading
coefficient.
\end{proof}

\section{Conclusion and Outlook}

We formulated the unrestricted TQFT comparison through $Z(T^3)$ and $Z(S^3)$
in Eq.~\eqref{eq:general-envelope}, but solved only its simply connected WZW
restriction, both for individual simple theories and for finite uncoupled
Deligne stacks.  Theorem~\ref{thm:main} and
Corollary~\ref{cor:stacking} give, respectively,
\begin{equation}
 \lim_{R\to\infty}\frac{F_{\WZW}(R)}{(\log_2R)^2}
 =\lim_{R\to\infty}\frac{F_{\WZW}^{\boxtimes}(R)}{(\log_2R)^2}
 =\frac{7\zeta(3)}{4\pi^2}.
\end{equation}
The sequence $Sp(2n)_n$ supplies the transparent lower-bound construction.
The upper bound exhausts proportional and unbalanced two-parameter growth,
fixed Lie rank (including the exceptional algebras), and fixed level
(including the low orthogonal cases), before applying the cutoff-passage
lemma.  Balanced orthogonal sequences tie the leading coefficient, but no
unique maximizing theory at finite $R$ has been selected.
Lemma~\ref{lem:stacking} is the uniform estimate showing why the semisimple,
or finite-stacking, extension cannot raise the leading coefficient.

For the chosen torus bipartition, Section~\ref{sec:setup} proves that the
maximal TEE over all torus ground states is
$2\log\cD$.  It is attained by the vacuum-flux state, and more generally by
every definite Abelian-flux state.  In our positive-magnitude convention the
constant contribution to $S_A$ is $-\Gamma_{T^2}$, and the factor two comes
from the two entangling circles.  Consequently, $F(R)$ is the supremum of these statewise
maxima over all admissible phases with $r(\cC)\leq R$, whereas $F_{\WZW}(R)$
restricts that outer supremum to simple WZW phases.  The categorical
theorem neither counts operational topological qubits nor addresses the
microscopic realizability of arbitrary $Sp(2n)_n$ phases.

The WZW value is a lower bound on the unrestricted envelope $F(R)$, not a
candidate proved to be universal.  It remains open whether other modular
categories, cosets, orbifolds, Dijkgraaf--Witten and gauged theories
\cite{DijkgraafWitten},
non-simply-connected Chern--Simons theories, or fermionic extensions produce
a larger leading envelope.  Subleading terms may also distinguish the
asymptotically tied $B,C,D$ families.

\acknowledgments

CS is supported by BIMSA and the NSFC
under Grant No.~12505090.  CS thanks Xiao-Gang Wen and Ling-Yan Hung for
discussions that motivated this work.

\bibliographystyle{JHEP}
\bibliography{references}

\clearpage
\appendix
\section{Modular data and Lie-theoretic input}
\label{app:wzw-data}

\subsection{Lie algebras, roots, and lattices}

Let \(\mathfrak g\) be a finite-dimensional complex simple Lie algebra.  A
Cartan subalgebra \(\mathfrak h\subset\mathfrak g\) is a maximal abelian
subalgebra consisting of semisimple elements, and the Lie-algebra rank is
\(\ell=\dim\mathfrak h\).  Simultaneous diagonalization of the adjoint action
of \(\mathfrak h\) gives the root-space decomposition
\begin{equation}
 \mathfrak g=\mathfrak h\oplus
 \bigoplus_{\alpha\in\Delta}\mathfrak g_\alpha,
 \qquad
 \mathfrak g_\alpha=
 \{X\in\mathfrak g:[H,X]=\alpha(H)X\ \text{for all }H\in\mathfrak h\}.
 \label{S:root-decomposition}
\end{equation}
The nonzero linear functionals \(\alpha\in\mathfrak h^*\) for which
\(\mathfrak g_\alpha\neq0\) are the roots.  A choice of simple roots
\(\Pi=\{\alpha_1,\ldots,\alpha_\ell\}\) divides the root system into positive
and negative roots,
\(\Delta=\Delta_+\sqcup(-\Delta_+)\); every positive root is a nonnegative
integer combination of the simple roots.
Our finite root-system conventions, including the classical realizations and
exceptional tables used below, follow the standard normalization in
Ref.~\cite{BourbakiLie}; the affine-algebra and integrability conventions
follow Ref.~\cite{KacBook}.

On the real span of the roots we use the Weyl-invariant inner product
\((\ ,\ )\) normalized so that every long root has squared length two.  The
coroot and the reflection associated with a root are
\begin{equation}
 \alpha^\vee=\frac{2\alpha}{(\alpha,\alpha)},
 \qquad
 s_\alpha(x)=x-(x,\alpha^\vee)\alpha.
 \label{S:coroot-reflection}
\end{equation}
The Weyl group \(W\) is generated by these reflections.  The fundamental
weights \(\Lambda_i\) are dual to the simple coroots, so the weight and
coroot lattices and the dominant cone are
\begin{align}
 (\Lambda_i,\alpha_j^\vee)&=\delta_{ij},&
 P&=\bigoplus_{i=1}^{\ell}\mathbb Z\Lambda_i,&
 P_+&=\bigoplus_{i=1}^{\ell}\mathbb Z_{\geq0}\Lambda_i,
 \nonumber\\
 Q^\vee&=\bigoplus_{i=1}^{\ell}\mathbb Z\alpha_i^\vee.&
 \label{S:root-lattices}
\end{align}
The Weyl vector and highest root are denoted by
\begin{equation}
 \rho=\frac12\sum_{\alpha\in\Delta_+}\alpha
      =\sum_{i=1}^{\ell}\Lambda_i,
 \qquad \theta=\text{the highest root}.
 \label{S:weyl-vector}
\end{equation}
The second expression for \(\rho\) follows from
\((\rho,\alpha_i^\vee)=1\).  The highest root is long; hence
\(\theta^\vee=\theta\) in our normalization.  The comarks \(a_i^\vee\) and
the dual Coxeter number are characterized by
\begin{equation}
 \theta=\sum_{i=1}^{\ell}a_i^\vee\alpha_i^\vee,
 \qquad
 h^\vee=1+(\rho,\theta)=1+\sum_{i=1}^{\ell}a_i^\vee.
 \label{S:comarks-dual-coxeter}
\end{equation}
Thus, if \(\lambda=\sum_i\lambda_i\Lambda_i\in P_+\), then
\((\lambda,\theta)=\sum_i a_i^\vee\lambda_i\).  This is the finite-dimensional
form of the affine integrability condition.

\subsection{Integrable weights and modular data}

Fix a positive integer level \(k\) and set \(K=k+h^\vee\).  The simple
objects of the untwisted WZW modular tensor category
\(\cC(\mathfrak g,k)\) are labeled by the dominant weights in the level
alcove
\begin{equation}
 P_k^+(\mathfrak g)=
 \left\{\lambda=\sum_{i=1}^{\ell}\lambda_i\Lambda_i:
 \lambda_i\in\mathbb Z_{\geq0},\quad
 \sum_i a_i^\vee\lambda_i\leq k\right\},
 \qquad r(\mathfrak g,k)=|P_k^+(\mathfrak g)|,
 \label{S:alcove}
\end{equation}
where \(a_i^\vee\) are the comarks.  The tensor unit is \(0\), and
\(\lambda^*=-w_0\lambda\), with \(w_0\) the longest Weyl-group element.
The lattice index in the modular normalization can be computed directly from
the coroot Gram matrix:
\begin{equation}
 \iota_{\mathfrak g}=|P/Q^\vee|
 =\det\bigl((\alpha_i^\vee,\alpha_j^\vee)_{i,j=1}^{\ell}\bigr).
 \label{S:lattice-index}
\end{equation}
Indeed, the entries in the $j$-th column of this Gram matrix are the
coordinates of $\alpha_j^\vee$ in the fundamental-weight basis.  Notice that
this is the coroot-lattice quotient $P/Q^\vee$, not the root-lattice quotient
$P/Q$; the distinction matters for non-simply-laced families.

With a common phase \(\varkappa_{\mathfrak g}\) chosen so that \(S_{00}>0\),
the Kac--Peterson matrix is
\begin{equation}
 S_{\lambda\mu}=
 \frac{\varkappa_{\mathfrak g}}
 {\sqrt{\iota_{\mathfrak g}}K^{\ell/2}}
 \sum_{w\in W}\epsilon(w)
 \exp\!\left[-\frac{2\pi\ii}{K}
 (w(\lambda+\rho),\mu+\rho)\right],
 \qquad |\varkappa_{\mathfrak g}|=1,\quad
 \iota_{\mathfrak g}=|P/Q^\vee|.
 \label{S:KPfull}
\end{equation}
This is the Kac--Peterson normalization \cite{KacPeterson}.
Here \(\epsilon(w)=\det(w)\) is the Weyl-group sign character.
The normalization in Eq.~\eqref{S:KPfull} is important for the optimization.
At \(\lambda=\mu=0\), the Weyl denominator identity gives the exact modulus
\begin{equation}
 \left|\sum_{w\in W}\epsilon(w)
 e^{-2\pi\ii(w\rho,\rho)/K}\right|
 =\prod_{\alpha\in\Delta_+}
 2\sin\left(\frac{\pi(\rho,\alpha)}{K}\right).
 \label{S:weyl-denominator-modulus}
\end{equation}
For every positive root,
\(0<(\rho,\alpha)\leq(\rho,\theta)=h^\vee-1<K\), so every sine on the
right-hand side is positive.  The phase \(\varkappa_{\mathfrak g}\) makes the
vacuum entry itself positive.  Consequently there are
no suppressed powers of two or parity factors: all lattice and power
normalizations outside the sine product are exactly
\(\iota_{\mathfrak g}^{-1/2}K^{-\ell/2}\).
The remaining data used below are
\begin{align}
 d_\lambda&=\frac{S_{0\lambda}}{S_{00}},&
 \cD&=\left(\sum_\lambda d_\lambda^2\right)^{1/2}=\frac1{S_{00}},
 \nonumber\\
 N_{\lambda\mu}^{\phantom{\lambda\mu}\nu}
 &=\sum_{\sigma\in P_k^+}
 \frac{S_{\lambda\sigma}S_{\mu\sigma}S_{\nu\sigma}^*}
 {S_{0\sigma}},&
 h_\lambda&=\frac{(\lambda,\lambda+2\rho)}{2K},
 \nonumber\\
 \theta_\lambda&=e^{2\pi\ii h_\lambda},&
 T_{\lambda\lambda}&=e^{2\pi\ii(h_\lambda-c/24)},
 \qquad c=\frac{k\dim\mathfrak g}{K}.
 \label{S:MTCdata}
\end{align}
These formulas specify the simple-object labels, fusion coefficients, quantum
dimensions, twists, and modular representation of the category
\cite{KacPeterson,Verlinde}.

The Weyl denominator reduces the vacuum entry to the positive root product
\begin{equation}
 S_{00}=\iota_{\mathfrak g}^{-1/2}K^{-\ell/2}
 \prod_{\alpha\in\Delta_+}
 2\sin\left(\frac{\pi(\rho,\alpha)}{K}\right).
 \label{S:S00}
\end{equation}
As a normalization check, take $A_1$ with its positive root $\alpha$.
Then $(\alpha,\alpha)=2$, $\Lambda_1=\rho=\alpha/2$, $h^\vee=2$, and
$P/Q^\vee=(\mathbb Z\alpha/2)/(\mathbb Z\alpha)$ has order two.  The
alcove consists of $\lambda=m\Lambda_1$, $0\leq m\leq k$, and
Eq.~\eqref{S:S00} becomes the standard $SU(2)_k$ result
\begin{equation}
 S_{00}=\sqrt{\frac{2}{k+2}}
 \sin\left(\frac{\pi}{k+2}\right).
 \label{S:SU2-check}
\end{equation}
The classical data are listed in Table~\ref{tab:classical-data}.
\begin{table}[t]
\centering
\caption{Classical WZW data in the long-root-length-two normalization.}
\label{tab:classical-data}
\begin{tabular}{c c c c c c c}
\toprule
type & compact group & \(\ell\) & \(h^\vee\) & \(\dim\mathfrak g\)
& comarks \(\{a_i^\vee\}\) & \(\iota_{\mathfrak g}\)\\
\midrule
\(A_{n-1}\) & \(SU(n)\) & \(n-1\) & \(n\) & \(n^2-1\)
& \(\{1^{n-1}\}\) & \(n\)\\
\(B_n\) & \(\Spin(2n+1)\) & \(n\) & \(2n-1\) & \(n(2n+1)\)
& \(\{1^2,2^{n-2}\}\) & \(4\)\\
\(C_n\) & \(Sp(2n)\) & \(n\) & \(n+1\) & \(n(2n+1)\)
& \(\{1^n\}\) & \(2^n\)\\
\(D_n\) & \(\Spin(2n)\) & \(n\) & \(2n-2\) & \(n(2n-1)\)
& \(\{1^3,2^{n-3}\}\) & \(4\)\\
\bottomrule
\end{tabular}
\end{table}
Here \(j^m\) means \(m\) occurrences of the comark \(j\).  The
identity \(h^\vee=1+\sum_i a_i^\vee\) and the determinant in
Eq.~\eqref{S:lattice-index} provide direct checks on the \(h^\vee\), comark,
and \(\iota_{\mathfrak g}\) columns.  The nonduplicated ranges used in
Theorem~\ref{thm:main} are \(A_{n-1}\) for
\(n\geq2\), \(B_n\) for \(n\geq3\), \(C_n\) for \(n\geq2\), and \(D_n\)
for \(n\geq4\).  The coincidences \(B_1=C_1=A_1\), \(B_2=C_2\), and
\(D_3=A_3\) are represented through their type-\(A\) or type-\(C\)
descriptions.  The semisimple case \(D_2=A_1\oplus A_1\) factorizes into its
two \(A_1\) summands and belongs to the finite-stacking envelope of
Corollary~\ref{cor:stacking}, not the simple-envelope theorem.  The
exceptional algebras have no variable rank; their data appear in
Table~\ref{tab:exceptional-data}.
\begin{table}[t]
\centering
\caption{Exceptional WZW data.}
\label{tab:exceptional-data}
\begin{tabular}{c c c c c c c}
\toprule
\(\mathfrak g\) & compact group & \(\ell\) & \(h^\vee\)
& \(\dim\mathfrak g\) & \(\{a_i^\vee\}\) & \(\iota_{\mathfrak g}\)\\
\midrule
\(G_2\) & \(G_2\) & 2 & 4 & 14 & \(\{1,2\}\) & 3\\
\(F_4\) & \(F_4\) & 4 & 9 & 52 & \(\{1,2^2,3\}\) & 4\\
\(E_6\) & \(E_6\) & 6 & 12 & 78 & \(\{1^2,2^3,3\}\) & 3\\
\(E_7\) & \(E_7\) & 7 & 18 & 133 & \(\{1,2^3,3^2,4\}\) & 2\\
\(E_8\) & \(E_8\) & 8 & 30 & 248 & \(\{2^2,3^2,4^2,5,6\}\) & 1\\
\bottomrule
\end{tabular}
\end{table}
For type \(C\), \((e_i,e_j)=\delta_{ij}/2\), so
\(P=\mathbb Z^n\) and \(Q^\vee=2\mathbb Z^n\).  This gives
\(\iota_{C_n}=2^n\), the normalization required by the exact rank--level
identity in Appendix~\ref{app:wzw-duality}.

\section{Sector counts}
\label{app:wzw-counts}

\subsection{Classical families}

For $X\in\{A,B,C,D\}$, let $r_X(n,k)$ and $\cD_X(n,k)$ denote the sector
count and total quantum dimension of the corresponding classical family in
Eq.~\eqref{eq:dictionary}.  This is the same family shorthand used in the main
text.  The external input is the standard affine alcove and comark data
\cite{KacBook}; all generating functions and coefficient estimates below are
then derived directly from Eq.~\eqref{S:alcove}.
We use the nonduplicated classical ranges \(n\geq2\) for \(A_{n-1}\) and
\(C_n\), \(n\geq3\) for \(B_n\), and \(n\geq4\) for \(D_n\), as in
Appendix~\ref{app:wzw-data}.

Introducing a slack variable in Eq.~\eqref{S:alcove} gives the generating
functions in Table~\ref{tab:sector-generating}.
\begin{table}[t]
\centering
\caption{Classical sector-count generating functions, where $r_X(n,k)$ is the
sector count in the displayed family.  The exponent of \((1-z)\) includes the
slack variable for unused level.}
\label{tab:sector-generating}
\begin{tabular}{c c c c}
\toprule
type & \(h^\vee\) & \(\sum_{k\geq0}r_X(n,k)z^k\) & proportional level \\
\midrule
\(A_{n-1}\) & \(n\) & \((1-z)^{-n}\) & \(k\sim tn\) \\
\(B_n\) & \(2n-1\) & \((1-z)^{-3}(1-z^2)^{-(n-2)}\)
 & \(k\sim2tn\) \\
\(C_n\) & \(n+1\) & \((1-z)^{-(n+1)}\) & \(k\sim tn\) \\
\(D_n\) & \(2n-2\) & \((1-z)^{-4}(1-z^2)^{-(n-3)}\)
 & \(k\sim2tn\) \\
\bottomrule
\end{tabular}
\end{table}
We now extract the four coefficients explicitly.  Write \([z^k]f(z)\) for
the coefficient of \(z^k\) in a formal power series \(f(z)\).  The
generalized binomial expansion is
\begin{equation}
 (1-x)^{-p}=\sum_{m\geq0}\binom{m+p-1}{p-1}x^m,
 \qquad p\in\mathbb Z_{>0}.
 \label{S:negative-binomial-series}
\end{equation}
Applying Eq.~\eqref{S:negative-binomial-series} directly to the type-\(A\)
and type-\(C\) generating functions gives
\begin{align}
 r_A(n,k)
 &=[z^k](1-z)^{-n}
  =[z^k]\sum_{m\geq0}\binom{m+n-1}{n-1}z^m
  =\binom{n+k-1}{n-1},\nonumber\\
 r_C(n,k)
 &=[z^k](1-z)^{-(n+1)}
  =[z^k]\sum_{m\geq0}\binom{m+n}{n}z^m
  =\binom{n+k}{n}.
 \label{S:rAC}
\end{align}

For type \(B\), the two factors expand as
\begin{align}
 (1-z)^{-3}
 &=\sum_{a\geq0}\binom{a+2}{2}z^a,
 &
 (1-z^2)^{-(n-2)}
 &=\sum_{j\geq0}\binom{j+n-3}{n-3}z^{2j}.
 \label{S:B-series-expansions}
\end{align}
Their Cauchy product therefore gives
\begin{align}
 r_B(n,k)
 &=[z^k](1-z)^{-3}(1-z^2)^{-(n-2)}\nonumber\\
 &=\sum_{a,j\geq0}
   \binom{a+2}{2}\binom{j+n-3}{n-3}[z^k]z^{a+2j}\nonumber\\
 &=\sum_{\substack{a,j\geq0\\a+2j=k}}
   \binom{a+2}{2}\binom{j+n-3}{n-3}.
 \label{S:B-coefficient-extraction}
\end{align}
The condition \(a+2j=k\) is equivalent to \(a=k-2j\) with
\(0\leq j\leq \kappa:=\lfloor k/2\rfloor\).  Hence the exact type-\(B\) count is
\begin{equation}
 r_B(n,k)=\sum_{j=0}^{\kappa}
 \binom{k-2j+2}{2}\binom{j+n-3}{n-3}.
 \label{S:rank-B-sum}
\end{equation}

For type \(D\), similarly,
\begin{align}
 (1-z)^{-4}
 &=\sum_{a\geq0}\binom{a+3}{3}z^a,
 &
 (1-z^2)^{-(n-3)}
 &=\sum_{j\geq0}\binom{j+n-4}{n-4}z^{2j},
 \label{S:D-series-expansions}
\end{align}
and the Cauchy product yields
\begin{align}
 r_D(n,k)
 &=[z^k](1-z)^{-4}(1-z^2)^{-(n-3)}\nonumber\\
 &=\sum_{\substack{a,j\geq0\\a+2j=k}}
   \binom{a+3}{3}\binom{j+n-4}{n-4}\nonumber\\
 &=\sum_{j=0}^{\kappa}
   \binom{k-2j+3}{3}\binom{j+n-4}{n-4}.
 \label{S:rank-D-sum}
\end{align}

For later use, these exact sums also give simple two-sided bounds.  The first
binomial in each summand is at least one, so the hockey-stick identity gives
the lower bounds.  It is also bounded above by \((k+2)^2\) in type \(B\) and
by \((k+3)^3\) in type \(D\).  Thus
\begin{align}
 \binom{\kappa+n-2}{n-2}
 &\leq r_B(n,k)\leq(k+2)^2\binom{\kappa+n-2}{n-2},\nonumber\\
 \binom{\kappa+n-3}{n-3}
 &\leq r_D(n,k)\leq(k+3)^3\binom{\kappa+n-3}{n-3}.
 \label{S:rBD}
\end{align}

\subsection{Fixed-rank asymptotics}

For any fixed simple algebra $\mathfrak g$, abbreviate
$r_{\mathfrak g}(k):=r(\mathfrak g,k)$ and
$\cD_{\mathfrak g}(k):=\cD(\mathfrak g,k)$.  Equation~\eqref{S:alcove} then
gives
\begin{equation}
 \sum_{k\geq0}r_{\mathfrak g}(k)z^k
 =\frac{1}{(1-z)\prod_{i=1}^{\ell}(1-z^{a_i^\vee})}.
 \label{S:fixedrankgenerating}
\end{equation}
The pole at \(z=1\), equivalently the volume of the comark simplex, yields
\begin{equation}
 r_{\mathfrak g}(k)
 =\frac{k^\ell}{\ell!\prod_i a_i^\vee}+O(k^{\ell-1}),
 \qquad
 \log r_{\mathfrak g}(k)=\ell\log k+O(1).
 \label{S:fixedrankcount}
\end{equation}
Expanding Eq.~\eqref{S:S00} at fixed root system gives
\begin{align}
 \cD_{\mathfrak g}(k)
 &=C_{\mathfrak g}K^{\ell/2+|\Delta_+|}
   \bigl[1+O(K^{-2})\bigr]\nonumber\\
 &=C_{\mathfrak g}K^{\dim\mathfrak g/2}
   \bigl[1+O(K^{-2})\bigr],
 \label{S:fixedrankdimension}
\end{align}
where
\begin{equation}
 C_{\mathfrak g}
 =\iota_{\mathfrak g}^{1/2}(2\pi)^{-|\Delta_+|}
  \prod_{\alpha\in\Delta_+}(\rho,\alpha)^{-1}.
\end{equation}
Therefore
\begin{equation}
  \log\cD_{\mathfrak g}(k)
  =\frac{\dim\mathfrak g}{2}\log k+O(1),
  \label{S:fixedranklogdimension}
\end{equation}
and $\log\cD_{\mathfrak g}(k)$ grows linearly with \(\log r_{\mathfrak g}(k)\), rather than quadratically, for every fixed-rank family.
For the five exceptional families the leading data are shown in
Table~\ref{tab:exceptional-asymptotics}.

Consequently every fixed-rank family has
\begin{equation}
 \frac{\log\cD_{\mathfrak g}(k)}{[\log r_{\mathfrak g}(k)]^2}
 =\frac{\dim\mathfrak g}{2\ell^2}\frac1{\log k}
  +O\!\left(\frac1{(\log k)^2}\right)\longrightarrow0.
 \label{S:exceptionaldecay}
\end{equation}
So the fixed-rank families do not contribute to the asymptotic growth of the total quantum dimension in Theorem~\ref{thm:main}.

\begin{table}[t]
\centering
\caption{Fixed-rank asymptotics for the exceptional families.}
\label{tab:exceptional-asymptotics}
\begin{tabular}{c c c c c}
\toprule
\(\mathfrak g\) & \(\ell\) & \(\dim\mathfrak g\)
& \(\log r\sim\) & \(\log\cD\sim\)\\
\midrule
\(G_2\) & 2 & 14 & \(2\log k\) & \(7\log k\)\\
\(F_4\) & 4 & 52 & \(4\log k\) & \(26\log k\)\\
\(E_6\) & 6 & 78 & \(6\log k\) & \(39\log k\)\\
\(E_7\) & 7 & 133 & \(7\log k\) & \(\tfrac{133}{2}\log k\)\\
\(E_8\) & 8 & 248 & \(8\log k\) & \(124\log k\)\\
\bottomrule
\end{tabular}
\end{table}
% Their coefficients of \(1/\log k\) in Eq.~\eqref{S:exceptionaldecay} are
% \(7/4,13/8,13/12,19/14\), and \(31/16\), respectively.  Thus
% \(\log\cD_{E_8}(k)\sim124\log k\) grows linearly with
% \(\log r_{E_8}(k)\sim8\log k\), rather than quadratically.

\section{Rank--level duality}
\label{app:wzw-duality}

\subsection{Meaning and scope of the duality}

The word ``rank'' in \emph{rank--level duality} refers to the size parameter
of the classical Lie algebra, not to the categorical rank \(r(\cC)\), which
counts simple objects.  The exchanges relevant here are schematically
\begin{equation}
 (A_{n-1},k)\longleftrightarrow(A_{k-1},n),\qquad
 (C_n,k)\longleftrightarrow(C_k,n),\qquad
 (\mathfrak{so}_N,k)\longleftrightarrow(\mathfrak{so}_k,N).
 \label{S:duality-exchanges}
\end{equation}
Thus the affine level on one side becomes the Lie-algebra size parameter on
the other.  This exchange is useful in the main proof because it relates a
large-level regime to the corresponding large-rank regime.

We need only a statement about total quantum dimension, not a categorical
equivalence.  Recall from Appendix~\ref{app:wzw-data} that
\(\cD=1/S_{00}\).  It is therefore enough to compare the vacuum entries of
the two modular matrices.  Write \(\cD_A(n,k)\) and \(\cD_C(n,k)\) for the
total quantum dimensions of types \(A_{n-1}\) and \(C_n\) at level \(k\).
For the orthogonal family, let
\begin{equation}
 \cC_N(k):=\cC(\mathfrak{so}_N,k),\qquad N\geq5,
\end{equation}
mean the full category of integrable modules of the untwisted affine algebra,
equivalently the simply connected \(\Spin(N)\) WZW modular data, including
spinor weights, and set
\(\cD_{\mathfrak{so}_N}(k):=\cD(\mathfrak{so}_N,k)\).
For \(N=5,6\), the coincidences \(\mathfrak{so}_5\simeq\mathfrak{sp}_4\) and
\(\mathfrak{so}_6\simeq\mathfrak{su}_4\) are understood with the same basic
inner product, normalized by long roots of squared length two.

With these conventions, the three identities used in the proof are
\begin{align}
 \cD_A(n,k)&=\sqrt{n/k}\,\cD_A(k,n),
 \qquad n,k\geq2,\nonumber\\
 \cD_C(n,k)&=\cD_C(k,n),
 \qquad n,k\geq1,\nonumber\\
 \cD_{\mathfrak{so}_N}(k)&=\cD_{\mathfrak{so}_k}(N),
 \qquad N,k\geq5.
 \label{S:duality}
\end{align}
These are the vacuum-entry specializations of the standard rank--level
relations discussed in Refs.~\cite{NaculichEtAl,NakanishiTsuchiya}; the
determinant and product arguments below provide self-contained proofs of the
three displayed identities with the stated conventions.

The orthogonal equality means precisely
\(S_{00}^{\cC_N(k)}=S_{00}^{\cC_k(N)}\).  It does not assert equality of
categorical ranks, a braided equivalence of the full categories, or an
unqualified matching of tensor and spinor sectors.  It also does not extend
automatically to \(SO(N)\) quotients, gauged theories, spin theories, or other
global forms of orthogonal Chern--Simons theory.  We prove only the displayed
vacuum identities, directly from the Kac--Peterson products.

\subsection{Common proof mechanism}

For a simple algebra of Lie rank \(\ell\), isolate the positive-root product
in Eq.~\eqref{S:S00} by writing
\begin{equation}
 P_{\mathfrak g}(M)
 :=\prod_{\alpha\in\Delta_+}
 2\sin\left(\frac{\pi(\rho,\alpha)}{M}\right),
 \qquad
 S_{00}=\iota_{\mathfrak g}^{-1/2}M^{-\ell/2}
 P_{\mathfrak g}(M).
 \label{S:duality-root-product}
\end{equation}
The shifted denominator \(M=k+h^\vee\) is unchanged by each exchange in
Eq.~\eqref{S:duality-exchanges}:
\begin{equation}
 M=n+k\quad(A),\qquad
 M=n+k+1\quad(C),\qquad
 M=N+k-2\quad(\mathfrak{so}).
 \label{S:duality-shifted-denominators}
\end{equation}
The problem is therefore reduced to comparing two finite sine products with
the same denominator.  In type \(A\) this can be done directly with the
complete sine product.  In types \(C,B,D\), the Weyl denominator writes the
root product as a determinant of a finite sine or cosine transform.

We will repeatedly use the following elementary form of Jacobi's
complementary-minor theorem.  If \(U\) is a real orthogonal \(d\times d\)
matrix and \(I,J\) are index sets of the same size, then
\begin{equation}
 \left|\det U_{I,J}\right|
 =\left|\det U_{I^c,J^c}\right|.
 \label{S:Jacobi-minors}
\end{equation}
It follows from the cofactor formula and \(U^{-1}=U^{\mathsf T}\).  In our
applications, the first determinant encodes the root product before the
rank--level exchange and the complementary determinant encodes the product
after the exchange.  Keeping the normalizing scalar of \(U\) is what fixes all
powers of \(M\) and two.

\subsection{Type A: the special-unitary family}

For \(A_{n-1}\), the positive roots are \(e_i-e_j\), \(i<j\).  The height
\((\rho,e_i-e_j)=j-i=m\) occurs \(n-m\) times.  Hence, with \(M=n+k\) and
\(s_m=2\sin(\pi m/M)\), the vacuum root product is
\begin{equation}
 P_A(n,k)=\prod_{m=1}^{n-1}s_m^{n-m}.
\end{equation}
To compare the exchanged product, reflect its factors by
\(m\mapsto M-m\).  The ratio can then be written as
\begin{equation}
 \frac{P_A(n,k)}{P_A(k,n)}
 =\prod_{m=1}^{M-1}s_m^{n-m}.
\end{equation}
Pairing the terms at \(m\) and \(M-m\), their exponents add to
\(n-k\).  The complete sine product
\(\prod_{m=1}^{M-1}s_m=M\) therefore gives
\begin{equation}
 \frac{P_A(n,k)}{P_A(k,n)}=M^{(n-k)/2}.
\end{equation}
Since Eq.~\eqref{S:duality-root-product} gives
\begin{equation}
 \cD_A(n,k)=\frac{\sqrt n\,M^{(n-1)/2}}{P_A(n,k)},
\end{equation}
substitution yields
\(\cD_A(n,k)/\cD_A(k,n)=\sqrt{n/k}\).  The square-root factor in the
type-\(A\) identity comes from the lattice indices
\(\iota_{A_{n-1}}=n\) and \(\iota_{A_{k-1}}=k\).

\subsection{Type C: the symplectic family}

For \(C_n\), the common shifted denominator is \(M=n+k+1\).  Set
\(a_j=2\sin[\pi j/(2M)]\).  Reversing the Weyl-vector coordinates, the vacuum
root product is
\begin{equation}
 P_C(n,k)=\prod_{p=1}^n a_{2p}
 \prod_{1\leq q<p\leq n}a_{p-q}a_{p+q}
 =\left|\det_{1\leq p,q\leq n}
 \left[2\sin\frac{\pi pq}{M}\right]\right|.
 \label{S:Calternant}
\end{equation}
The relevant finite transform is the \((M-1)\times(M-1)\) orthogonal sine
matrix
\begin{equation}
 U_{pq}=\sqrt{\frac2M}\sin\frac{\pi pq}{M},
 \qquad 1\leq p,q\leq M-1.
\end{equation}
Here and below, \(U_{[a,b]}\) denotes the principal submatrix whose row and
column indices are \(a,a+1,\ldots,b\).  The leading \(n\times n\) minor
satisfies
\begin{equation}
 \left|\det U_{[1,n]}\right|
 =(2M)^{-n/2}P_C(n,k).
 \label{S:C-normalized-minor}
\end{equation}
By Eq.~\eqref{S:Jacobi-minors}, its absolute determinant equals that of the
complementary \(k\times k\) minor.
Reflecting the complementary indices by \(p\mapsto M-p\) turns the latter
into the leading \(k\times k\) minor and changes only row and column signs.
It follows that
\begin{equation}
 \frac{P_C(n,k)}{P_C(k,n)}=2^{(n-k)/2}M^{(n-k)/2}.
\end{equation}
Indeed, Eq.~\eqref{S:duality-root-product} and
\(\iota_{C_n}=2^n\) give
\begin{equation}
 \cD_C(n,k)=\frac{2^{n/2}M^{n/2}}{P_C(n,k)}.
\end{equation}
The preceding product ratio therefore cancels both the power of two and the
power of \(M\), proving \(\cD_C(n,k)=\cD_C(k,n)\).  Equivalently, for a general
Young diagram the shifted coordinates of the transpose are the
\emph{reflected} complement, not the complement itself.

\subsection{Orthogonal families}

For the orthogonal series, set
\(\ell_N=\lfloor N/2\rfloor\) and \(\ell_k=\lfloor k/2\rfloor\), and write
\(\mathfrak{so}_N\) as
\(B_{(N-1)/2}\) or \(D_{N/2}\) according to the parity of \(N\), and set
\(M=N+k-2\).  Thus odd \(N\) gives type \(B\), while even \(N\) gives type
\(D\).  Exchanging \(N\) and \(k\) can preserve the type or switch between
\(B\) and \(D\), which is why the parity cases must be kept explicit.  The
vacuum Weyl products are
\begin{align}
 P_{B_n}(M)&=\prod_{i=1}^n2\sin\frac{\pi(n-i+1/2)}M
 \prod_{i<j}2\sin\frac{\pi(j-i)}M
 2\sin\frac{\pi(2n-i-j+1)}M,\nonumber\\
 P_{D_n}(M)&=\prod_{i<j}2\sin\frac{\pi(j-i)}M
 2\sin\frac{\pi(2n-i-j)}M.
 \label{S:orthogonal-products}
\end{align}
Since \(\iota_{B_n}=\iota_{D_n}=4\), Eq.~\eqref{S:S00} reads in both parity
families
\begin{equation}
 S_{00}^{\cC_N(k)}
 =\frac12M^{-\lfloor N/2\rfloor/2}P_{\mathfrak{so}_N}(M),
 \qquad M=N+k-2.
 \label{S:orthogonal-S00}
\end{equation}
The Weyl denominator identities give
\begin{align}
 P_{B_n}(M)
 &=\left|\det_{0\leq a,b<n}
 \left[2\sin\frac{2\pi(a+1/2)(b+1/2)}M\right]\right|,
 \label{S:Bdet}\\
 P_{D_n}(M)
 &=2^{n-1}\left|\det_{0\leq a,b<n}
 \left[\cos\frac{2\pi ab}M\right]\right|.
 \label{S:Ddet}
\end{align}
The half-integer coordinates in Eq.~\eqref{S:Bdet} are the type-\(B\)
Weyl-vector coordinates, while the integer coordinates in
Eq.~\eqref{S:Ddet} are their type-\(D\) counterparts.  Absolute values remove
signs caused by reordering rows or columns.

There are three nonredundant parity cases: odd--odd, even--even, and
odd--even; the even--odd case is obtained by transposition.  We keep every
normalizing scalar in the corresponding real orthogonal transform.

\begin{center}
\small
\begin{tabular}{c c c}
\toprule
parities of \((N,k)\) & exchanged types & orthogonal transform\\
\midrule
odd--odd & \(B_{\ell_N}\leftrightarrow B_{\ell_k}\) & discrete sine, type IV\\
even--even & \(D_{\ell_N}\leftrightarrow D_{\ell_k}\) & discrete cosine, type I\\
odd--even & \(B_{\ell_N}\leftrightarrow D_{\ell_k}\) & odd-length cosine\\
\bottomrule
\end{tabular}
\end{center}

\paragraph{Odd--odd.}
If \(N=2\ell_N+1\) and \(k=2\ell_k+1\), put
\(L=\ell_N+\ell_k\), so \(M=2L\), and use the
\(L\times L\) discrete sine transform of type IV
\begin{equation}
 U^{oo}_{ab}=\sqrt{\frac4M}
 \sin\frac{2\pi(a+1/2)(b+1/2)}M,
 \qquad0\leq a,b<L.
\end{equation}
The absolute determinant of its leading \(\ell_N\times\ell_N\) minor is
\(M^{-\ell_N/2}P_{B_{\ell_N}}(M)\).  Reversing the indices of the complementary minor
changes only signs and identifies its absolute determinant with
\(M^{-\ell_k/2}P_{B_{\ell_k}}(M)\).

\paragraph{Even--even.}
If \(N=2\ell_N\) and \(k=2\ell_k\), put
\(L=\ell_N+\ell_k-1\), so \(M=2L\).  With
\(c_0=c_L=2^{-1/2}\) and \(c_a=1\) otherwise, use the discrete cosine
transform of type I
\begin{equation}
 U^{ee}_{ab}=\sqrt{\frac4M}\,c_a c_b
 \cos\frac{2\pi ab}M,
 \qquad0\leq a,b\leq L.
\end{equation}
The endpoint weights contribute \(1/2\) to either relevant determinant, and
Eq.~\eqref{S:Ddet} gives
\begin{equation}
 |\det U^{ee}_{[0,\ell_N-1]}|=M^{-\ell_N/2}P_{D_{\ell_N}}(M),
 \qquad
 |\det U^{ee}_{[\ell_N,L]}|=M^{-\ell_k/2}P_{D_{\ell_k}}(M).
\end{equation}
Index reversal identifies the second determinant with the leading type-\(D\)
minor up to signs.

\paragraph{Odd--even.}
Finally suppose \(N=2\ell_N+1\) and \(k=2\ell_k\); the even--odd case follows by
interchanging the two sides.  Now \(M=2(\ell_N+\ell_k)-1\),
\(L=(M-1)/2=\ell_N+\ell_k-1\), and the orthogonal odd-length cosine transform is
\begin{equation}
 U^{oe}_{ab}=\sqrt{\frac4M}\,c_a c_b
 \cos\frac{2\pi ab}M,
 \qquad0\leq a,b\leq L,
 \quad c_0=2^{-1/2},\quad c_a=1\ (a>0).
\end{equation}
Its leading \(\ell_k\times\ell_k\) minor has absolute determinant
\(M^{-\ell_k/2}P_{D_{\ell_k}}(M)\).  In the complementary minor set
\(\xi=M/2-a\) and \(\eta=M/2-b\).  Reversing the indices turns them into the
half-integers \(1/2,3/2,\ldots,\ell_N-1/2\), while
\begin{equation}
 \left|\cos\frac{2\pi ab}M\right|
 =\left|\sin\frac{2\pi\xi\eta}M\right|
\end{equation}
up to independently factorizable row and column signs.  Equation~\eqref{S:Bdet}
therefore identifies the complementary determinant with
 \(M^{-\ell_N/2}P_{B_{\ell_N}}(M)\).

\paragraph{Transform normalization.}
For completeness, the elementary orthogonality sums that fix all transform
normalizations are
\begin{align}
 &\sum_{b=0}^{L-1}
 \sin\frac{2\pi(a+1/2)(b+1/2)}M
 \sin\frac{2\pi(a'+1/2)(b+1/2)}M
 =\frac M4\delta_{aa'}
 &&\text{(odd--odd)},\nonumber\\
 &\sum_{b=0}^{L}c_b^2
 \cos\frac{2\pi ab}M\cos\frac{2\pi a'b}M
 =\frac{M}{4c_a^2}\delta_{aa'}
 &&\text{(even--even and mixed)}.
 \label{S:orthogonality-sums}
\end{align}
In the second line the even--even case has
\(c_0=c_L=2^{-1/2}\), whereas the odd-length mixed case has
\(c_0=2^{-1/2}\) and \(c_a=1\) for \(a>0\).  Product-to-sum and finite
geometric-series summation prove these identities.  They show directly that
no unrecorded determinant scalar is present.

Jacobi's theorem in these three cases proves, for all four parity choices,
\begin{equation}
 M^{-\lfloor N/2\rfloor/2}P_{\mathfrak{so}_N}(M)
 =M^{-\lfloor k/2\rfloor/2}P_{\mathfrak{so}_k}(M).
\end{equation}
No parity-dependent scalar remains because both types \(B,D\) have
\(|P/Q^\vee|=4\).  Multiplying by the common Kac--Peterson factor \(1/2\)
proves the orthogonal identity in Eq.~\eqref{S:duality}.

\subsection{Low orthogonal levels}

The levels \(1\leq k\leq4\) used in Eq.~\eqref{S:lowO} lie outside the stable
orthogonal range \(N,k\geq5\), so we evaluate them directly rather than
interpreting them as dualities with low-rank orthogonal algebras.  Set
\begin{equation}
 s_x=2\sin\frac{\pi x}{M},\qquad
 \mathcal I_M=\prod_{j=1}^{M-1}s_j=M,\qquad
 \mathcal H_M=\prod_{j=1}^{M}s_{j-1/2}=2,
 \qquad s_x=s_{M-x}.
 \label{S:completeOproducts}
\end{equation}
Here \(\mathcal I_M\) and \(\mathcal H_M\) are the complete integer and
half-integer sine products, respectively.  They are the only product
identities needed for the four direct evaluations.
\mbox{Let \(\ell=\lfloor N/2\rfloor\).}  Pairing the factors in
Eq.~\eqref{S:orthogonal-products} with their reflections and substituting
directly into Eq.~\eqref{S:orthogonal-S00} gives
Table~\ref{tab:low-orthogonal}.  The middle column is the raw Weyl product,
not a category-level duality involving a low-rank orthogonal algebra.
\begin{table}[t]
\centering
\caption{Vacuum products for the four low orthogonal levels.}
\label{tab:low-orthogonal}
\renewcommand{\arraystretch}{2}
\small
\begin{tabular}{c c c c c}
\toprule
\(k\) & \(M\) & raw \(P_{\mathfrak{so}_N}(M)\) & \(S_{00}\) & \(\cD\)\\
\midrule
1 & \(N-1\) & \(M^{\ell/2}\) & \(1/2\) & \(2\)\\
2 & \(N\) & \(M^{(\ell-1)/2}\) & \(1/(2\sqrt N)\) & \(2\sqrt N\)\\
3 & \(N+1\) & \(s_{1/2}M^{(\ell-1)/2}\)
& \(\dfrac{\sin[\pi/(2(N+1))]}{\sqrt{N+1}}\)
& \(\dfrac{\sqrt{N+1}}{\sin[\pi/(2(N+1))]}\)\\
4 & \(N+2\) & \(s_1^2M^{(\ell-2)/2}\)
& \(\dfrac{2\sin^2[\pi/(N+2)]}{N+2}\)
& \(\dfrac{N+2}{2\sin^2[\pi/(N+2)]}\)\\
\bottomrule
\end{tabular}
\end{table}
The associated exact sector counts, obtained from
Table~\ref{tab:sector-generating}, are
\begin{equation}
\begin{array}{c|cccc}
 &k=1&k=2&k=3&k=4\\ \hline
B_n&3&n+4&3n+4&(n^2+9n+8)/2\\
D_n&4&n+7&4n+8&(n^2+15n+16)/2.
\end{array}
\label{S:low-orthogonal-counts}
\end{equation}
Thus at level one both rank and \(\cD\) remain bounded; at levels two through
four, \(\log\cD=O(\log n)=O(\log r)\).  This directly verifies every
low-level formula used in Eq.~\eqref{S:lowO} against the standard affine
vacuum product.
% As a regression check, the ancillary program
% \path{anc/check_classical_asymptotics.py} evaluates the same
% Kac--Peterson products for every parity pair \(5\leq N,k\leq30\) and for all
% four low levels with \(5\leq N\leq60\).  This high-precision diagnostic is not
% used in place of the determinant and product proofs above.

\section{Proportional WZW asymptotics}
\label{app:wzw-asymptotics}

Define
\begin{align}
 H(t)&=(1+t)\log(1+t)-t\log t,\nonumber\\
 J(t)&=-\int_0^1(1-x)
 \log\left(2\sin\frac{\pi x}{1+t}\right)\dd x.
 \label{S:HJ}
\end{align}
We now derive, family by family, the two formulas used in the proportional
optimization.  Fix a closed interval \(I=[a,b]\) with
\(0<a\leq b<\infty\).  Throughout this appendix, every \(O(\,\cdot\,)\)
estimate is uniform for \(t\in I\): its implied constant may depend on the
fixed interval \(I\), but not on \(n\) or \(k\).  With this convention we can
use the ordinary \(O\) notation instead of carrying an interval subscript in
every formula.
The result is
\begin{align}
 \log r_A(n,k)&=nH(t_A)+O(\log n),&
 \log\cD_A(n,k)&=n^2J(t_A)+O(n\log n),\nonumber\\
 \log r_C(n,k)&=nH(t_C)+O(\log n),&
 \log\cD_C(n,k)&=2n^2J(t_C)+O(n\log n),\nonumber\\
 \log r_X(n,k)&=nH(t_X)+O(\log n),&
 \log\cD_X(n,k)&=2n^2J(t_X)+O(n\log n),
 \quad X=B,D,
 \label{S:asym}
\end{align}
where \(t_A=t_C=k/n\) and \(t_B=t_D=k/(2n)\).
The proof is divided into four explicit steps.

\paragraph{Step 1: sector counts and the effective variable.}
We begin from the logarithmic form of Stirling's formula
\begin{equation}
 \log m!
 =\left(m+\frac12\right)\log m-m+\frac12\log(2\pi)
 +O(m^{-1}).
 \label{S:log-Stirling}
\end{equation}
For this expansion and its remainder, see Ref.~\cite{NISTHandbook}.
To see exactly how \(H(t)\) arises, fix integers \(c_0,c_1\) and set
\[
 N=(1+t)n+c_0,\qquad M=n+c_1,\qquad P=N-M=tn+c_0-c_1.
\]
Because \(t\in[a,b]\), all three arguments \(N,M,P\) are bounded above and
below by positive multiples of \(n\).  Using
\(\binom NM=N!/(M!P!)\) and applying Eq.~\eqref{S:log-Stirling} three times
gives
\begin{align}
 \log\binom NM
 &=N\log N-M\log M-P\log P
 +\frac12\log\frac{N}{2\pi MP}+O(n^{-1}).
 \label{S:Stirling-subtraction}
\end{align}
The bounded shifts \(c_0,c_1\) contribute at most \(O(\log n)\), while
\begin{align}
 N\log N-M\log M-P\log P
 ={}&n\Bigl[(1+t)\bigl(\log n+\log(1+t)\bigr)
 -\log n\nonumber\\
 &\hspace{3.2cm}
 -t\bigl(\log n+\log t\bigr)\Bigr]+O(\log n) \nonumber\\
  ={}&n\Bigl[(1+t)\log(1+t)-t\log t\Bigr]+O(\log n).
 \label{S:Stirling-leading}
\end{align}
And the half-logarithmic term \(\frac12\log\frac{N}{2\pi MP}\) in Eq.~\eqref{S:Stirling-subtraction} is \(O(\log n)\).
We have therefore proved the uniform shifted-binomial formula
\begin{equation}
 \log\binom{(1+t)n+c_0}{n+c_1}
 =nH(t)+O(\log n).
 \label{S:uniform-Stirling}
\end{equation}

For \(A_{n-1}\), set \(t=k/n\).  The exact count in
Eq.~\eqref{S:rAC} becomes
\begin{equation}
 r_A(n,k)=\binom{n+k-1}{n-1}
 =\binom{(1+t)n-1}{n-1},
\end{equation}
so Eq.~\eqref{S:uniform-Stirling}, with \(c_0=c_1=-1\), gives
\begin{equation}
 \log r_A(n,k)=nH(k/n)+O(\log n).
\end{equation}
For \(C_n\), the same calculation has no bounded shifts:
\begin{equation}
 r_C(n,k)=\binom{n+k}{n}
 =\binom{(1+t)n}{n},
\end{equation}
and therefore
\begin{equation}
 \log r_C(n,k)=nH(k/n)+O(\log n).
 \label{S:rank-AC-derivation}
\end{equation}

For the orthogonal families, put \(\kappa=\lfloor k/2\rfloor\).  Appendix
\ref{app:wzw-counts} derives the exact coefficient sums
in Eqs.~\eqref{S:rank-B-sum} and \eqref{S:rank-D-sum}; their two-sided
consequences are recorded in Eq.~\eqref{S:rBD}.  We now use those count
formulas only to extract their proportional asymptotics.  The bounds show
that the only nonpolynomial part is the binomial coefficient with parameters
 \(n\) and \(\kappa\).

Now set \(t:=k/(2n)\).  Since \(\kappa=\lfloor k/2\rfloor\),
\begin{equation}
 \frac \kappa n=t+O(n^{-1}),
 \label{S:orthogonal-effective-variable}
\end{equation}
which is the origin of the factor two in the orthogonal effective variable.
 For all sufficiently large \(n\), both \(t\) and \(\kappa/n\) lie in a slightly
larger fixed compact subinterval of \((0,\infty)\).  On that interval the
derivative
\[
 H'(t)=\log\frac{1+t}{t}
\]
is bounded.  Hence the mean-value theorem gives
\[
 nH(\kappa/n)=nH(t)+O(1).
\]
Apply Eq.~\eqref{S:uniform-Stirling} to the binomials in Eq.~\eqref{S:rBD},
 using \(\kappa/n\) in place of \(t\).  The mean-value estimate above then yields
\begin{align}
 \log\binom{\kappa+n-2}{n-2}&=nH(t)+O(\log n),\nonumber\\
 \log\binom{\kappa+n-3}{n-3}&=nH(t)+O(\log n).
\end{align}
Finally, \(k=2tn=O(n)\), so
\(\log(k+2)^2,\log(k+3)^3=O(\log n)\).  Taking logarithms in
Eq.~\eqref{S:rBD} and squeezing gives
\begin{equation}
 \log r_B(n,k)=nH(k/(2n))+O(\log n),\qquad
 \log r_D(n,k)=nH(k/(2n))+O(\log n).
 \label{S:rank-BD-derivation}
\end{equation}
This completes the sector-count asymptotics for all four classical families.

\paragraph{Step 2: normalization and exact root-height sums.}
Taking the reciprocal and logarithm of the normalized Kac--Peterson product
in Eq.~\eqref{S:S00} gives the exact identity
\begin{align}
 \cD&=\iota_{\mathfrak g}^{1/2}K^{\ell/2}
 \prod_{\alpha\in\Delta_+}
 \left(2\sin\frac{\pi(\rho,\alpha)}K\right)^{-1},\nonumber\\
 \log\cD&=\frac12\log\iota_{\mathfrak g}
 +\frac\ell2\log K
 +\sum_{\alpha\in\Delta_+}L_K((\rho,\alpha)),
 \qquad
 L_K(h):=-\log\left(2\sin\frac{\pi h}{K}\right).
 \label{S:logD-decomposition}
\end{align}
Thus no power of two, lattice factor, or parity factor is hidden in the
continuum limit.  The relevant data and every positive-root height are
summarized in the table below.  To derive its last two columns, use the
positive roots
\[
 \begin{array}{c|c}
 A_{n-1}&e_i-e_j\\
 B_n&e_i-e_j,\ e_i+e_j,\ e_i\\
 C_n&e_i-e_j,\ e_i+e_j,\ 2e_i\\
 D_n&e_i-e_j,\ e_i+e_j
 \end{array}
 \qquad(1\leq i<j\leq n)
\]
and the Weyl vectors
\begin{align}
 \rho_A&=\sum_{i=1}^n\left(\frac{n+1}{2}-i\right)e_i,&
 \rho_B&=\sum_{i=1}^n\left(n-i+\frac12\right)e_i,\nonumber\\
 \rho_C&=\sum_{i=1}^n(n+1-i)e_i,&
 \rho_D&=\sum_{i=1}^n(n-i)e_i.
 \label{S:classical-Weyl-vectors}
\end{align}
These classical positive-root systems and Weyl vectors are standard; see
Ref.~\cite{BourbakiLie}.  They are displayed here because their individual
height multiplicities, rather than only their total numbers, control the
continuum limit.
For \(A,B,D\), \((e_i,e_j)=\delta_{ij}\) on the root span; for \(C\),
\((e_i,e_j)=\delta_{ij}/2\).  Taking the displayed inner products gives every
height in the table.

\begin{center}
\small
\emph{Exact data used in the proportional root sums
(\(1\leq i<j\leq n\)); the last two columns are \((\rho,\alpha)\).}
\smallskip

\begin{tabular}{c c c c c}
\toprule
family & \(\iota_{\mathfrak g}\) & \(K=k+h^\vee\) & pair-root heights & one-body heights\\
\midrule
\(A_{n-1}\) & \(n\) & \(k+n\)
 & \(j-i\) & none\\
\(B_n\) & \(4\) & \(k+2n-1\)
 & \(j-i,\ 2n+1-i-j\) & \(n-i+\tfrac12\)\\
\(C_n\) & \(2^n\) & \(k+n+1\)
 & \(\tfrac{j-i}{2},\ \tfrac{2n+2-i-j}{2}\) & \(n+1-i\)\\
\(D_n\) & \(4\) & \(k+2n-2\)
 & \(j-i,\ 2n-i-j\) & none\\
\bottomrule
\end{tabular}
\end{center}

For completeness, inserting the table into Eq.~\eqref{S:logD-decomposition}
gives the four finite sums before any limit is taken:
\begin{align}
 \log\cD_A={}&\frac12\log n+\frac{n-1}{2}\log(k+n)
 +\sum_{i<j}L_{k+n}(j-i),
 \label{S:finite-A}\\
 \log\cD_B={}&\log2+\frac n2\log(k+2n-1)
 +\sum_{i<j}\!\left[L_{k+2n-1}(j-i)
 +L_{k+2n-1}(2n+1-i-j)\right]\nonumber\\
 &+\sum_{i=1}^nL_{k+2n-1}(n-i+\tfrac12),
 \label{S:finite-B}\\
 \log\cD_C={}&\frac n2\log2+\frac n2\log(k+n+1)
 +\sum_{i<j}\!\left[L_{k+n+1}(\tfrac{j-i}{2})
 +L_{k+n+1}(\tfrac{2n+2-i-j}{2})\right]\nonumber\\
 &+\sum_{i=1}^nL_{k+n+1}(n+1-i),
 \label{S:finite-C}\\
 \log\cD_D={}&\log2+\frac n2\log(k+2n-2)
 +\sum_{i<j}\!\left[L_{k+2n-2}(j-i)
 +L_{k+2n-2}(2n-i-j)\right].
 \label{S:finite-D}
\end{align}
The lattice and power prefactors in these formulas are all
\(O(n\log n)\).  The one-body sums in types \(B,C\) contain only \(n\)
terms, each of size \(O(\log n)\), and are also \(O(n\log n)\).
Therefore only the pair-root sums can contribute at order \(n^2\).

\paragraph{Step 3: discrete multiplicities and a uniform Riemann estimate.}
The difference roots have height index \(m=j-i\).  Their multiplicity is
exactly
\begin{equation}
 N_-(m)=\#\{(i,j):1\leq i<j\leq n,\ j-i=m\}=n-m,
 \qquad 1\leq m\leq n-1.
 \label{S:difference-multiplicity}
\end{equation}
Hence, for a test function \(f\),
\begin{equation}
 \frac1{n^2}\sum_{i<j}f\left(\frac{j-i}{n}\right)
 =\frac1n\sum_{m=1}^{n-1}\left(1-\frac mn\right)f(m/n)
 \longrightarrow\int_0^1w_-(z)f(z)\dd z,
 \quad w_-(z)=1-z.
 \label{S:difference-density}
\end{equation}

For the sum roots, first count pairs with \(i+j=u\):
\begin{equation}
 N_\Sigma(u)=
 \begin{cases}
  \lfloor(u-1)/2\rfloor,&3\leq u\leq n+1,\\
  \lfloor(u-1)/2\rfloor-u+n+1,&n+1<u\leq2n-1.
 \end{cases}
 \label{S:sum-multiplicity}
\end{equation}
Writing \(\tau=u/n\), this is
\(n\min(\tau,2-\tau)/2+O(1)\).  The root variable used below is
\(z=2-\tau\), up to the bounded shifts displayed in the root-height table
above.  Therefore
\begin{equation}
 \frac1{n^2}\sum_{i<j}f\left(2-\frac{i+j}{n}\right)
 \longrightarrow\int_0^2w_+(z)f(z)\dd z,
 \qquad
 w_+(z)=
 \begin{cases}
  z/2,&0<z<1,\\
  (2-z)/2,&1<z<2.
 \end{cases}
 \label{S:sum-density}
\end{equation}
Equations~\eqref{S:difference-density} and \eqref{S:sum-density} are the
claimed difference- and sum-root multiplicity densities; they are derived
from the finite multiplicities rather than postulated.

The kernels are logarithmically singular at the zero-height face, so we also
record the error estimate used in passing from these sums to their integrals.

\begin{lemma}[Uniform logarithmic Riemann sums]
\label{lem:uniform-log-Riemann}
Let \(Q\) be positive and smooth on \(I\), and let \(U<\inf_{t\in I}Q(t)\).
For
\[
 \Phi_t(u)=-\log\left(2\sin\frac{\pi u}{Q(t)}\right),
 \qquad 0<u\leq U,
\]
either triangular pair sum in Eqs.~\eqref{S:difference-density} and
\eqref{S:sum-density} equals \(n^2\) times its density integral plus
\(O(n\log n)\).  The same conclusion holds after bounded integer or
half-integer shifts of the root heights and bounded shifts of the denominator
\(nQ(t)\).
\end{lemma}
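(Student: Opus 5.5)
The plan is to compare each pair sum with its integral by splitting the height range at a cutoff of order one. Write the sum as $\sum_m N(m)\Phi_t(m+c)$, where $N(m)$ is the exact multiplicity from Eq.~\eqref{S:difference-multiplicity} or Eq.~\eqref{S:sum-multiplicity}. Here $m$ ranges over $O(n)$ sites and $c$ is a bounded integer or half-integer shift. The two facts used throughout are
\[
 N(m)\leq n \quad\text{for every } m,
 \qquad
 N(m)=n\,w(m/n)+O(1),
\]
with $w$ one of the piecewise-linear densities $w_\pm$. The hypothesis $U<\inf_I Q$ keeps every argument $\pi u/(nQ)$ inside a compact subset of $(0,\pi)$ away from $\pi$, uniformly in $t\in I$. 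The only singularity therefore sits at $u\to0$, where $\Phi_t(u)=\log(nQ/(2\pi u))+O(u^2/n^2)$ and so $|\Phi_t(u)|\leq \log n+O(1)$ for $u\geq1/2$.

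\emph{Near region.} Take the sites with $m+c\leq C$ for a fixed $C$. There are $O(1)$ such sites, each with multiplicity at most $n$ and kernel $O(\log n)$, so they contribute $O(n\log n)$. On the integral side, the corresponding piece is $n^2\int_0^{C/n}w\,|{\log(2\sin(\pi z/Q))}|\dd z$ after rescaling $u=nz$. Since $\int_0^\delta|\log z|\dd z=O(\delta\log(1/\delta))$, this piece is also $O(n\log n)$.

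\emph{Far region.} On each cell $[m,m+1]$, replace $N(m)\Phi_t(m+c)$ by $n^2\int_{m/n}^{(m+1)/n}w(z)\,\phi_t(z)\dd z$, where $\phi_t(z)=-\log(2\sin(\pi z/Q))$. The per-cell error has three parts.
\begin{itemize}
 \item The multiplicity error $O(1)\cdot|\Phi|$ sums over $O(n)$ cells to $O(n\log n)$.
 \item The variation of $w$ across a cell, times $n|\Phi|$, gives $O(|\Phi|)$ per cell, since $w$ is Lipschitz apart from one kink at $z=1$ for $w_+$, and the kink cell costs only $O(n\log n)$ by itself. Summed, this is again $O(n\log n)$.
 \item The variation of $\phi_t$ across a cell is $O(1/(m+c))$ per cell because $|\phi_t'(z)|\lesssim 1/z$. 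Multiplied by $n$ and summed over $m\leq O(n)$, this gives $n\sum_m 1/m=O(n\log n)$.
\end{itemize}
The shift $c$ moves the evaluation point by $O(1/n)$ in $z$ and is absorbed into the third part. A bounded shift of the denominator, $nQ\mapsto nQ+O(1)$, changes $\Phi$ by $O(u/n\cdot 1/u)=O(1/n)$ uniformly. Summed against at most $n^2$ pairs, this is $O(n)$.

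All the constants depend only on bounds for $Q$, $Q'$, $1/(\inf_I Q-U)$ and the shifts, so the estimate is uniform on $I$. The main obstacle is the logarithmic singularity at zero height combined with multiplicity of order $n$ there. This is why the near region is isolated, and why the derivative bound $1/z$ produces a harmonic sum, which is exactly the source of the $\log n$ loss. I would verify this step first, carefully checking that $\sum_{m\geq1}n\cdot\frac1m$ is the worst term, and only then assemble the far- and near-region bounds.
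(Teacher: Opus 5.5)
Your proposal is correct and follows essentially the paper's argument. Both use a cell-by-cell (strip) comparison in which the $1/z$ derivative of the logarithmic kernel, weighted by multiplicities of order $n$, produces the harmonic sum $n\sum_{m\lesssim n}1/m=O(n\log n)$; the $O(1)$ multiplicity discrepancies, height shifts, and denominator shifts then contribute at most this order.

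The only difference is organizational. You isolate an explicit near region of $O(1)$ sites, which handles the first cell where the $O(1/m)$ variation bound degenerates, and you use a global bound $|\phi_t'(z)|\lesssim 1/z$. The paper instead splits the kernel as $-\log u+g_t(u)$ and treats the smooth part $g_t$ separately with an $O(n)$ error.
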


\begin{proof}
Uniformly for \(t\in I\), split
\begin{equation}
 \Phi_t(u)
 =-\log u+g_t(u).
\end{equation}
Because the support stays a uniform positive distance from the next sine
zero, \(g_t\) extends to \(u=0\) with a uniformly bounded first derivative.
Ordinary cell comparison therefore gives an \(O(n)\) error for the smooth
part.  For \(-\log u\), group points into the \(O(n)\) strips of
Eqs.~\eqref{S:difference-multiplicity} and \eqref{S:sum-multiplicity}.  The
strip multiplicity is \(nw(u)+O(1)\), with \(w\) piecewise affine.  On the
\(m\)-th strip the variation of \(-\log u\) across a cell is \(O(1/m)\).
Thus the cell-comparison error is bounded by a constant times
\(n\sum_{m\leq O(n)}m^{-1}=O(n\log n)\).  The \(O(1)\) discrepancy in each
strip multiplicity contributes only \(O(n)\), since
\begin{equation}
 \sum_{j=1}^{O(n)}\left|\log\frac{j+\delta}{n}\right|=O(n),
 \qquad
 \sum_{j=1}^{O(n)}\frac1{j+\delta}=O(\log n).
\end{equation}
A bounded shift of a root height changes the kernel on the \(m\)-th strip by
\(O(1/m)\); multiplying by the \(O(n)\) points in that strip and summing again
gives \(O(n\log n)\).  A bounded shift of the denominator changes the scaled
denominator by \(O(n^{-1})\) and contributes \(O(n)\) over \(O(n^2)\)
points.  Integer and half-integer grids obey the same estimates.  This proves
the lemma.
\end{proof}

In our applications, type \(A\) has support \(0\leq u\leq1\) and
\(Q(t)=1+t\), whereas types \(B,C,D\) have support \(0\leq u\leq2\) and
\(Q(t)=2(1+t)\).  Since \(t\geq a>0\), the hypothesis
\(U<\inf_IQ(t)\) holds uniformly; the only sine zero approached by the grids
is the logarithmic zero at \(u=0\).

\paragraph{Step 4: evaluation of the four root sums.}
For \(A_{n-1}\), Eq.~\eqref{S:finite-A} and the exact multiplicity
Eq.~\eqref{S:difference-multiplicity} give
\begin{align}
 \log\cD_A(n,k)
 &=-\sum_{m=1}^{n-1}(n-m)
 \log\left(2\sin\frac{\pi m}{n+k}\right)+O(n\log n)\nonumber\\
 &=n^2\int_0^1(1-z)
 \left[-\log\left(2\sin\frac{\pi z}{1+t}\right)\right]\dd z
 +O(n\log n)\nonumber\\
 &=n^2J(t)+O(n\log n),\qquad t=k/n.
 \label{S:A-continuum}
\end{align}

For the other three families define the common limiting kernel
\begin{equation}
 \Phi_t(z)=-\log\left(2\sin\frac{\pi z}{2(1+t)}\right),
 \qquad 0<z<2.
 \label{S:BCD-kernel}
\end{equation}
For \(C_n\), the two scaled height variables in Eq.~\eqref{S:finite-C} are
\((j-i)/n\) and \((2n+2-i-j)/n\), and
\(2K/n=2(1+t)+2/n\).  Lemma~\ref{lem:uniform-log-Riemann} removes the bounded
numerator and denominator shifts.  For \(B_n\) they are
\((j-i)/n\) and \((2n+1-i-j)/n\), with
\(K/n=2(1+t)-1/n\); for \(D_n\) they are
\((j-i)/n\) and \((2n-i-j)/n\), with
\(K/n=2(1+t)-2/n\).  Thus all three families have the same pair-root limit.

Extending \(w_-\) by zero on \((1,2)\), the discrete calculations above give
\begin{equation}
 w_-(z)+w_+(z)=\frac{2-z}{2},\qquad 0<z<2.
 \label{S:combined-density}
\end{equation}
Consequently, with the substitution \(z=2x\),
\begin{align}
 \int_0^2\bigl[w_-(z)+w_+(z)\bigr]\Phi_t(z)\dd z
 &=\frac12\int_0^2(2-z)\Phi_t(z)\dd z\nonumber\\
 &=2\int_0^1(1-x)
 \left[-\log\left(2\sin\frac{\pi x}{1+t}\right)\right]\dd x
 =2J(t).
 \label{S:two-channel-integral}
\end{align}
Adding the already bounded prefactors and one-body roots proves
\begin{equation}
 \log\cD_X(n,k)=2n^2J(t)+O(n\log n),\qquad
 X=B,C,D,
 \label{S:BCD-continuum}
\end{equation}
with \(t=k/n\) for \(C\) and \(t=k/(2n)\) for \(B,D\).

Equations~\eqref{S:A-continuum} and \eqref{S:BCD-continuum} also make the
family factor transparent.  Type \(A\) has only the difference-root channel
and gives \(\mu_A=1\).  Types \(B,C,D\) have both difference and sum channels;
their combined density in Eq.~\eqref{S:combined-density} gives
\(\mu_B=\mu_C=\mu_D=2\).  Together with the rank calculation in Step 1, this
completes the step-by-step proof of Eq.~\eqref{S:asym}, including uniformity
for \(t\) in every compact subset of \((0,\infty)\).

\end{document}